\def\dj{d\kern-.30em\raise1.25ex\vbox{\hrule width .3em height .03em}}
\def\Dj{D\kern-.70em\raise0.75ex\vbox{\hrule width .3em height .03em}
\kern.03em}
\renewcommand{\subsection}{\@startsection{subsection}{2}{\z@}%
{\baselineskip}{0.5\baselineskip}{\bfseries}}
\def\l@section{\def\@tocpagenum##1{\hss{\bfseries ##1}}
\@tocline{1}{8pt}{0pc}{}{\bfseries}}
\def\l@subsection{\def\@tocpagenum##1{}
\@tocline{2}{2pt}{2pc}{2pc}{}}
\def\qqqq#1#2#3#4{\prescript{#1}{#3}{\lozenge}^{#2}_{#4}}
\def\braket#1#2{\langle #1\vert #2\rangle}
\def\bra#1{\langle #1\vert}
\def\ket#1{\vert #1\rangle}
\def\braketop#1#2#3{\langle #1 \vert #2\vert #3\rangle}
\def\diada#1#2{\vert #1\rangle\langle #2\vert}
\def\lqbra{\raisebox{1.2pt}{$\scriptscriptstyle\blacktriangleleft$} \kern-2.8pt\langle }
\def\rqbra{\rangle\kern-3pt\raisebox{1.2pt}{$\scriptscriptstyle\blacktriangleright$}}
\def\qform#1#2{\lqbra #1\vert #2\rqbra}
\def\polq{\Bbb{C}[z,\bar{z},Q]}
\def\azz{\cal{A}}
\def\pp{\phi}
\def\zz{z^*}
\def\Cz{\Bbb{C}[z]}
\def\Czf#1{\Bbb{C}_{#1}[z]}
\def\Opiz{\mathrm{O}(\Bbb{C})}
\def\Opz{\mathrm{o}(\Bbb{C})}
\def\HolC{\mathrm{H}(\Bbb{C})}
\def\W{\cal{W}}
\def\zzinfty{\reflectbox{$z$}}
\def\pointf#1#2{[#1{	\wr}#2]}
\def\Tz{\mathbf{T}}
\def\de{\mathrm{d}}
\def\pinfty{\cal{Q}}
\def\sinfty{S_\infty}
\def\ninfty{\cal{N}}
\def\qinfty{\cal{R}}
\def\cal{\mathcal}
\def\Bbb{\mathbb}
\newcommand{\id}{\mathrm{id}}
\newcommand{\im}{\mathrm{im}}
\newtheorem{lemma}{Lemma}[section]
\newtheorem{prop}{Proposition}[section]
\theoremstyle{definition}
\newtheorem{definition}{Definition}[section]
\newtheorem{remark}[definition]{Remark}
\numberwithin{equation}{section}
\begin{document}

\renewcommand{\thepage}{\ifnum\value{page}=1\else\arabic{page}\fi}

\title[Coherent States and Toeplitz Quantization]
{Hilbert Spaces of Entire Functions and \\ 	
Toeplitz Quantization of Euclidean Planes}

\author{Micho \Dj ur\dj evich}

\address{Universidad Nacional Aut\'onoma de M\'exico, 
Instituto de Matem\'aticas, Area de la Investigacion Cient\'{\i}fica, 
Circuito Exterior, Ciudad Universitaria, CP 04510,
Mexico City, Mexico}
\email{micho@matem.unam.mx}

\author{Stephen Bruce Sontz}

\address{Centro de Investigaci\'on en Matem\'aticas, A.C., 
(CIMAT), Jalisco S/N, Mineral de Valenciana, CP 36023, 
Guanajuato, Mexico}
\email{sontz@cimat.mx}

\vspace{-5pt}
\begin{abstract} 
The theory of Toeplitz quantization presented in our previous paper is extended and further developed 
to include diverse and interesting non-commutative 
realizations of the classical Euclidean plane. 
This is done using Hilbert spaces of entire 
functions, where polynomials in one complex variable form a dense subspace. 
The complex coordinate naturally acts as an unbounded multiplication operator generating, together with its adjoint, 
a highly non-commutative *-algebra of operators. 
The Toeplitz operators are then geometrically constructed as special elements from this algebra;   they are associated to the symbols from another quadratic non-commutative algebra, which is interpretable as polynomials over a plane to be quantized. 
Such a conceptual framework promotes interesting non-trivial conditions on the initial scalar product. 
These are analyzed 
in detail. 
Various illustrative examples are computed.     
\end{abstract}
\maketitle

\vspace{-10pt}
\tableofcontents

\section{Introduction}

\parskip=2pt plus 2pt minus 2pt

This paper is a continuation of our previous work \cite{coherent}, where we have provided a detailed analysis of geometric, algebraic and analytic aspects for a coherent states quantization of the Manin $q$-plane via Toeplitz operators.  
A general method for such Toeplitz quantizations has been introduced and extensively studied 
by the second author in \cite{sbs1,sbs2}. 

Here we would like to further expand on these principal dimensions by exploring their deeper interrelations and   
establishing a link with the Hilbert spaces of entire functions \cite{barg, debranges}. 
On one hand this allows the use of a rich array of techniques from complex analysis, and on the other  it provides an elegant framework for defining quantum versions of classical fundamental
geometrical objects---such as points, coordinates, observables and transformations. 

Our context fits well into the principal quantum  non-commutative geometry foundations as developed by Connes   
\cite{connes} and Woronowicz \cite{slw-cat} with a variety of interesting non-commutative algebras, representable 
by operators that appear naturally in a Hilbert space. 
And moreover, 
our approach is particularly in resonance with the pioneering formulation of quantum geometry by Prugove\v{c}ki \cite{pru}, where the concept of point is quantized by morphing it into a special wave function. 
These `wave functions of points' can then be interpreted as coherent states, and in our framework they 
are an integral part of the Hilbert space geometry, which directly emerges from the reproducing kernel. 

The reproducing kernel is a primary and defining structural part of Hilbert spaces of holomorphic functions 
defined over an open domain $\Omega \subset \Bbb{C} $. 
This is a map $K\colon \Omega\times\Omega\rightarrow \Bbb{C}$, antiholomorphic in the first coordinate and holomorphic in the second and exhibiting an appropriate matrix complete positivity condition. Such a structure gives us the possibility 
of speaking of  `quantized' points, these being 
the vectors of the Hilbert space representing the points of $\Omega$. 
This association is obtained from the kernel map by $\Omega\ni w\longmapsto K(\bar{w},*)\in\cal{H}$. 
By taking scalar products
these {\em point wave functions}  evaluate in the points $w$ the functions of $\cal{H}$. 
And by 
normalizing them we obtain the corresponding coherent states. We shall here almost exclusively deal with 
entire functions, which 
correspond to taking the domain $\Omega=\Bbb{C}$. 

Let us outline the contents of the paper. 
In the next section in a series of steps
we establish  
our principal geometric and algebraic setup. 
We begin with a detailed analysis of scalar products in the space $\Cz$ of complex polynomials in 
a complex variable $z$. 
Every such 
scalar product $\braket{}{}$ leads to a Hilbert space $\cal{H}$, namely the completion of $\Cz$ with respect to 
$\braket{}{}$.  
So $\Cz$ is dense in $\cal{H}$. 
We would like to capture in terms of the operators on $\cal{H}$ some basic idea of a plane, 
be it classical or quantum. This naturally leads to interpreting the complex coordinate $z$ as an appropriate 
multiplication operator $Z$ in $\cal{H}$, so that on polynomials $ \pp $ it acts as $Z\colon \pp(z)\mapsto z\pp(z)$. 
We would also like to be able to consider the operator representation of the conjugate coordinate $\bar{z}$ 
as an adjoint operator $Z^*$ so that an interesting operator calculus involving $Z$ and $Z^*$ 
can be defined, 
reflecting the geometrical idea of an underlying plane-like space. 
An important technical problem here is to find the most effective context for defining $Z$ and $Z^*$, 
including their domains. 
From consideration of this problem  
a number of non-trivial conditions for the initial scalar product on $\Cz$ 
naturally emerge. We shall call them {\em Harmony Properties}, and we shall number 
them from $0$ to $3$. 

Harmony Zero treats the very basic structure at the level of polynomials, while Harmony One provides a particularly elegant and geometrically natural answer by postulating  
that all the elements of the completion $\cal{H}$ of $\Cz$ are interpretable as entire functions. The definition of the operator $Z$ then can be extended from the polynomials to its natural domain, in the same way as the complex 
coordinate multiplication operator $Z\colon\psi(z)\mapsto z\psi(z)$ with dense domain  $\mathrm{D}(Z)$ consisting of all $\psi(z)\in\cal{H}$ for which $z\psi(z)\in\cal{H}$. We shall establish a variety of important 
properties for the operator $Z$ and its adjoint $Z^*$. 
In particular $Z$ is closed and the point wave functions associated to classical points $w\in\Bbb{C}$ are the unique (modulo non-zero scalar multiples) 
eigenvectors for $Z^*$ with the eigenvalues $\bar{w}$. 
This implies that the spectra of $Z$ and $Z^*$ coincide with the whole $\Bbb{C}$; so they are unbounded operators. 

However, this harmony property still does not ensure the  
existence of an algebraically effective operational setting involving the operators $Z$ and $Z^*$ so that we can 
meaningfully compose them in arbitrary combinations.

Our next harmony property addresses this, by postulating the continuity of $Z^*$ in the normal topology 
of $\cal{H}$ 
(namely, the topology of uniform convergence on
compact sets)
and thereby putting $Z$ and $Z^*$ in a kind of balanced relationship. As we shall explain, this
provides a common acting space $\W$ for them, consisting of special 
vectors emerging from the very geometry of $\cal{H}$. 
The point wave functions are always in $\W$, and for polynomials to be included, we shall need the topological 
half of Harmony Zero. 

In such a framework the operators $Z$ and $Z^*$ generate a fundamental *-algebra $\azz$, in terms of which the underlying quantum 
space is brought to life. We shall prove that $\azz$ is quite non-commutative, as it 
is always with a trivial centralizer in a pretty large algebra $\Opiz$ of normally continuous linear transformations 
of entire functions. We shall also prove that, under certain 
general additional assumptions on interchangeability of order of the product of $Z^*$ and $Z$, the underlying quantum space is always `pointless'. 
In other words there will be no characters on $\azz$. 

We shall then proceed with interpreting the constructed framework as a 
natural destination for a Toeplitz quantization of appropriate quadratic algebras $\polq$. These algebras are generated by coordinates $z$ and $\bar{z}$, and equipped with a quadratic flip-over type relation $Q$ between these coordinates. This relation can be trivial, in which 
case we are dealing with the classical complex plane $\Bbb{C}$, twisted commutative as in the case of the  Heisenberg-Weyl algebra or the Manin $q$-plane, 
or a more elaborate quadratic relation telling us how to exchange 
$z$ and $\bar{z}$. 

The main construction here is that of an extended space $\qinfty$ equipped with a non-degenerate, 
but not necessarily positive definite form $\qform{}{}$, such that the common action space $\cal{W}$ for the operators of $\azz$ together 
with its scalar product $\braket{}{}$ can be singled out via a symmetric idempotent $\Pi$ acting on $\qinfty$, and such that the algebra $\polq$ acts symmetrically on $\qinfty$ with $\cal{W}$ being the cyclic $Z$-invariant subspace. 
The Toeplitz operator with the symbol $f\in\polq$ is then defined as $\Pi f\Pi$ interpreted as an element of $\azz$. 
 
It is worth mentioning that the principal algebraic steps in a sense mirror those of the Stinespring
construction \cite{Sti} for completely positive maps between C*-algebras. As we shall see in detail, however, we are in the context of unbounded operators and a key positivity condition is not always fulfilled. On the other hand 
when $\qform{}{}$ 
is positive -- and this will be our final harmony condition -- we can close $\qinfty$ in a Hilbert space $\cal{J}$ and the relationship between $\polq$ and $\azz$ established by the Toeplitz quantization turns out to have 
a particular richness. 

We then proceed with some more detailed calculations of a number of concrete examples by analyzing the coherent states, their resolution of the identity and the Toeplitz quantization interpretation. 
We next analyze four  principal 
types for the algebras $\polq$, and we also  discuss their possible realizations in  Hilbert spaces of analytic 
(but not necessarily entire) functions. 
As we shall see, the Heisenberg-Weyl algebra and its $q$-variations are 
the only types allowing a realization with entire functions. 
On the other hand the Manin $q$-plane
is naturally realizable in a space of Laurent series, 
whose common domain 
of definition 
is $\Bbb{C}-\{0\}$. Some of these 
spaces have the unit disk or the half plane as their common domain of definition for the holomorphic functions, and 
these provide a natural setting for quantizing the classical 
hyperbolic plane \cite{MP}.  
In all these calculations  basic identities involving hypergeometric series 
and $q$-special functions (such as the triple product identity, the $q$-binomial theorem, the second Euler identity, $q$-versions of exponential, logarithm and Gamma functions) naturally appear. The classical treaties 
\cite{SF} and \cite{BHS} provide an excellent in-depth exposition of these identities. 

In Section~3 we focus on some general algebraic aspects of Toeplitz quantization. We explain how the construction of the extended algebra and module with the projector and the quadratic form 
can be performed over 
a large class of algebras $\cal{C}$ equipped with a flip-over operator $\sigma\colon\cal{C}\otimes\bar{\cal{C}}\rightarrow\bar{\cal{C}}\otimes\cal{C}$, where $\bar{\cal{C}}$ is the conjugate algebra of $\cal{C}$. 
This includes as a special case our main protagonist---the polynomials $\Cz$---and also multi-dimensional versions with several complex variables. 

Finally, in Section~\ref{conclusions} some concluding remarks are made. The paper ends with two extensions exhibiting perhaps some interest on their own. 

In the first appendix, we 
provide a simple geometrical interpretation for the positivity of the canonical quadratic form 
and of the Stieltjes moment condition. 
And in the second appendix 
some formulas and illustrative examples of Hilbert spaces of holomorphic functions are collected, including the 
classical structures such as Segal-Bargmann  spaces and  their $q$-variations, de Branges spaces including the Paley-Wiener space, and the Bergman spaces of square integrable holomorphic functions over bounded domains. For all these spaces a similar quantization scheme for their underlying classical domain can be established.  

We use standard notation such as $ \Bbb{N} $ 
for the set of non-negative integers and 
$ \Bbb{C} $ for the complex plane. 

\section{Quantization Via Hilbert Spaces of Entire Functions}\label{setting-section}

\subsection*{Polynomials Generating a Hilbert Space}

The central object for our considerations will be the polynomial algebra $\Cz $ in one complex variable $z$. It is a complex infinite dimensional vector space, in which the monomials $\{z^n \,|\,n\in\Bbb{N}\}$ form a natural linear basis. 
For every $n\in\Bbb{N}$ the polynomials 
of degree $\leq n$ form an $(n+1)$-dimensional 
subspace, denoted by $\Czf{n}$.  

The polynomials $\Cz$ are included as a subalgebra in  
the larger algebra of all entire functions,
denoted by $\HolC$. 
In the normal topology we have 
\begin{equation}
\overline{\Cz}=\HolC, 
\end{equation} 
where $ \overline{A} $ denotes the 
topological closure of $ A $. 

Non-constant polynomials can also be understood as entire functions $\pp\colon
\Bbb{C}\rightarrow\Bbb{C}$ preserving the complex infinity point $\infty$. 
Under composition polynomials form a non-commutative unital semi-group. 
An algebraic counterpart to this geometrical view is the interpretation of polynomials as 
unital algebra 
homomorphisms $\pp\colon\Cz\rightarrow\Cz$. 
This linear action of $\pp$ on $\Cz$,
which is called {\em substitution}, is 
specified on basis elements 
by $\Cz\ni z^n\longmapsto \pp(z)^n\in\Cz$. 

Of particular interest are 
those polynomials whose corresponding homomorphism 
is invertible, which we call the 
{\it invertible polynomials}. 
They are precisely the linear transformations
$z\longmapsto az+b$
where $a,b\in\Bbb{C}$ with $a\neq 0$. They provide all the orientation preserving symmetries of the Euclidean plane $\Bbb{C}$. If $a=1$ we have translations given by $b$ 
(with no fixed point if $ b \ne 0 $)
and {\em amplitwists} around the unique fixed point 
$b/(1-a)$ if $a\neq 1$ with rotation factor given by the phase of $a$ and similarity factor given by $\vert a\vert$. 
(We continue assuming $ a\neq 0 $.)
It is well known that 
all the automorphisms of the algebra $\Cz $ are of this form, and this is the same as all the 
holomorphic automorphisms of the complex plane $\Bbb{C}$. 

There is a canonical anti-linear involutive algebra
automorphism $J$ of $\Cz $ defined by $J(z)=z$ 
and which just conjugates the 
coefficients of the polynomials. 

The multiplication in $\Cz$ can be interpreted as the left regular representation, where $\Cz$ acts by linear 
operators in $\Cz$. The 
constant polynomial $1\in\Cz$ is both cyclic and separating for this representation. Viewed in terms of the 
left regular representation, $\Cz$ is a {\it maximal commutative subalgebra} of linear operators in $\Cz$: any linear operator 
$l:\Cz \to \Cz$ commuting with 
left multiplication by $z$ (and hence by induction 
and linearity with left 
multiplication by any polynomial)  
is left multiplication by $l(1)\in\Cz$.  

Among other important operators acting in $\Cz$ is the complex differentiation $\partial/\partial z$, and hence all 
differential operators whose  coefficients
are polynomial in $ z $. 
All these operators---multiplication, substitution, 
differentiation---are continuous 
(which will always mean in this context 
with respect to the normal topology
on $\Cz$ induced by its inclusion 
in $\HolC$),   
and they naturally extend to the whole 
$\HolC$. 
We shall denote by $\Opz$ the algebra of all 
normal 
continuous linear transformations of $\Cz$ to itself: 
$$
\Opz := \{ \tau : \Cz \to \Cz \,\big|\, \tau 
\mathrm{~is~linear~and~normal~continuous} \}. 
$$
Every $\tau
\in\Opz$ is uniquely extendible to a normal 
 continuous linear transformation $\tau\colon\HolC\rightarrow\HolC$. We can say 
equivalently that $\Opz$ consists of continuous linear transformations of $\HolC$, which preserve $\Cz$.  
So $\Opz$ is a subalgebra of the algebra of all continuous linear transformations of $\HolC$, which will be denoted 
by $\Opiz$: 
$$
\Opiz := \{  \sigma : \HolC \to \HolC \,\big|\, \sigma 
\mathrm{~is~linear~and~normal~continuous}  \}. 
$$
The algebra $\HolC$ acts on itself 
via its left regular 
representation in terms of which $\HolC$ is a maximal commutative subalgebra of $\Opiz$.  

The dual space $\HolC^*$ consists by definition of all normally continuous complex linear functionals  
$ f : \HolC \rightarrow \Bbb{C}$. 
This is the same as saying normally continuous functionals on $\Cz$, 
since each of the latter uniquely 
extends by normal continuity to $\HolC$. 
The elements $f\in\HolC^*$ are in one-to-one correspondence with the complex sequences 
$f_n=f(z^n)$ satisfying the following {\it geometric boundedness property}: there exists a constant $\Lambda>1$ such that for all $n\in\Bbb{N}$ we have 
\begin{equation}\label{geometric-boundedness}
|f_n|\leq\Lambda^n. 
\end{equation}

In addition to this very basic structure, 
we shall assume that $\Cz$ is equipped 
with a positive definite scalar product $\braket{}{}$, 
which is anti-linear in its first entry 
and linear in the second. 
It is completely specified by the infinite matrix 
\begin{equation}
\sinfty=(s_{nm})\quad \mathrm{where} \quad s_{nm}=\braket{z^n}{z^m}
\quad \mathrm{for~} n,m \in \mathbb{N}. 
\end{equation}
This matrix is required to be
be hermitian and moreover to be strictly positive 
in the sense that the partial matrices
\begin{equation}
S_n=\begin{pmatrix}s_{00} & \cdots & s_{0n}\\
\vdots & \ddots & \vdots\\
s_{n0} & \cdots & s_{nn} 
\end{pmatrix}
\end{equation}
determining the scalar product in 
the subspaces $\Czf{n}$ are strictly positive 
matrices for every $n\in\Bbb{N}$. 
For a hermitian matrix 
this will be the case if and only if all the numbers 
\begin{equation}
d_n=\det(S_n)
\end{equation}
are strictly positive real numbers. 
In particular, $S_n$ is invertible and $S_n^{-1}$ is  strictly positive, too.  

\begin{remark}
All our constructions will be independent of rescalings of the scalar product. So we can always 
normalize $\braket{}{}$ by fixing
a value for $s_{00}>0$, for example by postulating $s_{00}=1$.  
\end{remark}

The following is a straightforward result. 

\begin{prop} The involution $J$ is an isometry if 
and only if all the coefficients $s_{nm}$ are real. \qed
\end{prop}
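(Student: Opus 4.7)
The plan is to reduce the isometry condition on the anti-linear involution $J$ to a statement about its values on the monomial basis, where $J$ acts trivially.

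First, I would unpack what isometry means for an anti-linear map. The natural definition is norm preservation, $\|J\phi\| = \|\phi\|$ for all $\phi \in \Cz$. Because $J$ is anti-linear, the standard complex polarization argument shows this is equivalent to the sesquilinear identity
\begin{equation*}
\braket{J\phi}{J\psi} = \overline{\braket{\phi}{\psi}} \qquad \forall \phi,\psi \in \Cz.
\end{equation*}
Indeed, writing $\braket{\cdot}{\cdot}$ in terms of the four norms $\|\phi + i^k\psi\|^2$, $k=0,1,2,3$, and using $J(\phi + i^k\psi) = J\phi + \overline{i^k}J\psi$ together with $\|J(\cdot)\| = \|\cdot\|$, the real and imaginary parts recombine into the complex conjugate of $\braket{\phi}{\psi}$.

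For the ``only if'' direction, I would apply the displayed identity to the monomials $\phi = z^n$, $\psi = z^m$. Since $J(z^n) = z^n$ and $J(z^m) = z^m$ by the definition of $J$, this immediately gives $s_{nm} = \overline{s_{nm}}$, so every $s_{nm}$ is real.

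For the ``if'' direction, assume all $s_{nm}$ are real. Write $\phi = \sum_n a_n z^n$ and $\psi = \sum_m b_m z^m$, so that $J\phi = \sum_n \bar{a}_n z^n$ and $J\psi = \sum_m \bar{b}_m z^m$. Then a direct computation gives
\begin{equation*}
\braket{J\phi}{J\psi} = \sum_{n,m} a_n \bar{b}_m s_{nm}, \qquad
\overline{\braket{\phi}{\psi}} = \sum_{n,m} a_n \bar{b}_m \overline{s_{nm}},
\end{equation*}
which coincide precisely because $s_{nm} \in \Bbb{R}$. Specializing to $\psi = \phi$ yields $\|J\phi\|^2 = \|\phi\|^2$.

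There is no real obstacle here; everything is an elementary verification. The only subtlety worth flagging is conceptual rather than technical: one must be careful that for an anti-linear map, polarization produces $\overline{\braket{\phi}{\psi}}$ on the right rather than $\braket{\phi}{\psi}$, so that the reality condition on $s_{nm}$ (rather than symmetry, which already follows from hermiticity) is what appears.
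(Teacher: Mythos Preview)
Your proof is correct. The paper itself omits the argument entirely, marking the proposition with a bare \qed\ and calling it ``a straightforward result''; your write-up is exactly the elementary verification the authors had in mind, carried out cleanly on the monomial basis with the anti-linear polarization identity made explicit.
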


By using the Gram-Schmidt procedure
to orthonormalize the naturally ordered sequence of 
monomials $z^n$,    
we obtain an orthonormal sequence of polynomials $\pp_n(z)$ such that 
the degree of $\pp_n(z)$ is $n$ with 
its highest order coefficient being positive. 
Also $\pp_{n+1}(z)\bot \Czf{n}$. 
The space $\Czf{n}$ is spanned by $\pp_0(z), \dots, \pp_n(z)$, and 
\begin{equation}
\label{B-basis-defined}
B:= \Big\{\pp_n(z)  ~\Big|~ n \ge 0  \Big\} 
\end{equation}
is an orthonormal vector space basis 
in the whole space $\Cz$. 

Explicitly, these canonical orthonormal polynomials are given by 
\begin{equation}
\pp_n(z)=\frac{1}{\sqrt{d_nd_{n-1}}}\left | \begin{matrix} s_{00}& s_{01} & \cdots & s_{0n} \\
\vdots & \vdots & \ddots & \vdots \\
\,s_{n-1 0} & s_{n-1 1} &\cdots & s_{n-1 n}\,\\
1 & z& \cdots & z^n
\end{matrix}\right |
\end{equation}
with the extra definition $d_{-1}:=1$. 

\begin{remark}
When appropriate we shall also write simply 
$\ket{n}=\pp_n(z)$. 
\end{remark}

The following polynomial identity holds: 
\begin{equation}\label{n-kernel}
\sum_{k=0}^n\overline{\pp_n(w)}\pp_n(z)=\sum_{i,j=0}^n z^i[S_n^{-1}]_{ij}\bar{w}^j.
\end{equation}

The correspondence between the scalar products $\braket{}{}$ on $\Cz$ 
and systems $\pp_n(z)$ is in fact 
one-to-one. If a system $\pp_n(z)$ of polynomials satisfying $\deg \pp_n(z)=n$ and with positive highest-order coefficients is given, then there exists a unique scalar product $\braket{}{}$ on $\Cz$ orthonormalizing the monomials $z^n$ into $\pp_n(z)$. 

Let us check how the scalar product matrix $\sinfty$
transforms under holomorphic symmetries of $\Bbb{C}$.
 Again these are precisely the linear maps 
$z\longmapsto az+b$ with $a,b\in\Bbb{C}$ and
$a\neq 0$. 
We shall 
consider separately rotations, scalings and translations. 

First, under the rotations $z\longmapsto uz$ around $0$ by a unitary complex number 
$u$ the coefficients $s_{nm}$ transform as 
\begin{equation}\label{u-transform}
s_{nm}\rightsquigarrow u^{m-n}s_{nm}. 
\end{equation}
In particular the scalar product will be invariant under all of these transformations 
if and only if $s_{nm}=0$ for all $n\neq m$. 
Second, 
the scaling $z\longmapsto r z$ with $r>0$ gives 
\begin{equation}\label{r-transform}
s_{nm}\rightsquigarrow s_{nm}r^{n+m}. 
\end{equation}
Finally, the transformation rule for translations $z\longmapsto z+b$ is 
\begin{equation}\label{b-transform}
s_{nm}\rightsquigarrow \sum_{k=0}^n\sum_{l=0}^m
\binom{n}{k}\binom{m}{l}\bar{b}^{n-k}b^{m-l}s_{kl}.
\end{equation}

The transformations \eqref{u-transform}, 
\eqref{r-transform} and \eqref{b-transform} are 
easy to verify. 
From this it is also easy to see that every transformation which is not a rotation around some point of $\Bbb{C}$ 
affects non-trivially the scalar product. 

\begin{remark}
If the scalar product is 
invariant under the rotations \eqref{u-transform} 
($\Leftrightarrow$ 
the matrix $\sinfty$ is 
diagonal), we simplify the notation by 
defining a sequence 
\begin{equation}
\label{simple-sequence}
s_n:=s_{nn} \quad \mathrm{for~} n \in \mathbb{N}.
\end{equation}
The canonical 
orthonormal polynomials are then simply 
\begin{equation}\label{pp-n-s}
\pp_n(z)=\frac{z^n}{\sqrt{\smash[b]{s_n}}} \quad \mathrm{for~} n \in \mathbb{N}. 
\end{equation}
In our previous paper \cite{coherent}, 
we only dealt with such scalar products. 
\end{remark}

Let $\cal{H}$ be the Hilbert space obtained 
by completing the incomplete 
pre-Hilbert space 
$\Cz$ relative to $\braket{}{}$. 
Then the subspace $\Cz$
is dense in $\cal{H}$, and 
the set $B$ in \eqref{B-basis-defined} 
is an orthonormal basis for $\cal{H}$. 
Of course,  $\cal{H}$ is not unique, but 
it is unique up to a unique isometric 
isomorphism which is the identity on $\Cz$. 
One possible explicit construction of $\cal{H}$
is as the set of all formal infinite series 
$ \sum_{n \in \Bbb{N} } a_{n} \pp_n(z) $ with 
coefficients $ a_{n} \in \Bbb{C} $ satisfying
$ \sum_{n \in \Bbb{N} } |a_{n}|^{2} < \infty $. 
In this setting the dense subspace $ \Cz $ 
is identified as the set of all formal sums for which 
only finitely many coefficients are non-zero. 
This clearly shows that $ \Cz $ is not equal 
to $\cal{H}$. 
Other realizations of $\cal{H}$ will be considered 
later. 

There exist four important, mutually subtly related, conditions for the scalar product $\braket{}{}$ and 
consequently for 
the resulting Hilbert space $\cal{H}$. 
These conditions  
put this simple framework into special harmony with operator theory and complex analysis, thereby 
establishing an elegant context 
for constructing quantum models of the complex 
Euclidean plane $ \Bbb{C} $.  
We shall call them 
Harmony Properties and assign them numbers $0$, $1$, $2$ and $3$. 

Harmony Zero is a dual thing, namely 
a symbiosis of both an 
algebraic and a topological condition. 
It provides the simplest, most 
straightforward *-algebraic structure emerging 
from the space of polynomials and their scalar product 
$\braket{}{}$, and at the same 
time it maintains an elementary topological resonance with complex analysis. 

But first we shall say that an operator 
$ S : \Cz \to \Cz $ is {\it formally adjointable} 
with respect to $ \braket{}{} $ if there 
exists (a necessarily unique) operator 
$ T : \Cz \to \Cz $ satisfying 
$ \braket{\phi}{S \psi} = \braket{T\phi}{\psi} $
for all $ \phi, \psi \in \Cz $, in which 
case we say that $ T $ is the {\it formal adjoint} of $ S $. 
Also, for $ \pp, \psi \in \Cz $
we recall the Dirac notation $\bra{\pp}$ for 
$ \psi\rightarrow \braket{\pp}{\psi}$, a linear 
functional on $ \Cz $, which is continuous in the 
$ \braket{}{} $ (or norm) topology on $ \Cz $. 

\begin{definition} We shall say that 
{\em Harmony Zero} (or simply H0) holds if: 
\begin{itemize}
\item The multiplication operator by $ z $ is formally adjointable in $\Cz$ with respect to the 
inner product $\braket{}{}$. 

\item For every $ \pp \in \Cz $
the linear functional $\bra{\pp}$ is continuous in the normal topology of $\Cz$. 
\end{itemize}
We shall refer to the first condition as Algebraic H0 and to the second condition as Topological H0. 
The reader is advised to carefully 
note the distinction between the norm topology 
and the normal topology. 
In the context of Algebraic H0 we let 
$ z $ denote the operator of 
multiplication 
$ z : \Cz \to \Cz $
and denote its formal 
adjoint by 
$ \zz : \Cz \to \Cz$.
The operator $\zz$ should not be confused 
with the conjugate complex variable $ \bar{z} $. 
\end{definition}

Here are some useful reformulations of these properties.
As for Analytical H0 
let us consider the infinite matrix $z_{nm}=\braketop{n}{z}{m}:= 
\braket{\phi_{n} (z)}{z \phi_{m} (z)}$.  
(More Dirac notation.)
This represents the multiplication operator 
by $z$ in the basis $B$. 
By the construction of $B$, we have that 
if $n-m\geq 2$, then $z_{nm}=0$. 
So each column of this matrix has only finitely many 
non-zero entries. 
Moreover, $z_{nm}> 0$ if $n-m=1$. 
The formal adjointability of $z$ is equivalent to the statement that every row of this matrix has only finitely many non-zero entries. 
If so, then 
the formal adjoint operator  
$\zz : \Cz \to \Cz$ 
to the multiplication operator 
by $z$ in $\Cz$ is represented in $B$ 
by the corresponding adjoint matrix. 

As for Topological H0 let us observe first that the normal 
continuity of all the dual vectors $\bra{\phi}$ 
is equivalent to the continuity of
only all the basis vectors, that is of 
$\bra{n} = \bra{\pp_n(z)}$ 
for all $ n \in \Bbb{N} $. 
In accordance with \eqref{geometric-boundedness} this  property can be expressed as 
saying that for every $ m \in \Bbb{N}  $ 
there exists a real number $ \Lambda_m $
such that we have 
the system of inequalities  
\begin{equation}\label{H0-topological}
|s_{mn}|\leq (\Lambda_m)^n 
\quad \mathrm{for~all~} n \in \Bbb{N} .    
\end{equation}
In particular this is the case 
if the rows ($\Leftrightarrow$ the columns) of $S_\infty$ possess only finitely many non-zero entries. 

\begin{remark}
The two components of H0, the algebraic and the 
topological, do not entangle strongly within this basic polynomial context. 
Their mutual correlations are manifested through the higher harmony levels.   
\end{remark}

\begin{remark} Harmony Zero always holds for diagonal scalar products. Indeed, in this case a quick 
	calculation shows that   
\begin{equation}\label{Z-up}
z \ket{n}=\left(\frac{s_{n+1}}{s_n}\right)^{1/2}\!\!\! \ket{n+1}\quad\quad \forall n\ge 0. 
\end{equation}
This is a 
kind of {\it creation} operator with the only non-zero matrix 
entries being those immediately below the main diagonal. 
It is formally adjointable in $\Cz$ 
and its formal adjoint, 
the {\it annihilation} operator, is given by 
\begin{equation}\label{Z-down}
   \zz \ket{n}= 
   \left( \dfrac{s_{n}}{s_{n-1}} \right)^{1/2}
   \!\!\! \ket{n-1}
   \quad\quad \forall n \ge 1,
\end{equation}
together with 
$ \zz \ket{0} =0 $. 
As for the topological part of H0 
that follows from the above 
observation on non-zero entries of $S_\infty$ 
and comparing with \eqref{H0-topological}. 
Or it follows explicitly from the formula
\begin{equation}
\braket{n}{\phi(z)}=\frac{s_n^{1/2}}{n!}\frac{\partial^n}{\partial z^n}\phi(z)\Bigm\vert_{z=0}, 
\end{equation} 
which tells us that all the $\bra{n}$ are indeed continuous in the normal topology of $\Cz$. 
\end{remark}

The algebraic part of H0 ensures the existence of a basic *-algebra $\azz$.
\begin{definition}
 $\azz$ is defined to be the  
algebra of linear operators acting on $\Cz$ 
generated by the 
operators $z$ and $\zz$. 
Specifically, 
the elements of $\azz$ are 
finite linear combinations of monomials
in $z$ and $\zz$, 
which are by definition all of the 
finite products whose factors 
are either $z$ or $\zz$ in all possible
orders.

Clearly, $\azz$ is a *-algebra, with its *-structure 
being the formal adjoint operation. 
\end{definition}

The existence of the formal adjoint 
implies that 
every operator in $\azz$, 
whose domain $ \Cz $ is dense in 
$\cal{H}$, is closable in $\cal{H}$.  
(See Theorem VIII.1 in \cite{rs1}.)
Moreover, 
\begin{equation}
a\subset \overline{a}\qquad \mathrm{and} \qquad 
\overline{a}^*\supset a^*
\qquad  \mathrm{for~every~}   a\in\azz. 
\end{equation}
In the second formula the $^{*}$ 
in $ \overline{a}^* $
means the Hilbert space 
adjoint operator in $\cal{H}$ of the
densely defined operator $ \overline{a} $. 
Also, note that the closure $ \overline{a} $
is always strictly greater than $ a $.

Depending on $\braket{}{}$ the *-algebra $\azz$ can 
be very simple or arbitrarily complicated. 
The simplest situation is when $z=\zz$, 
in which case the operator of multiplication by $ z $
with dense domain $ \Cz $ is symmetric as a densely 
defined operator acting in $ \cal{H} $. 
This defines the classical context of {\it orthogonal polynomials} over 
$\Bbb{R}$. In this case the scalar product can  
always be represented in the form 
\begin{equation}\label{mu-R}
\braket{\varphi}{\psi}=\int_{\!-\infty}^\infty \overline{\varphi(t)}\psi(t)\, d\mu(t) 
\qquad \mathrm{for~all~} \varphi, \psi \in \Cz 
\end{equation}
with respect to a finite measure $\mu$ on $\Bbb{R}$. Indeed, if $z$ is symmetric, then all the numbers 
$s_{nm}=\braket{z^n}{z^m}=\braketop{0}{z^{n+m}}{0}$ are real. 
This implies that $J$ is isometric, and in particular it extends to an anti-linear isometry of the whole $\cal{H}$. So 
we have an anti-linear isometry $J$ preserving $z$. This symmetry situation implies that $z$ is extendible to a self-adjoint operator in $\cal{H}$, as $J$ exchanges $\ker(\lambda-z^*)$ and $\ker(\bar{\lambda}-z^*)$ 
for every non-real $\lambda \in \Bbb{C}$, and so both defect indices of $z$ are the same. Let us now take any self-adjoint 
extension $\widetilde{z}$ of $z$ and consider its spectral measure $E$. 
This is a projector-valued measure, that is its values $E(\Lambda)$ on Borel subsets $\Lambda$ of $\Bbb{R}$ 
are orthogonal projectors in $\cal{H}$ so that 
$$  \widetilde{z}=\int_\Bbb{R} t\, dE(t).  $$
The desired standard measure $\mu$ reproducing the scalar product $\braket{}{}$ is defined by 
$$ \mu(\Lambda)=\braketop{0}{E(\Lambda)}{0} $$
for Borel subsets $\Lambda$ of $\Bbb{R}$. 

\begin{remark}
Here the topological part of H0 is equivalent to the compactness of
the support of the measure $\mu$. 
\end{remark}

\begin{prop} This classical context of orthogonal polynomials 
is equivalent, modulo 
linear transformations of the complex variable 
$z$, to the commutativity of the *-algebra $\azz$.  
\end{prop}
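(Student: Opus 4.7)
My plan is to prove both directions of the equivalence, the converse being immediate.

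If after a linear change of variable $z\mapsto az+b$ with $a\neq 0$ the new multiplication operator $z':=az+b$ satisfies $z'=(z')^*$, then $\zz=(z'-\bar b)/\bar a$ is a linear expression in $z'$, hence $[z,\zz]=0$ and $\azz$ is commutative.

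For the substantive direction, I would assume $\azz$ is commutative, so in particular $[z,\zz]=0$. Since $\zz$ then commutes with the multiplication operator $z$, the maximal commutativity of $\Cz$ inside the algebra of linear operators on $\Cz$, recalled earlier in the text, forces $\zz$ to be left multiplication by $\zz(1)\in\Cz$; I set $p(z):=\zz(1)$, so that $\zz=p(z)$ as operators on $\Cz$. I then apply the formal adjoint to this identity. On the left one has $(\zz)^*=z$ by involutivity, while on the right
\[
(p(z))^* \;=\; \Bigl(\sum_k a_k z^k\Bigr)^* \;=\; \sum_k \bar a_k(\zz)^k \;=\; \bar p(\zz) \;=\; \bar p(p(z)),
\]
where $\bar p$ denotes the polynomial obtained from $p$ by conjugating its coefficients. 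This yields the polynomial identity $\bar p(p(z))=z$ in $\Cz$. Comparing degrees gives $(\deg p)^2=1$, so $\deg p=1$. Writing $p(z)=\alpha z+\beta$ with $\alpha\neq 0$, expansion gives $|\alpha|^2 z+\bar\alpha\beta+\bar\beta=z$, equivalent to the two scalar conditions $|\alpha|=1$ and $\bar\alpha\beta+\bar\beta=0$.

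To finish, I would construct $a,b\in\Bbb{C}$ with $a\neq 0$ so that $z':=az+b$ satisfies $z'=(z')^*$. Since $(z')^*=\bar a\zz+\bar b=\bar a\alpha z+\bar a\beta+\bar b$, this reduces to the pair of conditions $a=\bar a\alpha$ and $b-\bar b=\bar a\beta$. The first is solvable because $|\alpha|=1$: any $a\neq 0$ with $\arg a=\tfrac12\arg\alpha$ works, and in particular forces $\bar\alpha=\bar a/a$. For the second, the constraint $\bar\beta=-\bar\alpha\beta$ combined with $\bar\alpha=\bar a/a$ gives $\overline{\bar a\beta}=a\bar\beta=-a\bar\alpha\beta=-\bar a\beta$, so $\bar a\beta\in i\Bbb{R}$ and a suitable $b$ exists (with its real part free). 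After this linear change, the new multiplication operator is self-adjoint, which is exactly the classical context of orthogonal polynomials recalled just before the proposition.

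I expect the main technical obstacle to lie in the final step: verifying that the constraints on $(\alpha,\beta)$ coming from $\bar p(p(z))=z$ are precisely those under which a linear change $z\mapsto az+b$ to self-adjoint form is possible. The rest of the argument is driven cleanly by the maximal commutativity of $\Cz$ in the algebra of its own linear endomorphisms together with the elementary degree identity for polynomial composition.
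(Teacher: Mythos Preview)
Your proof is correct and follows essentially the same approach as the paper: use maximal commutativity of $\Cz$ to write $\zz=p(z)$, apply the involution to obtain $\bar p(p(z))=z$, deduce that $p$ is linear with $|\alpha|=1$ and $\bar\alpha\beta+\bar\beta=0$, and then find the linear substitution. The only cosmetic difference is that the paper writes down the substitution explicitly as $z\mapsto\sqrt{\alpha}\,z+\beta/(2\sqrt{\alpha})$ and verifies it directly, whereas you argue existence of $(a,b)$ from the constraints; you also spell out the easy converse direction, which the paper leaves implicit.
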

\begin{proof}  If $\zz$ commutes with $z$, then it belongs to $\Cz$ in the sense that 
$ \zz=\phi(z) $, 
where $\phi$ is a polynomial of degree at least $1$. Here $z$ is viewed in the same way as $\zz$,
namely as an operator in $\Cz$.
But then 
$$
z = z^{**}=\phi(z)^*=(J\phi)(\zz)=(J\phi)[\phi(z)]. 
$$
So the composition of $J\phi$ and $\phi$ is the identity map on $\Bbb{C}$. This is possible only if
\begin{equation}
\label{a-b-conditions}
 \phi(z)=az+b\quad \mathrm{for~} a,b \in \Bbb{C}
\quad \mathrm{satisfying} \quad 
 \vert a\vert=1,\quad a\bar{b}+b=0.
\end{equation} 
Let $ \sqrt{a} $ denote one fixed value for the 
square root of $ a $. 
Using the conditions on $ a,b $
in \eqref{a-b-conditions}
and $ \zz=\phi(z) = a z + b $,   
one then readily calculates that 
$$ \Big(\sqrt{a}z +\frac{1}{2}\frac{b}{\sqrt{a}}\Big)^*=
\sqrt{a}z+\frac{1}{2}\frac{b}{\sqrt{a}}.$$
Therefore the substitution $z\rightsquigarrow \sqrt{a}z+b/(2\sqrt{a})$ does the trick. 
\end{proof}

We can extend \eqref{mu-R} in a sense to our non-commutative context by introducing a canonical 
`integration functional' on $\azz$ defined by
\begin{equation}\label{integral-A}
\int F:=\braketop{0}{F}{0}= 
\braket{\pp_{0}}{F \pp_{0}} = \braket{0}{F} 
\quad \forall F \in  
\azz, 
\end{equation}
where $ \pp_{0} = \ket{0} = 1 $ is the first element in 
the basis \eqref{B-basis-defined}. 
In particular we can write 
\begin{equation}
\braket{\rho}{\pp}=\int\! \rho^*\pp 
\quad \mathrm{for~} \rho, \pp \in \Cz, 
\end{equation}
where on the right side the polynomials 
$ \rho, \pp $
are interpreted 
as elements of $\azz$. 

With the *-algebra $\azz$ we have the very basic algebraic structure for crafting the idea of a quantized Euclidean 
plane. We would like to interpret the operators $z$ and $z^*$ as the 
quantized complex coordinate and its conjugate.  
But the structure still lacks certain 
geometrical and analytical contents, 
which are fundamental for a truly quantum interpretation. 
For instance, we would expect both  
$z$ and $z^*$ to be unbounded operators 
with their spectra  being
equal to the whole plane $\Bbb{C}$. 
And we also want to be able to construct coherent states associated to every complex number. 
And this is not always the 
case within the framework of Harmony Zero. 

Our next harmony property provides these additional
geometrical and analytical dimensions by linking the theory to  Hilbert spaces of entire functions. 

\begin{definition}
We shall say that 
{\em Harmony One} (or simply H1) 
holds (for $\cal{H}$ or for $\braket{}{}$ on $\Cz$) if $\cal{H}$  
can be realized as a Hilbert space of entire functions
on $\Bbb{C}$. 
\end{definition}

\begin{remark} Properties H0 and H1 are logically independent. However,  in diverse interesting examples they 
will happily work together. 
\end{remark}

Harmony One is a symbiosis of two more elementary conditions, which are related to {\it continuity} and {\it injectivity}. 
The algebra of entire functions $\HolC$ possesses its natural normal topology of uniform convergence on 
compact sets. 
All polynomials are entire functions. 
Firstly, we require that the inclusion map $\Cz\rightarrowtail \HolC$ be continuous, where $\Cz$ is considered 
equipped with the $\braket{}{}$-induced topology. 
If so, the inclusion extends to a continuous map 
$\cal{H}\rightsquigarrow \HolC$. 
Secondly, we require that this extended map be injective. 
In such a way all the elements of $\cal{H}$ are viewable as entire functions. 

If, on the other hand, we only have the continuity but not injectivity, 
the situation allows an elegant geometrical 
`renormalization'. 
In this case there is a non-trivial kernel $\cal{K}$ of $\cal{H}\rightsquigarrow \HolC$, which is a closed 
subspace of $\cal{H}$ transversal to $\Cz$. 
If we orthogonally project $\Cz$ onto $\cal{K}^\bot$, the space 
$\Cz$ will be preserved as an everywhere dense projection in $\cal{K}^\bot$. 
This projection can be understood 
as a change of the initial 
scalar product in $\Cz$, so that $\braket{}{}$ is replaced by the scalar product induced from $\cal{K}^\bot$. 
The
restriction of $\cal{H}\rightsquigarrow \HolC$ to  $\cal{K}^\bot$ is by construction injective. 
Hence $\cal{K}^\bot$ 
or equivalently $\Cz$ equipped with the new scalar product, satisfies Harmony One. 

It is instructive to do this construction `backwards' and classify all the scalar products on $\Cz$ for which 
the continuity property holds. 

\begin{prop} There is a canonical 
one-to-one correspondence between the scalar products $\braket{}{}$ on $\Cz$ for which the above 
continuity property holds and triplets $(\cal{K},\braket{}{}^{\!\sim},F)$,  
where $\cal{K}$ is a separable Hilbert space, 
$\braket{}{}^{{\!\sim}}$ a scalar product on $\Cz$ satisfying Harmony One, and $F\colon\Cz\rightarrow\cal{K}$ a completely discontinuous linear map in the sense that  $\mathrm{D}(F^*)=\{0\}$. The space $\cal{H}$ 
is realized as $\cal{H}^{\sim}\oplus \cal{K}$, where $\cal{H}^\sim$ is the Hilbert space completion of $\Cz$ relative to $\braket{}{}^{\!\sim}$. In terms of this identification, the $\braket{}{}$-isometric inclusion 
of $\Cz$ into $\cal{H}$ is $p(z)\longmapsto p(z)\oplus F[p(z)]$.\qed  
\end{prop}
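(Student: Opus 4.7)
The plan is to build two mutually inverse maps between the two sides, leveraging the orthogonal decomposition idea sketched in the paragraph before the statement. For the forward direction, starting from a scalar product $\braket{}{}$ with the continuity property, I let $T\colon\cal{H}\to\HolC$ be the continuous extension of the inclusion $\Cz\rightarrowtail\HolC$, set $\cal{K}:=\ker T$, and let $P$ denote the orthogonal projection of $\cal{H}$ onto $\cal{K}^\perp$. Since $T|_{\Cz}$ is the inclusion of polynomials into $\HolC$, which is injective, $\cal{K}\cap\Cz=\{0\}$, and hence $P|_{\Cz}$ is injective. I then define $\cal{H}^\sim:=\cal{K}^\perp$, $F(p):=(I-P)p$, and $\braket{p}{q}^\sim:=\braket{Pp}{Pq}$ for $p,q\in\Cz$. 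Density of $\Cz$ in $\cal{H}$ passes to density of $P(\Cz)$ in $\cal{H}^\sim$, so the $\braket{}{}^\sim$-completion of $\Cz$ equals $\cal{H}^\sim$, while $T|_{\cal{H}^\sim}$ is a continuous injection into $\HolC$, verifying Harmony One for $\braket{}{}^\sim$.

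The crucial step is verifying $\mathrm{D}(F^*)=\{0\}$. A vector $v\in\cal{K}$ lies in $\mathrm{D}(F^*)$ precisely when $p\mapsto\braket{v}{F(p)}_{\cal{K}}$ is $\braket{}{}^\sim$-continuous on $\Cz$, equivalently when some pair $(u,v)\in\cal{H}^\sim\oplus\cal{K}$ is orthogonal to $\mathrm{graph}(F)=\{(Pp,(I-P)p):p\in\Cz\}$. This graph is precisely the image of $\Cz$ under the orthogonal decomposition $\cal{H}=\cal{H}^\sim\oplus\cal{K}$, hence dense in $\cal{H}$, forcing $v=0$. For the reverse direction, given a triplet $(\cal{K},\braket{}{}^\sim,F)$, I define on $\Cz$ the scalar product
$$\braket{p}{q}:=\braket{p}{q}^\sim+\braket{F(p)}{F(q)}_{\cal{K}}.$$
The map $p\mapsto p\oplus F(p)$ isometrically embeds $\Cz$ onto $\mathrm{graph}(F)$, which is dense in $\cal{H}^\sim\oplus\cal{K}$ by the complete discontinuity hypothesis, so the $\braket{}{}$-completion is $\cal{H}^\sim\oplus\cal{K}$. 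The required continuity property follows by factoring $\Cz\rightarrowtail\HolC$ as the embedding into $\cal{H}$, followed by projection to $\cal{H}^\sim$, followed by the continuous Harmony-One inclusion into $\HolC$.

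Mutual inversion then becomes mechanical: forward-then-reverse reproduces $\braket{p}{q}=\braket{Pp}{Pq}+\braket{(I-P)p}{(I-P)q}$ by orthogonality, and reverse-then-forward reidentifies $\cal{K}$ as the kernel of the map $\cal{H}^\sim\oplus\cal{K}\to\HolC$ (since the Harmony-One inclusion is injective on the $\cal{H}^\sim$-summand), so the orthogonal decomposition recovers the original triplet. The main obstacle is the bookkeeping of identifications, namely treating $\Cz$ simultaneously as a subspace of $\cal{H}$, as a dense subspace of $\cal{H}^\sim$ via $P$, and as the domain of the unbounded operator $F$, together with making precise the equivalence between complete discontinuity $\mathrm{D}(F^*)=\{0\}$ and density of the graph of $F$. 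Once these identifications are pinned down, each verification reduces to a routine orthogonal-decomposition argument.
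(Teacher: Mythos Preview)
Your proof is correct and matches exactly the approach the paper intends: the proposition in the paper is stated without proof (it carries only a \qed), with the forward construction sketched in the preceding paragraph (orthogonally projecting $\Cz$ onto $\cal{K}^\perp$ to obtain the renormalized scalar product) and the equivalence between $\mathrm{D}(F^*)=\{0\}$ and density of the graph spelled out in the Remark immediately following. Your write-up fills in precisely these details, including the one point the paper leaves implicit, namely that $T(Pp)=T(p)-T((I-P)p)=p$ so that $T|_{\cal{H}^\sim}$ really does extend the inclusion of polynomials; you might make that identity explicit, and note in passing that $\cal{K}$ is separable because $\cal{H}$ is.
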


\begin{remark} The map $F$ has the 
	wildest possible behavior for a linear map: its composition with any non-zero continuous functional on $\cal{K}$ is discontinuous. This wildness ensures that the graph of $F$, 
	which realizes the embedding of $\Cz$ into $\cal{H}$,
	 is everywhere dense in $\cal{H}$. 
	
	 To see why this is so, note that
	 $ \Gamma (F^{*}) = 
	 [U (\Gamma (F) )]^{\perp} $, where
	 $ \Gamma (F) $ is the graph of $ F $
	 in  the inner product space 
	 $ \Cz \oplus\cal{K} $ and 
	 $ U : \Cz \oplus\cal{K} \to 
	 \cal{K}\oplus\Cz $ is the inner product
	 preserving map 
	 $ (f,g) \mapsto (-g,f) $.  
	 Similarly, 
	 $\Gamma (F^{*})$ is the graph 
	 of $ F^{*} $.
	 Therefore $ \Gamma (F) $ dense implies that
	 $ \Gamma (F^{*}) = 0 $ and thus 
     $\mathrm{D}(F^*)=\{0\}$. 
     But $\mathrm{D}(F^*)= 
     \{ \kappa \in \cal{K}\,|\,
     \phi \mapsto \braket{\kappa}{F \phi}
     \, \mathrm{is~a~continuous~map~} 
     \Cz \to \mathbb{C} \}$.
     Moreover, $ F $ composed with
     a continuous functional 
     on the Hilbert space 
     $ \cal{K} $ has exactly the form 
     $\phi \mapsto \braket{\kappa}{F \phi} $ 
	 for some $ \kappa \in \cal{K} $. 
	 So $\mathrm{D}(F^*)=\{0\}$ implies that
	 this composition is continuous only 
	 for the zero functional. 
	 
	 Interestingly, such an extreme discontinuity naturally emerges in the study of an important continuity property. 
	 If this continuity is granted, then the
	  injectivity property holds for $\braket{}{}$ if
	  and only if $\cal{K}=\{0\}$ which is itself 
	  equivalent to $\braket{}{}=\braket{}{}^{\!\sim}$. 
\end{remark}

\begin{remark} 
On the other hand under some 
appropriate additional symmetry 
conditions on the scalar product, the injectivity of the extended map 
$\cal{H}\rightsquigarrow \HolC$ automatically holds. As we shall see, this includes our primary scenario when the monomials $z^n$ are mutually orthogonal. 
\end{remark}

If Harmony One holds, then for all $ z \in \Bbb{C} $
the linear functional 
$ l_{z} : \cal{H} \rightarrow  \Bbb{C}$
defined for all $ f \in \cal{H} $
by $ l_{z} (f) := f(z) $ (called {\it evaluation 
at $ z $}) is continuous with respect to the norm 
topology of $ \cal{H} $. 
To see that this is so, note that 
uniform convergence on compact sets implies 
convergence on the compact, singleton set $ \{ z \} $, 
that is, point-wise convergence. 
And that tells us that 
$ l_{z} : \HolC \to \Bbb{C} $,
defined by the same formula as above, 
is continuous in the 
normal topology. 
But the inclusion map  
$\cal{H}\rightsquigarrow \HolC$ is continuous by 
Harmony One. 
So the composition of these two maps is continuous, 
as claimed.
This in turn implies that $ \cal{H} $ is 
a {\em reproducing kernel Hilbert space}.
So 
in accordance with the general theory of Hilbert spaces of entire functions 
(for example, see \cite{debranges}) 
there exists a unique 
{\em reproducing kernel}
$K\colon\Bbb{C}\times\Bbb{C}\rightarrow \Bbb{C}$ of $\cal{H}$ given by the series 
\begin{equation}\label{K}
K(\bar{w},z)=\sum_{n=0}^\infty \overline{\pp_n(w)}\pp_n(z),  
\end{equation}
which converges absolutely and uniformly on compact sets of $\Bbb{C}\times\Bbb{C}$. 
The following are the characteristic
properties of $ K $: 
\begin{itemize}
	\item 
	For every $ z \in \Bbb{C} $ and $ f \in \cal{H} $ 
	we have the {\em Reproducing Property:} 
	$$
	 f (z) = \int_{\Bbb{C}} K (\bar{w},z) f (w) d w
	$$
	\item 
	For each $ w \in \Bbb{C} $ the function 
	$ \Bbb{C} \ni z \mapsto K (\bar{w},z) $ 
	is an element in $ \cal{H} $. 
\end{itemize}

It follows in this case from \eqref{n-kernel}
that  the sequence of 
inverse matrices $S_n^{-1}$, viewed naturally in $\mathrm{M}_\infty(\Bbb{C})$, converges entry-by-entry 
as $ n \to \infty $
to an infinite hermitian matrix $\sinfty^-$ so that  
\begin{equation}\label{wz-kernel}
K(\bar{w},z)=\sum_{i,j=0}^\infty z^i[S_{\infty}^-]_{ij}\bar{w}^j, 
\end{equation}
a series which is normally convergent on $\Bbb{C}\times\Bbb{C}$. 

\begin{remark} An additional explanation for this important 
inversion formula is, perhaps, in order here. We know that 
the partial sums, the left hand side of \eqref{n-kernel}, converge normally on $\Bbb{C}\times\Bbb{C}$ to the 
reproducing kernel $K(\bar{w},z)$. On the other hand, the reproducing kernel is expandable into a double 
power series in $\bar{w}$ and $z$, normally convergent on the whole $\Bbb{C}\times\Bbb{C}$. In particular, 
the coefficients of the partial sums, the right-hand side of \eqref{n-kernel}, converge as $n\to\infty$ to 
the coefficients of the expansion of the reproducing kernel, and \eqref{wz-kernel} indeed holds. 
\end{remark}

For each $w\in\Bbb{C}$ its 
{\em point wave function} is defined 
for all $ z \in \Bbb{C} $ by 
$$
w(z):=K(\bar{w},z) \quad 
\mathrm{or~more~simply~by} \quad [w]:= K(\bar{w},\cdot). 
$$ 
We have that 
$[w] = w \in \cal{H}$ by the 
second characteristic property of $ K $. 
(That both $w\in\Bbb{C}$ and  $  w \in \cal{H} $ 
is an abuse of notation which will be be resolved 
by context.) 
Also by the reproducing property of $ K $ we obtain 
\begin{equation}\label{w(z)}
\braket{[w]}{\psi} = 
\braket{w(z)}{\psi(z)}=\psi(w) 
\end{equation}
for every $\psi=\psi(z)\in\cal{H}$ 
and $ w \in \Bbb{C} $. 
In particular, for the point wave functions 
associated to $ v, w \in \Bbb{C} $
we obtain
\begin{equation}\label{w-norm}
\braket{[v]}{[w]} = 
\braket{v(z)}{w(z)}=K(\bar{w},v)
\quad \mathrm{and}\quad 
\| \,[w]\, \|^2 = 
\|w(z)\|^2=K(\bar{w},w). 
\end{equation}

We claim that a point wave function $ [w] $ can not 
be identically equal to zero. 
In fact, by \eqref{w(z)} the set of 
{\em dormant points} defined as 
$  \{ w \in \Bbb{C} ~|~ [w] = 0 \} $ is equal to 
$$
 \{ w \in \Bbb{C} ~|~ \psi (w) = 0  \quad \forall 
 \psi \in \cal{H} \}. 
$$
But $ \cal{H} $ contains all the elements in $ \Cz $, 
that is, all polynomials. 
And there is no complex number that is the common 
zero of all polynomials. 
So, in this setting there are no dormant points
and consequently 
$ [w] \ne 0$ for all $ w \in \Bbb{C} $. 
However, in other more general reproducing kernel 
Hilbert spaces there are dormant points. 

The following is a standard result, which 
goes much beyond $ [w] \ne 0$. 
\begin{lemma}
\label{LI-lemma}
The set of vectors $\{ ~ [w] ~|~ w \in  \Bbb{C} \} $ 
in $ \cal{H} $ is linearly independent.
\end{lemma}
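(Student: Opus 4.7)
The plan is to show that any finite collection $\{[w_1], \dots, [w_n]\}$ of point wave functions with $w_1, \dots, w_n \in \Bbb{C}$ pairwise distinct is linearly independent, which is enough since linear independence is a finitary notion. So suppose $\sum_{k=1}^n c_k [w_k] = 0$ in $\cal{H}$ with $c_k \in \Bbb{C}$. The goal is to force each $c_k = 0$.

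The key tool is the reproducing property \eqref{w(z)}, which converts the inner product with $[w_k]$ into evaluation at $w_k$. More precisely, pairing the hypothesized identity against an arbitrary $\psi \in \cal{H}$ and using the conjugate-linearity of $\braket{}{}$ in the first entry gives
\begin{equation*}
0 = \braket{\psi}{\Sum_{k=1}^n c_k [w_k]} = \Sum_{k=1}^n c_k \braket{\psi}{[w_k]} = \Sum_{k=1}^n c_k \overline{\psi(w_k)},
\end{equation*}
valid for every $\psi \in \cal{H}$. Since $\Cz \subset \cal{H}$, the identity $\sum_k c_k \overline{p(w_k)} = 0$ holds for every polynomial $p \in \Cz$.

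At this stage we only need to exploit that polynomials separate points of $\Bbb{C}$ richly enough to invert a linear system. The cleanest way is to plug in the monomials $p(z) = z^j$ for $j = 0, 1, \dots, n-1$, which yields the homogeneous linear system $\sum_{k=1}^n c_k \,\overline{w_k}^{\,j} = 0$ with Vandermonde matrix $(\overline{w_k}^{\,j})_{j,k}$. Since the numbers $\overline{w_k}$ are pairwise distinct, this Vandermonde matrix is invertible, forcing $c_1 = \dots = c_n = 0$. Alternatively, one can use Lagrange interpolation polynomials $p_j$ satisfying $p_j(w_k) = \delta_{jk}$ to read off each $c_j$ directly; this second variant is slightly more conceptual but computationally equivalent.

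There is no real obstacle here — the only point requiring a moment of care is the convention on sesquilinearity (so that the $c_k$ survive on the correct side of the pairing) and the observation that polynomials genuinely lie in $\cal{H}$, which is immediate from the construction of $\cal{H}$ as the completion of $\Cz$. Harmony One is used only implicitly, through the fact that the reproducing property \eqref{w(z)} is available.
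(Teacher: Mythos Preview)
Your proof is correct and follows essentially the same route as the paper's: reduce to finite subsets, pair the vanishing combination against polynomials via the reproducing property, and conclude using that polynomials separate the finitely many points. The paper opts directly for the Lagrange interpolation variant you mention at the end, rather than the Vandermonde formulation, but these are equivalent.
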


\begin{proof}
A set of vectors is linearly 
independent if and only if all of its 
finite subsets are linearly 
independent. 
So let $ w_{1}, \dots , w_{n} $ be  $ n $
distinct points in $ \Bbb{C} $. 
We have to show that the vectors 
$  [w_{1}], \dots , [w_{n}]  $ are linearly 
independent. 
So suppose that 
$ \sum_{j=1}^{n} \lambda_{j} [w_{j}] = 0$ for some
$ \lambda_{j} \in \Bbb{C} $. 
We have to prove that 
$ \lambda_{j} = 0 $ for all $ j $. 
Now for all $ \psi \in \cal{H} $ we have
\begin{equation}
\label{prove-LI}
0 = \braket{\sum_{j=1}^{n} \lambda_{j} [w_{j}]}{\psi} = 
\sum_{j=1}^{n} \overline{\lambda_{j}}
\braket{[w_{j}]}{\psi} =
\sum_{j=1}^{n} \overline{\lambda_{j}} \psi(w_{j}). 
\end{equation}
Since the points 
$ w_{1}, \dots , w_{n} $ are distinct, 
for each $ 1 \le i \le n $ there exists a polynomial 
$ p_{i}(z) \in \Cz \subset \cal{H} $ such that $ p_{i} (w_{j}) = \delta_{ij} $, the Kronecker delta, 
for each $ 1 \le j \le n $. 
(These are the basis Lagrange polynomials 
for the points $ w_{1}, \dots , w_{n} $.) 
Taking $ \psi = p_{i} $ in \eqref{prove-LI} 
shows that $ \lambda_{i} = 0 $ 
for every $ 1 \le i \le n $
\end{proof} 

Let us consider the linear span (finite linear combinations) of all the point wave functions:
\begin{equation}
\label{cal-L-defined}
\cal{L}:=\Bigl\{\sideset{}{^*}\sum_{w\in\Bbb{C}} c_w w(z)\Bigr\}.
\end{equation}

This is a fundamental object directly emerging from the 
reproducing kernel. 
From \eqref{w(z)} we conclude that there is no non-zero 
vector in $ \cal{H} $ that is orthogonal to all of the
point wave functions. 
Consequently, the vector subspace $\cal{L}$ is dense in 
$\cal{H}$ in the norm topology and hence also 
dense in 
$\cal{H}$ in the normal topology. 
Moreover, because of \eqref{w-norm} both the 
scalar product and the norm restricted to $\cal{L}$ 
are completely determined by the reproducing kernel.  

The reproducing kernel formula is a key to expressing Harmony One in terms of the canonical orthonormal basis.
 
\begin{prop}\label{harmony-one} Harmony One holds if and only if the series 
\begin{equation}\label{z-ppn-series}
\sum_{n=0}^\infty |\pp_n(z)|^2
\end{equation}
is normally convergent on $\Bbb{C}$,  
and in addition the orthonormal 
polynomials $\pp_n(z)$ are $\infty$-linearly independent in the sense that all infinite linear combinations
\begin{equation}\label{infty-linear}
\sum_{n=0}^\infty c_n\pp_n(z) \quad 
\mathrm{with} \quad 
\sum_{n=0}^\infty |c_n|^2<+\infty, 
\end{equation}
understood as entire functions, uniquely determine their coefficients $c_n$. 
\end{prop}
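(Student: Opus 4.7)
My plan is to unpack Harmony One into the existence of a continuous injective linear extension $\iota\colon\cal{H}\to\HolC$ of the inclusion $\Cz\hookrightarrow\HolC$, and then to match continuity with the normal convergence condition on $\sum|\pp_n(z)|^2$ and injectivity with the $\infty$-linear independence. Every $\psi\in\cal{H}$ has a unique expansion $\psi=\sum_n c_n\pp_n$ with $\sum|c_n|^2<\infty$, so $\iota(\psi)$ is forced to be the pointwise series $\sum_n c_n\pp_n(z)$ whenever that series makes sense and defines an entire function.

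For the direction ``conditions $\Rightarrow$ H1'', given $\psi=\sum c_n\pp_n\in\cal{H}$, I define $\Psi(z):=\sum_n c_n\pp_n(z)$. A Cauchy--Schwarz estimate
\[
\Bigl| \sum_{n=N}^{M} c_n\pp_n(z) \Bigr|^2
\leq \Bigl( \sum_{n=N}^{M} |c_n|^2 \Bigr)\Bigl( \sum_{n=0}^{\infty} |\pp_n(z)|^2 \Bigr),
\]
combined with the uniform bound on compact sets provided by normal convergence of $\sum|\pp_n(z)|^2$, shows that the polynomial partial sums form a Cauchy sequence in the normal topology and thus converge to $\Psi$ uniformly on every compact set. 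Hence $\Psi\in\HolC$ and the same estimate yields $\sup_{z\in K}|\Psi(z)|\leq C_K\|\psi\|$, proving continuity of $\psi\mapsto\Psi$ from the norm topology on $\cal{H}$ to the normal topology on $\HolC$. The map agrees with the identity on $\Cz$, and injectivity is precisely the $\infty$-linear independence hypothesis.

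For the converse ``H1 $\Rightarrow$ conditions'', assume $\iota$ exists. Continuity of $\iota$ yields, for each compact $K\subset\Bbb{C}$, a constant $C_K$ with $\sup_{z\in K}|\psi(z)|\leq C_K\|\psi\|$ for all $\psi\in\cal{H}$, so each evaluation is a bounded functional and $\cal{H}$ is a reproducing kernel Hilbert space with kernel \eqref{K}. In particular $\sum_n|\pp_n(z)|^2=K(\bar z,z)=\|K_{\bar z}\|^2\leq C_K^2$ on $K$, giving pointwise convergence and local boundedness. To upgrade to normal convergence I invoke Dini's theorem on the monotone increasing sequence of continuous partial sums $K_N(\bar z,z)=\sum_{n\leq N}|\pp_n(z)|^2$, which requires continuity of the limit $K(\bar z,z)$. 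For this, I show that $z\mapsto K_{\bar z}$ is continuous in the norm of $\cal{H}$: for any closed disk $\overline{B(z_0,R)}\subset\Bbb{C}$ and any $\psi$ with $\|\psi\|_{\cal{H}}\leq 1$, Cauchy's integral formula applied on the circle $|w-z_0|=R$ yields a uniform Lipschitz estimate
\[
\sup_{|z-z_0|\leq R/2}\,|\psi(z)-\psi(z_0)|\leq \tfrac{2C_{\overline{B(z_0,R)}}}{R}\,|z-z_0|,
\]
so the unit ball of $\cal{H}$ is locally equicontinuous. Taking the supremum over this unit ball gives $\|K_{\bar z}-K_{\bar z_0}\|_{\cal{H}}\to 0$, and thus $K(\bar z,z)=\|K_{\bar z}\|^2$ is continuous. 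Dini then upgrades pointwise to uniform convergence on $K$, i.e., normal convergence on $\Bbb{C}$. Finally, $\infty$-linear independence falls out of injectivity of $\iota$: if $\sum c_n\pp_n(z)\equiv 0$ with $\sum|c_n|^2<\infty$, the corresponding $\psi\in\cal{H}$ satisfies $\iota(\psi)=0$, so $\psi=0$ and all $c_n$ vanish.

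The main obstacle is the passage from pointwise to normal convergence of $\sum|\pp_n(z)|^2$ in the ``$\Rightarrow$'' direction; everything else follows routinely from Cauchy--Schwarz and the definitions. The decisive observation is that the continuity hypothesis in H1 is strong enough, via Cauchy's integral formula, to force the unit ball of $\cal{H}$ to be equicontinuous on every compact set, which in turn makes $K(\bar z,z)$ continuous and unlocks Dini's theorem.
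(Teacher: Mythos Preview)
Your proof is correct, and both directions are handled cleanly; the approach, however, differs from the paper's in interesting ways.

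For the implication ``conditions $\Rightarrow$ H1'', you build the realization map $\psi\mapsto\Psi$ directly via Cauchy--Schwarz, which is quick and transparent. The paper instead works backwards from the kernel side: it uses the normal convergence of $\sum|\pp_n(z)|^2$ to define $K(\bar w,z)$ by the series \eqref{K}, invokes the positivity of the resulting kernel to produce an auxiliary reproducing kernel Hilbert space $\cal{H}^\sim$, isometrically embeds $\cal{H}^\sim$ into $\cal{H}$ via $w(z)\mapsto\sum_n\overline{\pp_n(w)}\pp_n$, and then shows that the orthocomplement of the image consists precisely of the square-summable null combinations $\sum c_n\pp_n(z)\equiv 0$. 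Your route avoids the auxiliary space entirely; the paper's route has the conceptual payoff of making the remark after the proposition---that normal convergence is exactly the continuity half and $\infty$-independence is exactly the injectivity half---structurally visible.

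For ``H1 $\Rightarrow$ conditions'', the paper simply quotes the general theory of reproducing kernel Hilbert spaces of entire functions (cited earlier, before the proposition) for the fact that $K(\bar w,z)=\sum_n\overline{\pp_n(w)}\pp_n(z)$ converges normally on $\Bbb{C}\times\Bbb{C}$, so $\sum|\pp_n(z)|^2$ converges normally to $K(\bar z,z)$ with no further work. Your argument is more self-contained: you extract local equicontinuity of the unit ball from Cauchy's integral formula, deduce continuity of $z\mapsto K(\bar z,z)$, and then invoke Dini. This is a genuinely elementary alternative that does not rely on any outside structural result about RKHS's, at the modest cost of a few extra lines.
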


\begin{proof} The $\Rightarrow$ part is clear as in this case 
\eqref{z-ppn-series} normally converges to $K(\bar{z},z)$, and 
the functions \eqref{infty-linear} are faithful representations of the vectors of $\cal{H}$. 

The $\Leftarrow$ part 
is a little subtler. 
If \eqref{z-ppn-series} is normally convergent on $\Bbb{C}$, 
then we can {\it define} the 
function $K(\bar{w},z)$ by the series 
\eqref{K} for which it is easy to see that 
it converges normally 
on $\Bbb{C}\times\Bbb{C}$ in this case.
Moreover, this function satisfies by construction the primary 
matrix positivity condition (as discussed in Appendix~B). 
So it is the reproducing kernel function 
for a Hilbert space $\cal{H}^\sim$ of 
entire functions. 
Let $\cal{L}^\sim$ be its dense linear subspace of 
finite linear combinations of 
point wave functions. 
The formula 
\begin{equation*}
\cal{L}^\sim\ni w(z)\rightsquigarrow \sum_{n=0}^\infty \overline{\pp_n(w)}\pp_n\in\cal{H}
\end{equation*} 
for all $ w \in \Bbb{C} $ 
defines an isometric embedding of $\cal{L}^\sim$ into $\cal{H}\leftrightarrow \ell^2(\Bbb{N})$. 
This isometric 
embedding extends to an isometric embedding of $\cal{H}^\sim$ into $\cal{H}$. 
Using this embedding, 
let us calculate the orthocomplement of 
$\cal{H}^\sim$ in $\cal{H}$, namely 
$$\cal{H}^{\sim\bot}=\cal{L}^{\sim\bot}=\bigcap_{w\in\Bbb{C}}[w(z)]^\bot. $$
Taking into account that 
$$\braket{w(z)}{\sum_{n=0}^\infty c_n\pp_n}=\sum_{n=0}^\infty c_n\pp_n(w) $$
we see that this orthocomplement consists precisely of infinite square summable decompositions of $0$ into 
$\infty$-linear combinations of the 
functions $\pp_n(z)$ as in \eqref{infty-linear}. 
The orthocomplement 
will be trivial (equivalently $\cal{H}=\cal{H}^\sim$) 
if and only if 
\eqref{infty-linear} faithfully represents the vectors of $\cal{H}\leftrightarrow\ell^2(\Bbb{N})$. 
\end{proof}

\begin{remark} In fact the normal convergence of the series \eqref{z-ppn-series} is equivalent to the continuity 
part of Harmony One, while the faithfulness of \eqref{infty-linear} is equivalent to the injectivity part. 
\end{remark}

The point wave functions give us corresponding 
coherent states, since they can be normalized.
By using some more Dirac notation,  
there is a natural
embedding 
\begin{equation}\label{w-w(z)}
\Bbb{C} \ni 
 w\rightsquigarrow \frac{\diada{w(z)}{w(z)}}{\braket{w(z)}{w(z)}}
 = \frac{\diada{[w]}{[w]}}{|| \,[w]\, ||^{2}}
 \in \mathrm{CP}(\cal{H})
\end{equation}
into the complex projective space $\mathrm{CP}(\cal{H})$, which is naturally 
identified as the space of pure states of $ \cal{H} $. 
In the context given by $\cal{H}$
the {\em coherent states} 
$ \diada{[w]}{[w]} / || \, [w] \, ||^{2} $
can be interpreted as the quantum counterparts of classical points in the plane $\Bbb{C}$.   
In the literature
the corresponding unit vectors 
$  [w]  / || \, [w] \, ||  \in \cal{H} $ 
are also called coherent states.

\begin{prop}The induced metric on $\Bbb{C}$ from $\mathrm{CP}(\cal{H})$ via the above map is given by
\begin{equation}\label{K-induced}
\de s^2=\de \bar{w}\,\de w\Bigl\{\frac{\partial^2}{\partial \bar{w}\,\partial w} \,\log K(\bar{w},w)\Bigr\}. 
\end{equation}
\end{prop}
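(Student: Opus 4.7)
My plan is to express the pullback of the Fubini--Study metric via the coherent state embedding \eqref{w-w(z)} directly in terms of the reproducing kernel, exploiting the fact that $w\mapsto [w]=K(\bar w,\cdot)$ is an \emph{antiholomorphic} map $\Bbb{C}\to\cal{H}$. The starting point is the standard expression for the induced metric on $\mathrm{CP}(\cal{H})$: for any vector-valued family $w\mapsto|\psi(w)\rangle$ whose rays represent the given points of $\mathrm{CP}(\cal{H})$,
\begin{equation*}
\de s^2=\frac{\braket{\de\psi}{\de\psi}}{\braket{\psi}{\psi}}-\frac{\braket{\psi}{\de\psi}\braket{\de\psi}{\psi}}{\braket{\psi}{\psi}^2}.
\end{equation*}
I substitute $|\psi\rangle=[w]$ and identify each inner product with an appropriate Wirtinger derivative of $K(\bar w,w)$.

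The first step is to observe that, since $[w]$ depends antiholomorphically on $w$, we have $\partial_w [w]=0$ and hence $\de[w]=\partial_{\bar w}[w]\,\de\bar w$. Next, from \eqref{w-norm} we have $\braket{[w]}{[w]}=K(\bar w,w)$, and differentiating inside the inner product (which is permitted because the series \eqref{K} is normally convergent and so are its derivatives) gives $\braket{[w]}{\partial_{\bar w}[w]}=\partial_{\bar w}K(\bar w,w)$. For the term $\braket{\partial_{\bar w}[w]}{\partial_{\bar w}[w]}$, I use the ``flip'' rule: since the map $w\mapsto\bra{[w]}$ is holomorphic, differentiating the vector inside the bra with respect to $\bar w$ acts on the resulting scalar as $\partial_w$, yielding $\braket{\partial_{\bar w}[w]}{\partial_{\bar w}[w]}=\partial_w\partial_{\bar w}K(\bar w,w)$. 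Combining these with $\overline{\de\bar w}=\de w$ and $\overline{\partial_{\bar w}K(\bar w,w)}=\partial_w K(\bar w,w)$, which follows from the reality of the diagonal value $K(\bar w,w)>0$, the Fubini--Study formula collapses to
\begin{equation*}
\de s^2=\frac{K\,\partial_w\partial_{\bar w}K-(\partial_w K)(\partial_{\bar w}K)}{K^2}\,\de w\,\de\bar w,
\end{equation*}
which is exactly $\de\bar w\,\de w\,\partial_{\bar w}\partial_w\log K(\bar w,w)$, as claimed.

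The only real obstacle is the careful bookkeeping of holomorphic versus antiholomorphic dependences. Since $|[w]\rangle$ is antiholomorphic in $w$ while $\bra{[w]}$ is holomorphic in $w$, the Wirtinger rule for differentiating a bra ``flips'' the roles of $\partial_w$ and $\partial_{\bar w}$; handling this incorrectly would scramble the conjugations in the cross term. Once this point is controlled, everything else is a direct application of the reproducing property \eqref{w(z)} and of the expansion \eqref{K} of the kernel.
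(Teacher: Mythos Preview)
Your proof is correct and arrives at the same intermediate formula
\[
\de s^2=\frac{K\,\partial_w\partial_{\bar w}K-(\partial_w K)(\partial_{\bar w}K)}{K^2}\,\de w\,\de\bar w
\]
as the paper does. The only difference is the starting point: the paper defines the metric on $\mathrm{CP}(\cal{H})$ via the Hilbert--Schmidt embedding of rank-one projectors with the normalization $\braket{X}{Y}=\tfrac12\mathrm{Tr}(X^*Y)$, and computes the infinitesimal squared distance $\tfrac12\mathrm{Tr}\bigl[(P_{w+\de w}-P_w)^2\bigr]$ directly in terms of $K$, whereas you invoke the equivalent Fubini--Study formula for a vector-valued lift $w\mapsto|\psi(w)\rangle$. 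Since the paper's definition of the projective metric is precisely the Hilbert--Schmidt one, a single sentence noting that your lift formula coincides with that definition (this is the standard derivation of Fubini--Study from the projector embedding, and the factor $\tfrac12$ is exactly what makes the two match) would close the gap between your starting point and theirs. Your handling of the antiholomorphic dependence of $[w]$ and the ``flip'' of Wirtinger derivatives under conjugation in the bra is correct and is the clean way to do the bookkeeping; the paper leaves this step implicit inside its trace computation.
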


\begin{proof} The complex projective space,
viewed as the set of rank 1 projectors as above,  
is a subset of the space of Hilbert-Schmidt operators
acting in $\cal{H}$, which 
itself is a Hilbert space relative to the scalar product defined by 
\begin{equation}\label{L2-H}
\braket{X}{Y}:=\frac{1}{2}\mathrm{Tr}(X^*Y).
\end{equation} 
By definition, 
the metric on $\mathrm{CP}(\cal{H})$ is the corresponding 
induced metric. 
And in the coordinates $w,\bar{w}$  we obtain
\begin{multline*}
\frac{1}{2}\mathrm{Tr}\Bigl\{\Bigl(\frac{\ket{w+\de w}\bra{w+\de w}}{
\langle w+\de w\vert w+\de w\rangle}-\frac{\ket{w}\bra{w}}{
\langle w\vert w\rangle}\Bigr)^2\Bigr\}\sim \\ 
\de s^2=\Bigl\{K(\bar{w},w)\frac{\partial^2}{\partial \bar{w}\,\partial w}K(\bar{w},w)
-\frac{\partial}{\partial \bar{w}} K(\bar{w},w)\, \frac{\partial}{\partial w} K(\bar{w},w)\Bigr\}\frac{\de \bar{w}\:\de w}{
K(\bar{w},w)^2}
\end{multline*}
where we have disregarded the higher order terms in $\de w$ and $\de\bar{w}$. The formula 
\eqref{K-induced} follows elementarily. 
\end{proof}

\begin{remark} The above formula is valid for all 
reproducing kernel Hilbert spaces of analytic
functions over arbitrary domains $\Omega$ in 
$\Bbb{C}$ as long as the point vectors
$w(z)$ are all non-zero. In other words, no point 
$w\in\Omega$ is a common zero 
for all the elements of $\cal{H}$.
This is a wise thing to assume always, since in the
contrary case we can simply remove 
the discrete set of dormant points from $\Omega$.  
However, as we already showed, there are no dormant 
points in the setting of this paper. 
\end{remark}

\begin{remark} The factor $1/2$ in the definition 
\eqref{L2-H} is justified by the simple 
resulting expression for 
the induced metric in $\Bbb{C}$. 
Another justification is that in this case 
for unit vectors $ \varphi, \psi $ 
the square of the 
distance between the one-dimensional projectors $\diada{\psi}{\psi}$ and $\diada{\varphi}{\varphi}$ has a direct physical meaning: the value is 
$1-|\braket{\varphi}{\psi}|^2$, which is the quantum 
mechanical probability of {\it not} transitioning from 
one of the states 
$ \varphi, \psi $ to the other. 
Intuitively, this probability of non-transition 
(or more accurately, its square root) is 
measuring how far apart the states determined by  
$ \varphi, \psi $ are from each other.  
\end{remark}

\begin{prop}
\label{O-C-extension-thm}
Assume that Harmony One holds. 
Then every transformation $\tau\in\Opiz$ naturally 
induces a closed linear operator, denoted by 
$\widehat{\tau}$, acting in $\cal{H}$ on the 
not necessarily dense domain defined by 
\begin{equation}\label{D-tau}
\mathrm{D}(\widehat{\tau}):=
\Bigl\{\psi(z)\in\cal{H}
\Bigm\vert \tau[\psi(z)]\in\cal{H}\Bigr\}
\end{equation} 
on which $\widehat{\tau}$ acts by 
$\widehat{\tau}\colon \psi(z)\mapsto \tau[\psi(z)]$. 
 \end{prop}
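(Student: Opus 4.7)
The plan is to exploit two forms of continuity that coexist under Harmony One. First, by Harmony One the inclusion map $\iota \colon \cal{H} \rightarrowtail \HolC$ is continuous from the norm topology on $\cal{H}$ to the normal topology on $\HolC$. Second, by definition of $\Opiz$, the map $\tau \colon \HolC \to \HolC$ is normally continuous. So the composition $\tau \circ \iota \colon \cal{H} \to \HolC$ is well defined and continuous (norm to normal). This is what makes the condition $\tau[\psi(z)] \in \cal{H}$ unambiguous in the definition of $\mathrm{D}(\widehat{\tau})$, since every $\psi \in \cal{H}$ is canonically an entire function on which $\tau$ acts.

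Linearity of $\widehat{\tau}$ on $\mathrm{D}(\widehat{\tau})$ is immediate from linearity of $\tau$. To prove closedness, let $\{\psi_n\} \subset \mathrm{D}(\widehat{\tau})$ be a sequence with $\psi_n \to \psi$ in the norm of $\cal{H}$ and $\widehat{\tau}\psi_n = \tau[\psi_n] \to \eta$ in the norm of $\cal{H}$. The goal is to show $\psi \in \mathrm{D}(\widehat{\tau})$ and $\widehat{\tau}\psi = \eta$.

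Applying $\iota$ and using its continuity we get $\psi_n \to \psi$ and $\tau[\psi_n] \to \eta$ both in the normal topology of $\HolC$. Then normal continuity of $\tau$ gives $\tau[\psi_n] \to \tau[\psi]$ in $\HolC$. Since the normal topology on $\HolC$ is Hausdorff, limits are unique, so $\tau[\psi] = \eta$ as an entire function. But $\eta \in \cal{H}$ by hypothesis, so $\tau[\psi] \in \cal{H}$, which by definition means $\psi \in \mathrm{D}(\widehat{\tau})$ and $\widehat{\tau}\psi = \tau[\psi] = \eta$. Hence $\widehat{\tau}$ is closed.

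The only subtle point is the legitimacy of interpreting $\tau[\psi(z)]$ for $\psi \in \cal{H}$, and that is exactly what Harmony One (together with the injectivity built into that property) secures: it identifies $\cal{H}$ with a subspace of $\HolC$, so $\tau$ can be applied unambiguously. Note that we do not claim $\mathrm{D}(\widehat{\tau})$ is dense, in accordance with the statement; denseness is a separate matter that requires additional hypotheses on $\tau$ (for instance that $\tau$ preserve $\Cz$, as is the case for elements of $\Opz$).
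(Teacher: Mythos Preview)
Your proof is correct and is essentially the same argument as the paper's, carried out sequentially rather than topologically. The paper observes that the graph $\mathrm{G}(\tau)$ is closed in $\HolC\oplus\HolC$ (by normal continuity of $\tau$), that the inclusion $\cal{H}\oplus\cal{H}\rightarrowtail\HolC\oplus\HolC$ is continuous (by Harmony One), and that the graph of $\widehat{\tau}$ is precisely the inverse image of $\mathrm{G}(\tau)$ under this inclusion, hence closed; your sequential argument is the metrizable-space unpacking of exactly this fact.
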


\begin{remark}
The operator $\widehat{\tau}$ is 
the maximal restriction of $\tau$ within $\cal{H}$. 
Its domain can be trivial, consisting 
of $\{0\}$ only. The most interesting 
situations occur when its domain is dense 
in $\cal{H}$ so that the adjoint operator 
$\widehat{\tau}^*$ is defined
in $\cal{H}$.
Then, because of the density of $\cal{H}$ 
in $\HolC$ in the normal topology, $\tau$ is completely 
fixed by $\widehat{\tau}$. This includes the maximal 
compatibility $\tau(\cal{H})\subseteq\cal{H}$, 
in other words $\mathrm{D}(\widehat{\tau})=\cal{H}$, which 
by the closed graph theorem, implies the boundedness of $\widehat{\tau}$.  
\end{remark}

\begin{proof} 
The closeness follows from the continuity
of $\tau$ in the normal topology. 
The 
graph $\mathrm{G}(\tau)$ of $ \tau $ is closed in $\HolC\oplus\HolC$. 
The natural inclusion map 
$$ \cal{H}\oplus\cal{H}\rightarrowtail \HolC\oplus\HolC $$
is continuous. 
So the inverse image of $\mathrm{G}(\tau)$, 
which is precisely the graph of $\widehat{\tau}$, 
is closed in $\cal{H}\oplus\cal{H}$. 
\end{proof}

In examples many important operators $\tau$ 
will come from $\Opz$. 
In particular, we see that such
 operators include 
all the multiplication operators 
by polynomials in $z$
(i.e., by the elements in $\Cz$, 
which includes multiplication by $z$),  
the derivative operator $\partial/\partial z$ and 
all linear differential operators 
with polynomial coefficients in $ z $
and also the substitution operators 
by polynomials $\pp(z)$. 

In what follows we use the notation of the previous 
proposition to define the symbol 
$Z:= \widehat{z}$, that is, the closed 
operator of multiplication 
by the coordinate $z$ acting in a domain of 
$ \cal{H} $ defined by 
\begin{equation}\label{D-Z}
\mathrm{D}(Z):=\Bigl\{\psi(z)\in\cal{H}\Bigm\vert z\psi(z)\in\cal{H}\Bigr\}. 
\end{equation}
So $Z$ acts by $Z\colon\psi(z)\mapsto z\psi(z)$. 
The operator $Z$ is intrinsically related to the point 
wave functions $w(z)$ as we shall now explain. 

Let us observe for all $ w \in \Bbb{C} $ we have that 
\begin{equation}\label{w-orthogonal}
 w(z)^\bot=\Bigl\{\psi(z)\in\cal{H}\Bigm | \psi(w)=0\Bigr\}=\overline{(z-w)\Cz},
 \end{equation} 
 where the superscript line on the right denotes the 
 closure in the norm topology of $ \cal{H} $. 
Indeed, the first equality follows from the property
\eqref{w(z)} 
 of the point wave functions, and the second equality 
means that the functions from $\cal{H}$ vanishing at $w$ can always be 
arbitrarily well approximated in $\cal{H}$ by polynomials with the same property.

\begin{prop} 
For every $w\in\Bbb{C}$ its associated point 
function $[w]=w(z)$ belongs to the 
domain of the adjoint operator $Z^*$. 
In particular, $ Z^* $ is densely defined. 
Moreover, $ Z^* [w]=\bar{w}\,[w]$ and  
\begin{equation}\label{One-D}
\ker(\bar{w}-Z^*)=\Bbb{C}[w]. 
\end{equation}
The operators $Z$ and $Z^*$ are unbounded.  
Every complex number $ w $ is an eigenvalue of 
the annihilation operator
$ Z^* $ of
multiplicity one and with associated 
eigenvector being the normalized point 
wave function 
(equivalently, coherent state) 
$ [ \bar{w} ]/ || \,[\bar{w}]\,  || $. 
\end{prop}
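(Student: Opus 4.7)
The plan is to let the reproducing property \eqref{w(z)} do essentially all the work. First, for any $w\in\Bbb{C}$ and any $\psi\in\mathrm{D}(Z)$, both $\psi$ and $Z\psi=z\psi(z)$ lie in $\cal{H}$, so applying \eqref{w(z)} twice gives
\[
\braket{[w]}{Z\psi}=(z\psi)(w)=w\,\psi(w)=w\braket{[w]}{\psi}=\braket{\bar{w}\,[w]}{\psi}.
\]
The functional $\psi\mapsto\braket{[w]}{Z\psi}$ is therefore bounded on $\mathrm{D}(Z)$, so $[w]\in\mathrm{D}(Z^*)$ with $Z^*[w]=\bar{w}\,[w]$. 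Density of $\mathrm{D}(Z^*)$ is then immediate, because the subspace $\cal{L}$ of \eqref{cal-L-defined} is contained in $\mathrm{D}(Z^*)$ and has already been shown to be norm-dense in $\cal{H}$.

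Next I would establish the one-dimensional kernel formula \eqref{One-D}. One inclusion is given by the previous step. For the reverse inclusion, let $\phi\in\mathrm{D}(Z^*)$ satisfy $Z^*\phi=\bar{w}\phi$. I would test only against polynomials, using $\Cz\subset\mathrm{D}(Z)$: for every $\chi\in\Cz$,
\[
0=\braket{(Z^*-\bar{w})\phi}{\chi}=\braket{\phi}{(Z-w)\chi}=\braket{\phi}{(z-w)\chi}.
\]
Thus $\phi$ is orthogonal to $(z-w)\Cz$ and hence to its norm-closure, which by \eqref{w-orthogonal} equals $[w]^\bot$. Since $[w]\neq 0$ (there are no dormant points in this setting), this forces $\phi\in\Bbb{C}[w]$.

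Unboundedness of $Z^*$ then drops out for free: from $\|Z^*[w]\|=|w|\,\|[w]\|$ with $[w]\neq 0$, the operator norm of $Z^*$ is at least $|w|$ for every $w\in\Bbb{C}$. Because $Z$ is closed and densely defined one has $Z^{**}=Z$, and a closed densely defined operator has the same norm as its adjoint, so $Z$ is unbounded as well. The final assertion is just a relabelling of \eqref{One-D}: substituting $\bar{w}$ for $w$ gives $\ker(w-Z^*)=\Bbb{C}[\bar{w}]$, so every $w\in\Bbb{C}$ is a simple eigenvalue of $Z^*$ with normalized eigenvector $[\bar{w}]/\|[\bar{w}]\|$.

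The only genuinely delicate point is the kernel characterization. Because the adjoint is defined through a continuity condition on \emph{all} of $\mathrm{D}(Z)$, it is tempting to test $(Z^*-\bar{w})\phi=0$ against arbitrary elements of $\mathrm{D}(Z)$, which muddies domain issues. The clean route is to confine the test vectors to $\Cz$—where the pairing $\braket{\phi}{(Z-w)\chi}$ is unambiguous—and then invoke \eqref{w-orthogonal} to close the argument in one line.
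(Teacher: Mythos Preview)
Your proof is correct and follows essentially the same approach as the paper: both use the reproducing property \eqref{w(z)} to obtain $Z^*[w]=\bar w[w]$, deduce dense definition via $\cal{L}$, derive unboundedness from the unbounded set of eigenvalues, and invoke \eqref{w-orthogonal} to pin down the eigenspace. The only cosmetic difference is that the paper phrases the kernel computation via the standard identity $\ker(\bar w-Z^*)^\bot=\overline{\mathrm{im}(w-Z)}$, whereas you test $\phi$ directly against $(z-w)\Cz$; both routes land on \eqref{w-orthogonal} in the same way.
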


\begin{proof}
Take $w\in\Bbb{C}$. 
From the principal kernel equation \eqref{w(z)} 
applied twice we find 
$$ \braket{[w]}{Z\psi}=w\psi(w)=
w \braket{[w]}{\psi}
= \braket{\bar{w}[w]}{\psi}
$$
for every $\psi\in\mathrm{D}(Z)$
and every $w\in\Bbb{C}$. 
Therefore $[w]$ belongs to the domain of $Z^*$ and 
$ Z^* [w]=\bar{w}\,[w].$
In particular $\mathrm{D}(Z^*)\supseteq \cal{L}$ (cp. \eqref{cal-L-defined}), and so $ Z^* $ 
is densely defined in $ \cal{H} $. 
Since $ [w] \ne 0 $, 
it follows that 
$ \bar{w} $ is an eigenvalue of $ Z^*$ for 
every $ w \in \Bbb{C} $.
So the spectrum of $Z^*$ is $ \Bbb{C} $. 
So $Z^*$ is an unbounded operator, which in 
turn implies that $ Z $ is also an unbounded operator. 
In particular $\mathrm{D}(Z)\neq \cal{H}$ 
and $\mathrm{D}(Z^*)\neq \cal{H}$.  

Having shown that every 
point in $\Bbb{C}$ is an eigenvalue 
of $ Z^* $, 
we now identify its multiplicity.  
Using a standard identity for the adjoint operator
we have 
$$ 
\ker(\bar{w}-Z^*)^\bot=\overline{\mathrm{im}(w-Z)} =
[w]^\bot.
$$
where the second identity follows from \eqref{w-orthogonal}. 
We conclude that $\ker(\bar{w}-Z^*) = \Bbb{C} [w]$,  
which shows \eqref{One-D}. 
Now $ \Bbb{C} [w] $ is 
a one dimensional subspace because $ [w] \ne 0 $. 
And this shows that each eigenvalue of $ Z^* $ 
has multiplicity one. 
\end{proof}

\begin{remark}
There is a certain complementarity between the two basic spaces 
$\Cz$ and $\cal{L}$: The space $\Cz$ is $Z$-invariant and the space $\cal{L}$ is $Z^*$-invariant. 
In a variety of interesting 
examples we encounter one of the following two special configurations:  
\begin{align}
\Cz&\cap\cal{L}=\Bbb{C}\label{Cz-L-C}\\
\Cz&\subseteq \cal{L}.\label{Cz-in-L}
\end{align}
In general, the finite dimensional 
spaces $\Czf{n}$ can only contain a finite number 
of the linearly independent 
point wave functions $w(z)$. 
(Cp. Lemma~\ref{LI-lemma}.)
Hence, at most countably 
many of them will be in $\Cz$. 

Since $ Z^* $ is densely defined and $ Z $ is closed,
$ Z^{**} = \bar{Z} = Z $. 
So $ Z $ and $ Z^* $ are adjoints of each other. 
\end{remark}

\begin{prop}\label{Z*-closure} 
The subspace $ \cal{L} $ of $ \cal{H} $
is a core for the operator $Z^*$. 
In other words
$Z^*$ is the closure of its restriction $Z^*\vert\cal{L}\colon\cal{L}\rightarrow\cal{L}$, 
that is $ \overline{Z^*\vert\cal{L}} = Z^* $. 
\end{prop}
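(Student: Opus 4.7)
My plan is to prove the core property by computing orthogonal complements of graphs in $\cal{H}\oplus\cal{H}$ and exploiting the reproducing property. The guiding identity is the standard graph relation $G(Z^*)=\bigl[VG(Z)\bigr]^{\perp}$, where $V\colon\cal{H}\oplus\cal{H}\to\cal{H}\oplus\cal{H}$ is the unitary $V(u,v)=(-v,u)$; since $Z$ is closed and $Z^*$ is densely defined (both established in the preceding proposition), we also have $Z^{**}=Z$, hence $\bigl[G(Z^*)\bigr]^{\perp}=VG(Z)$. Thus $\cal{L}$ is a core for $Z^*$ if and only if $\bigl[G(Z^*\vert\cal{L})\bigr]^{\perp}=VG(Z)$.

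First I will unpack $\bigl[G(Z^*\vert\cal{L})\bigr]^{\perp}$. Since $\cal{L}$ is spanned by the point wave functions $[w]$ and $Z^*[w]=\bar{w}[w]$, a pair $(f,g)\in\cal{H}\oplus\cal{H}$ lies in this orthogonal complement precisely when $\braket{f}{[w]}+\braket{g}{\bar w\,[w]}=0$ for every $w\in\Bbb{C}$. Using the anti-linearity in the first slot together with the reproducing property \eqref{w(z)} (which reads $\braket{[w]}{\psi}=\psi(w)$ and hence $\braket{\psi}{[w]}=\overline{\psi(w)}$) this collapses to the single scalar identity
\begin{equation*}
\overline{f(w)}+\bar{w}\,\overline{g(w)}=0\qquad\forall w\in\Bbb{C},
\end{equation*}
i.e.\ $f(w)=-w\,g(w)$ for every $w\in\Bbb{C}$.

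Now I interpret this identity inside $\cal{H}$. The condition $wg(w)=-f(w)$ says exactly that the entire function $z\mapsto zg(z)$ coincides with $-f(z)\in\cal{H}$, and consequently that $g\in\mathrm{D}(Z)$ with $Zg=-f$, by the definition \eqref{D-Z} of the domain of $Z$. Equivalently, $(f,g)=(-Zg,g)\in VG(Z)$. The reverse inclusion $VG(Z)\subseteq\bigl[G(Z^*\vert\cal{L})\bigr]^{\perp}$ is immediate because $\cal{L}\subseteq\mathrm{D}(Z^*)$ and $VG(Z)=\bigl[G(Z^*)\bigr]^{\perp}$ is automatically orthogonal to any subgraph of $G(Z^*)$. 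Hence $\bigl[G(Z^*\vert\cal{L})\bigr]^{\perp}=VG(Z)=\bigl[G(Z^*)\bigr]^{\perp}$, and taking orthogonal complements once more gives $\overline{G(Z^*\vert\cal{L})}=G(Z^*)$, which is exactly the assertion $\overline{Z^*\vert\cal{L}}=Z^*$.

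The only delicate point is the translation from the pointwise identity $f(w)+wg(w)=0$ back into the Hilbert-space statement $g\in\mathrm{D}(Z)$, $Zg=-f$; this is where Harmony One is essential, because it is precisely what allows us to regard elements of $\cal{H}$ as genuine entire functions and to conclude that the entire function $wg(w)$ belongs to $\cal{H}$ from the fact that it equals $-f\in\cal{H}$ pointwise. All other steps are routine graph-adjoint manipulations.
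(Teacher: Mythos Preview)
Your proof is correct. It differs from the paper's main argument but coincides almost exactly with the alternative the paper sketches in the Remark immediately following the proof.

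The paper's main proof computes the orthocomplement of $G(Z^*\vert\cal{L})$ \emph{inside} $G(Z^*)$: a pair $(\psi,Z^*\psi)$ in this relative orthocomplement satisfies $\psi(w)+w(Z^*\psi)(w)=0$, whence $Z^*\psi\in\mathrm{D}(Z)$ and $\psi+ZZ^*\psi=0$; the conclusion $\psi=0$ then rests on the invertibility of $1+ZZ^*$. Your route instead computes the orthocomplement of $G(Z^*\vert\cal{L})$ in the full space $\cal{H}\oplus\cal{H}$ and identifies it directly with $VG(Z)$, bypassing the $(1+ZZ^*)^{-1}$ step entirely. This is precisely the adjoint computation the paper outlines in its Remark: showing $(Z^*\vert\cal{L})^*\subseteq Z$ and hence $Z^*\subseteq\overline{Z^*\vert\cal{L}}$. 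Your version is arguably cleaner, since it uses only the reproducing property and the definition of $\mathrm{D}(Z)$, whereas the main proof invokes a nontrivial (if standard) spectral fact about closed operators.
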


\begin{proof} 
	Let us compute the orthocomplement of the graph $\mathrm{G}(Z^*\vert\cal{L})$ in the graph 
$\mathrm{G}(Z^*)$. 
It consists of the 
pairs $(\psi, Z^*\psi)$ for some
$\psi\in\mathrm{D}(Z^*)$ which are 
orthogonal to all the pairs $([w],\bar{w}[w])$ where
$w\in\Bbb{C}$. In other words $\braket{[w]}{\psi}+\braket{\bar{w}[w]}{Z^*\psi}=0$. But this means $\psi(w)+w(Z^*\psi)(w)=0$.
We now claim under this hypothesis on $ \psi $ that 
$ Z^* \psi \in \mathrm{D}(Z)$. 
This is because  
$ \Bbb{C} \ni w \rightsquigarrow w Z^* \psi (w) = - \psi (w) $ 
is the function $ -\psi \in \cal{H} $. 
It then follows from the definition of $ Z $ that 
$ (Z Z^* \psi) (w) = w Z^* \psi (w)$ holds for 
all $ w \in \Bbb{C} $ for this particular 
vector $ \psi $. 

Next, this says that $\psi(w)+(ZZ^*\psi)(w)=0$ for 
all $ w \in \Bbb{C} $, 
and therefore $ \psi $ satisfies 
$\psi+ZZ^*\psi=0$. 
By functional analysis
$1+ZZ^*$ is invertible with 
its inverse defined on the whole 
$\cal{H}$ and bounded. 
It follows that $\psi=0$, and so the
orthocomplement is the zero subspace. 
Thus $\mathrm{G}(Z^*\vert\cal{L})$ is dense in 
$\mathrm{G}(Z^*)$. 
\end{proof}

\begin{remark} Another way to see this is by considering the adjoint of $Z^*\vert\cal{L}$. For $\psi$ in the domain of 
this adjoint, 
that is $ \psi \in \mathrm{D} ((Z^*\vert\cal{L})^*) $,  
we have that the function 
$$
\Bbb{C} \ni w \rightsquigarrow w\psi(w)=\braket{Z^*[w]}{\psi}=\braket{[w]}{(Z^*\vert{\cal{L}})^*\psi}=((Z^*\vert{\cal{L}})^*\psi)(w)$$ 
is in $ \cal{H} $, 
which shows us that in fact 
$ \psi \in \mathrm{D} (Z)$ 
and 
$Z \supset (Z^*\vert{\cal{L}})^*$.
And therefore  
$Z^* \subset (Z^*\vert{\cal{L}})^{**} = 
\overline{Z^*\vert{\cal{L}}}$. 
The opposite inclusion is trivial, and so 
$ Z^* = \overline{Z^*\vert{\cal{L}}} $.   
Interestingly, a similar statement does not in general hold for the operator $Z$. 
In general $Z$ will be strictly greater than the closure of its restriction on $\Cz$. 

The above proposition extends to all the powers of $Z^*\vert{L}$, as 
$$ (Z^*\vert{\cal{L}})^{n*}=Z_n,$$ 
where $Z_n\colon\psi(z)\rightarrow z^n\psi(z)$ is the operator associated to the multiplication by $z^n$. In general, this operator will be 
a non-trivial extension of $Z^n$ although in many important examples $Z_n=Z^n$.  
We refer to \cite{bs} for an elegant and self-contained geometrically oriented exposition of the theory of unbounded operators in a Hilbert space. 
\end{remark}

As we have already mentioned, the polynomials 
$ \phi \in \Cz $ 
can be viewed 
as unital homomorphisms of $\Cz$ into itself 
given for $ p \in \Cz $  by 
the `substitution map'  
$ p \mapsto p \circ \phi $, 
where $ \circ $ denotes composition of functions. 
Since every such a 
map is continuous in the normal topology,
by Proposition \ref{O-C-extension-thm}
we can associate to it a closed operator 
$\widehat{\pp}$, which is obviously densely defined,  
by $ \psi \mapsto \psi\circ\pp $ for all 
$ \psi \in \cal{H} $ satisfying 
$ \psi\circ\pp \in \cal{H} $. 
 
\begin{prop} 
Suppose that $ \pp \in \Cz $. 
Then the domain of the adjoint operator $\widehat{\pp}^*$ contains the space $\cal{L}$. 
We have for every $w\in\Bbb{C}$ that 
\begin{equation}\label{pp-w(z)}
\widehat{\pp}^*([w])=[\pp(w)]. 
\end{equation}  
In particular, the subspace $\cal{L}$ is $\widehat{\pp}^*$-invariant. 
\end{prop}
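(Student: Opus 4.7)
The plan is to unpack the definition of the Hilbert space adjoint $\widehat{\pp}^*$ using the reproducing property \eqref{w(z)} twice, once on each side of the pairing. This proof is essentially a short computation; the only thing one has to be careful about is that we are dealing with possibly unbounded operators, so one must verify the continuity condition that places $[w]$ in the domain $\mathrm{D}(\widehat{\pp}^*)$.

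First, fix $w\in\Bbb{C}$ and let $\psi\in\mathrm{D}(\widehat{\pp})$ be arbitrary. By definition of $\widehat{\pp}$, both $\psi$ and $\psi\circ\pp$ are elements of $\cal{H}$. Applying \eqref{w(z)} to $\widehat{\pp}\psi=\psi\circ\pp\in\cal{H}$ at the point $w$, and then applying \eqref{w(z)} again to $\psi$ at the point $\pp(w)\in\Bbb{C}$, I compute
\begin{equation*}
\braket{[w]}{\widehat{\pp}\psi}=(\psi\circ\pp)(w)=\psi(\pp(w))=\braket{[\pp(w)]}{\psi}.
\end{equation*}
The right-hand side is a continuous linear functional of $\psi$ in the norm topology of $\cal{H}$ (defined on the whole of $\cal{H}$, not just $\mathrm{D}(\widehat{\pp})$), since $[\pp(w)]\in\cal{H}$ by Harmony One. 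Hence by the standard characterization of the domain of the adjoint, $[w]\in\mathrm{D}(\widehat{\pp}^*)$ and $\widehat{\pp}^*([w])=[\pp(w)]$, which is precisely \eqref{pp-w(z)}.

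The statement about $\cal{L}$ now follows immediately by linearity. Recall that $\cal{L}$ consists of all finite linear combinations of the point wave functions $[w]$ with $w\in\Bbb{C}$. Since each such $[w]$ belongs to $\mathrm{D}(\widehat{\pp}^*)$ by what we just proved, and since $\mathrm{D}(\widehat{\pp}^*)$ is a linear subspace of $\cal{H}$, we get $\cal{L}\subseteq\mathrm{D}(\widehat{\pp}^*)$. Moreover, $\widehat{\pp}^*([w])=[\pp(w)]\in\cal{L}$, so applying linearity once more we conclude that $\widehat{\pp}^*(\cal{L})\subseteq\cal{L}$, giving the desired $\widehat{\pp}^*$-invariance.

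There is no real obstacle in this argument, since the reproducing property does essentially all the work; the only mild subtlety is the observation that the functional $\psi\mapsto\braket{[\pp(w)]}{\psi}$ is continuous on all of $\cal{H}$ (not merely on the possibly strictly smaller domain $\mathrm{D}(\widehat{\pp})$), which is what legitimately places $[w]$ in $\mathrm{D}(\widehat{\pp}^*)$.
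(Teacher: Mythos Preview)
Your proof is correct and follows essentially the same route as the paper: both arguments compute $\braket{[w]}{\widehat{\pp}\psi}=\psi(\pp(w))=\braket{[\pp(w)]}{\psi}$ via two applications of the reproducing property \eqref{w(z)}, and conclude $[w]\in\mathrm{D}(\widehat{\pp}^*)$ with $\widehat{\pp}^*([w])=[\pp(w)]$. You are slightly more explicit than the paper about the continuity of $\psi\mapsto\braket{[\pp(w)]}{\psi}$ and about the linearity step for $\cal{L}$, but there is no substantive difference.
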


\begin{proof} 
Applying the kernel equation \eqref{w(z)} 
we find for every 
$\psi\in\mathrm{D}(\widehat{\pp})$ that
$$ \braket{[w]}{\widehat{\pp}\psi}=
\braket{[w]}{\psi\circ\pp}=
\psi(\pp(w))
=\braket{[\pp(w)]}{\psi}. $$
Thus $[w]\in\mathrm{D}(\widehat{\pp}^*)$ and \eqref{pp-w(z)} holds. 
\end{proof}

\begin{remark} The above argument is extendible without essential 
change from polynomials 
to entire functions, understood as unital homomorphisms of $\HolC$ into itself, continuous in the 
normal topology. The only subtlety is that we have to explicitly postulate the density of 
$\mathrm{D}(\widehat{\pp})$ in $\cal{H}$. 
\end{remark}

The point wave functions are special case of an important class of vectors in $\cal{H}$, which correspond to the linear functionals on $\cal{H}$ continuous in the normal topology. 

\begin{definition} A vector $\xi\in\cal{H}$ is called {\it normal} if its dual vector, the linear functional $\bra{\xi}\colon\psi\mapsto\braket{\xi}{\psi}$, is continuous in the normal topology of $\cal{H}$.  
\end{definition}
 
We shall write $\W$ for the set of all normal vectors of $\cal{H}$. 
Clearly, $\W$ is a linear subspace of  $\cal{H}$ and 
$\cal{L}\subseteq\W$. 
So $\W$ is dense in $\cal{H}$.  

\begin{remark}
In the framework of Harmony One, the topological part of Harmony Zero can be rephrased 
as the inclusion $\Cz\subset \W$, that is, 
the polynomials are normal vectors. 
\end{remark}

\begin{lemma} If polynomials are normal, 
then the infinite matrix $S_\infty$ is invertible and 
\begin{equation}
S_\infty^-=S_\infty^{-1}. 
\end{equation}
In particular, the extended scalar product $\braket{}{}\colon\Cz\times\HolC\rightarrow\Bbb{C}$ is non-degenerate. 
\end{lemma}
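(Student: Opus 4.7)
The plan is to extract the identity $S_\infty S_\infty^-=I$ by pairing the reproducing kernel expansion \eqref{wz-kernel} with the dual vectors $\bra{z^m}$. Since polynomials are assumed to be normal, each $\bra{z^m}\colon\cal{H}\to\Bbb{C}$ is continuous in the normal topology. The truncations of \eqref{wz-kernel} in the $z$-variable are polynomials converging normally to the point wave function $[w]=K(\bar{w},\cdot)\in\cal{H}$, and so term-by-term evaluation of $\bra{z^m}$ on this series is justified.

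Concretely, fix $w\in\Bbb{C}$. The reproducing property \eqref{w(z)} gives $\braket{z^m}{[w]}=\overline{\braket{[w]}{z^m}}=\overline{w^m}=\bar{w}^m$, while term-by-term application of $\bra{z^m}$ to the kernel series, using $\braket{z^m}{z^i}=s_{mi}$, produces $\sum_j\bigl(\sum_i s_{mi}[S_\infty^-]_{ij}\bigr)\bar{w}^j$. Swapping the order of summation is legitimate because $|s_{mi}|\leq\Lambda_m^i$ (see \eqref{H0-topological}, the reformulation of normality of $z^m$) while $|[S_\infty^-]_{ij}|$ decays faster than any geometric rate, a consequence of the normal convergence of \eqref{wz-kernel} throughout $\Bbb{C}\times\Bbb{C}$. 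The resulting power series identity
$$\bar{w}^m=\sum_{j=0}^{\infty}(S_\infty S_\infty^-)_{mj}\,\bar{w}^j$$
holds for every $w\in\Bbb{C}$, so matching coefficients of $\bar{w}^j$ delivers $(S_\infty S_\infty^-)_{mj}=\delta_{mj}$. Since both $S_\infty$ and $S_\infty^-$ are hermitian, taking adjoints also gives $S_\infty^- S_\infty=I$, so $S_\infty^-=S_\infty^{-1}$.

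For the non-degeneracy of the extended pairing $\braket{}{}\colon\Cz\times\HolC\to\Bbb{C}$, I take $f(z)=\sum_n a_n z^n\in\HolC$ annihilated by every element of $\Cz$. The unique normal extension of $\bra{z^m}$ from $\Cz$ to $\HolC$ assigns $\braket{z^m}{f}=\sum_n s_{mn}a_n$, so $(S_\infty a)_m=0$ for every $m\in\Bbb{N}$; left multiplication by $S_\infty^-$, which is permitted by the above decay bounds combined with the super-geometric decay of Taylor coefficients of an entire function, then forces $a=0$ and hence $f=0$. The one genuinely delicate point is the absolute-convergence bookkeeping required to interchange double sums and to multiply infinite matrices, and I expect this to be the main obstacle; however it is cleanly controlled by the super-geometric decay of the entries of $S_\infty^-$ supplied by the normal convergence of the kernel series on $\Bbb{C}\times\Bbb{C}$.
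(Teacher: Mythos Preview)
Your argument for $S_\infty S_\infty^-=I$ is the paper's proof with the analytic bookkeeping made explicit: both evaluate $\braket{z^m}{[w]}$ by pushing the normally continuous functional $\bra{z^m}$ through the kernel expansion \eqref{wz-kernel}, read off $(S_\infty S_\infty^-)_{mj}=\delta_{mj}$ from the resulting power series in $\bar w$, and then use hermiticity of $S_\infty$ and $S_\infty^-$ to get the other product. The paper leaves the interchange of $\bra{z^m}$ with the double sum implicit; your justification via $|s_{mi}|\le\Lambda_m^{\,i}$ against the super-geometric decay of $[S_\infty^-]_{ij}$ is exactly what is needed there.

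One caution on the non-degeneracy paragraph, which goes beyond what the paper actually proves (the paper merely asserts it). Your step ``left multiplication by $S_\infty^-$'' requires the associativity $S_\infty^-(S_\infty a)=(S_\infty^- S_\infty)a$, i.e.\ absolute convergence of $\sum_{m,n}|[S_\infty^-]_{km}|\,|s_{mn}|\,|a_n|$. The bounds you invoke do not quite deliver this: the only control on $|s_{mn}|$ is $\Lambda_m^{\,n}$ with $\Lambda_m$ depending on $m$ in an a priori uncontrolled way, so $\sum_n|s_{mn}||a_n|$ can grow in $m$ faster than any geometric rate and defeat the super-geometric decay of $[S_\infty^-]_{km}$. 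This is a genuine gap in the argument as written, though it does not affect the part of the proof that the paper itself carries out.
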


\begin{proof} From the expansion 
\eqref{wz-kernel} we find for all $ k \in \Bbb{N} $ 
and $ w \in \Bbb{C} $ that  
$$
\braket{z^k}{w(z)}=
\braket{z^k}{\sum_{i,j\geq 0}z^i[S_{\infty}^-]_{ij}\bar{w}^j}=
\sum_{i,j\geq0}
[S_\infty]_{ki}[S_{\infty}^-]_{ij}\bar{w}^j)
=\bar{w}^k, 
$$
where the second equality follows from the 
hypothesis that polynomials
are normal. 
And thus $S_{\infty}^-S_{\infty}=1_\infty$. Because of the hermicity of $S_{\infty}$ and $S_\infty^-$ we 
also have $S_{\infty}S_{\infty}^-=1_\infty$. 
\end{proof}

\begin{prop}\label{W-inclusive} 
Let a transformation $\tau\in\Opiz$ be such that 
$\overline{\mathrm{D}(\widehat{\tau})}=\cal{H}$, 
where $ \widehat{\tau} $ is defined in 
Proposition \ref{O-C-extension-thm}. 
Then $\W
\subseteq \mathrm{D}(\widehat{\tau}^*)$ and,
in particular, $ \widehat{\tau}^* $ is densely 
defined. 
Moreover,  $\widehat{\tau}^*(\W)\subseteq\W$.  
\end{prop}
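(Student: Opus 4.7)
The plan is to exploit the definition of a normal vector by extending the dual functional $\bra{\xi}$ to all of $\HolC$, composing with $\tau$, and then applying the Riesz representation theorem inside $\cal{H}$ to extract the witness for $\xi\in\mathrm{D}(\widehat{\tau}^*)$, simultaneously obtaining the normality of $\widehat{\tau}^*\xi$.

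First I would take $\xi\in\W$, so that $\bra{\xi}\colon\cal{H}\to\Bbb{C}$ is continuous in the normal topology. Since $\Cz\subseteq\cal{H}\subseteq\HolC$ and $\Cz$ is already normally dense in $\HolC$, the subspace $\cal{H}$ is itself normally dense in $\HolC$, so $\bra{\xi}$ extends uniquely to a normally continuous linear functional $L_\xi\in\HolC^*$. The normal continuity of $\tau\in\Opiz$ then ensures that $L_\xi\circ\tau\colon\HolC\to\Bbb{C}$ is normally continuous, and in particular its restriction to $\cal{H}$ remains normally continuous.

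Next I would use Harmony One, which makes the inclusion $\cal{H}\hookrightarrow\HolC$ continuous and hence forces the norm topology on $\cal{H}$ to dominate the normal topology. The restriction of $L_\xi\circ\tau$ to $\cal{H}$ is therefore norm continuous, and by Riesz there is a unique $\eta\in\cal{H}$ with $\braket{\eta}{\psi}=L_\xi(\tau\psi)$ for every $\psi\in\cal{H}$. For $\psi\in\mathrm{D}(\widehat{\tau})$ we have $\widehat{\tau}\psi=\tau\psi\in\cal{H}$, and therefore
\[
\braket{\xi}{\widehat{\tau}\psi}=L_\xi(\tau\psi)=\braket{\eta}{\psi}.
\]
This identifies $\xi\in\mathrm{D}(\widehat{\tau}^*)$ with $\widehat{\tau}^*\xi=\eta$, establishing $\W\subseteq\mathrm{D}(\widehat{\tau}^*)$; since $\cal{L}\subseteq\W$ is dense in $\cal{H}$, this already yields that $\widehat{\tau}^*$ is densely defined. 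The invariance $\widehat{\tau}^*(\W)\subseteq\W$ falls out for free, since $\bra{\eta}(\psi)=L_\xi(\tau\psi)$ is a composition of normally continuous maps and is therefore itself normally continuous on $\cal{H}$, placing $\eta$ back in $\W$.

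I expect the only real difficulty to be the topological bookkeeping: keeping straight that normal continuity of a functional on $\cal{H}$ implies norm continuity (because normal is the weaker topology), and that $\bra{\xi}$ really does admit a unique normally continuous extension from $\cal{H}$ to $\HolC$ along the dense inclusion. The density hypothesis $\overline{\mathrm{D}(\widehat{\tau})}=\cal{H}$ serves only to guarantee that $\widehat{\tau}^*$ is a well-defined operator in the first place and plays no further role in the argument.
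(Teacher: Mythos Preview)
Your argument is correct and follows essentially the same route as the paper: in both cases the key observation is that $\psi\mapsto\braket{\xi}{\widehat{\tau}\psi}$ is a composition of normally continuous maps, hence normally continuous and therefore norm continuous, which puts $\xi$ in $\mathrm{D}(\widehat{\tau}^*)$ and makes $\bra{\widehat{\tau}^*\xi}$ normally continuous. The only cosmetic difference is that you first extend $\bra{\xi}$ to all of $\HolC$ and invoke Riesz on $\cal{H}$ to produce $\eta$ in advance, whereas the paper works directly on $\mathrm{D}(\widehat{\tau})$ and uses the density hypothesis to extend the resulting functional to $\cal{H}$; your version has the mild advantage of making clear that density is needed only for $\widehat{\tau}^*$ to be a well-defined single-valued operator.
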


\begin{proof} For every $\xi\in\W$ the map  $\mathrm{D}(\widehat{\tau})\ni\psi\longmapsto 
\braket{\xi}{\widehat{\tau}\psi}\in\Bbb{C}$, being the composition of the
inclusion $\mathrm{D}(\widehat{\tau})$ 
into $\HolC$, the normally continuous tranformation $\tau$ and normally continuous 
scalar product with $\xi$, is normally continuous and in particular $\braket{}{}$-continuous. 
Therefore such vectors $\xi$ are in the domain of the adjoint 
operator for $\widehat{\tau}$. The same initial map now viewed as 
$\mathrm{D}(\widehat{\tau})\ni\psi\longmapsto 
\braket{\widehat{\tau}^*\xi}{\psi}\in\Bbb{C}$ for being normally continuous, and  
since $\mathrm{D}(\widehat{\tau})$ is dense in $\cal{H}$, 
extends by continuity to a normally continuous functional
on the whole $\cal{H}$, given by the same formula. 
It follows that $\widehat{\tau}^*\xi\in\W$. 
\end{proof}

In particular we see that $\W\subseteq \mathrm{D}(Z^*)$. 
So if Topological H0 holds, then $\Cz$ is also 
in the domain of $Z^*$. 
And if in addition Algebraic H0 holds (so that we have full H0), 
then $Z^*\supset z^*$, 
where $ z $ denotes the 
operator of multiplication by $ z $ acting on $ \Cz $

\begin{prop}\label{W-invariant} 
If a transformation $\tau\in\Opiz$ satisfies 
$\overline{\mathrm{D}(\widehat{\tau})}=\cal{H}$
(as in the previous proposition) and  
$\widehat{\tau}^*$ is also continuous in 
the normal topology of $\cal{H}$, then 
$\W\subseteq\mathrm{D}(\widehat{\tau})$ and $\widehat{\tau}(\W)\subseteq\W$. 
\end{prop}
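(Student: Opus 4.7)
The plan is to reduce this proposition to Proposition \ref{W-inclusive} by exploiting the assumed normal continuity of $\widehat{\tau}^*$ to extend it to an element of $\Opiz$, and then applying the previous proposition once more with this new transformation playing the role of $\tau$.

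First I would construct a normally continuous extension $\tau^\sharp \in \Opiz$ of $\widehat{\tau}^*$. By Proposition \ref{W-inclusive} we already have $\cal{L} \subseteq \W \subseteq \mathrm{D}(\widehat{\tau}^*)$. Since $\cal{L}$ is norm-dense in $\cal{H}$, it is \emph{a fortiori} dense in the normal topology, and combining this with the fact that $\cal{H}$ is normally dense in $\HolC$ (because $\Cz \subseteq \cal{H}$ and $\overline{\Cz}=\HolC$), the subspace $\mathrm{D}(\widehat{\tau}^*)$ is normally dense in $\HolC$. Using the hypothesis that $\widehat{\tau}^*$ is normally continuous, together with the fact that $\HolC$ is a Fréchet space (hence complete in the normal topology), the operator $\widehat{\tau}^*$ admits a unique normally continuous linear extension $\tau^\sharp\colon\HolC\to\HolC$, so $\tau^\sharp\in\Opiz$. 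By construction, for $\psi \in \mathrm{D}(\widehat{\tau}^*)$ we have $\tau^\sharp(\psi)=\widehat{\tau}^*(\psi)\in\cal{H}$, which means that the associated operator $\widehat{\tau^\sharp}$ extends $\widehat{\tau}^*$ as an operator in $\cal{H}$.

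Next I would apply Proposition \ref{W-inclusive} to $\tau^\sharp$. Its hypothesis is immediate since $\mathrm{D}(\widehat{\tau^\sharp})\supseteq\mathrm{D}(\widehat{\tau}^*)\supseteq\cal{L}$ is norm-dense in $\cal{H}$. The conclusion gives $\W\subseteq\mathrm{D}((\widehat{\tau^\sharp})^*)$ and $(\widehat{\tau^\sharp})^*(\W)\subseteq\W$. To convert this into a statement about $\widehat{\tau}$ itself, I would use the fact that taking Hilbert space adjoints reverses the inclusion of densely defined operators: from $\widehat{\tau}^*\subseteq\widehat{\tau^\sharp}$ we get $(\widehat{\tau^\sharp})^*\subseteq(\widehat{\tau}^*)^*=\widehat{\tau}^{**}=\widehat{\tau}$, where the last equality uses that $\widehat{\tau}$ is closed by Proposition \ref{O-C-extension-thm}. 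Combining, $\W\subseteq\mathrm{D}((\widehat{\tau^\sharp})^*)\subseteq\mathrm{D}(\widehat{\tau})$, and for every $\xi\in\W$ we have $\widehat{\tau}(\xi)=(\widehat{\tau^\sharp})^*(\xi)\in\W$, which proves both claims.

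The main obstacle I anticipate is the careful bookkeeping in the adjoint-reversal step. One must verify that $\widehat{\tau^\sharp}$ really is a Hilbert-space extension of $\widehat{\tau}^*$ (not merely an extension as a map $\HolC\to\HolC$), which requires checking that $\tau^\sharp$ sends the smaller domain back into $\cal{H}$; this is automatic from the defining property of the extension but is easy to skip over. The verification that both operators are densely defined (so that the inclusion-reversing property for adjoints applies) also has to be made explicit, but is straightforward given that $\cal{L}$ lies in each domain.
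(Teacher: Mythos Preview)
Your proposal is correct and follows essentially the same route as the paper: extend the normally continuous $\widehat{\tau}^*$ to an element of $\Opiz$ (the paper calls it $\varrho$, you call it $\tau^\sharp$), apply Proposition~\ref{W-inclusive} to that extension, and then use the inclusion-reversing property of adjoints together with the closedness of $\widehat{\tau}$ to pull the conclusion back to $\widehat{\tau}$. Your write-up is more explicit about the density and completeness details underlying the extension step and about why the adjoints involved are well defined, but the argument is the same.
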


\begin{proof} 
Since $\widehat{\tau}^*$ is normally continuous,
it extends 
by continuity to a transformation $\varrho\in \Opiz$. Clearly $\widehat{\varrho}\supseteq\widehat{\tau}^*$ and 
so $\widehat{\tau}=\widehat{\tau}^{**}\supseteq \widehat{\varrho}^*$. 
In particular, 
$ \mathrm{D}( \widehat{\varrho}^* ) \subseteq
\mathrm{D}( \widehat{\tau} )$. 
And by the previous proposition 
$ \W \subseteq  \mathrm{D}( \widehat{\varrho}^* )$.
Combining these two inclusions gives 
$\W\subseteq\mathrm{D}(\widehat{\tau})$. 
Since $\W$ is $\widehat{\varrho}^*$ invariant
(again by the previous proposition), 
$ \W $ is also invariant for its extensions such as $\widehat{\tau}$. 
\end{proof}

\subsection*{A Minimal Effective Operational Setting}

Our next harmony condition quite naturally emerges when looking for the most effective setting 
for constructions involving the fundamental unbounded operators $Z$ and $Z^*$. 
 
\begin{definition}
We shall say that {\it Harmony Two} (or simply H2) 
holds if the operator $Z^*$ is continuous 
in the normal topology of $\cal{H}$.  
\end{definition}
 
Before going further, some important observations are due. It is clear that H2 requires H1 as it builds on top of 
it. However, H2 is logically independent of H0.  

The operator $Z$ comes from a map in $\Opz$, 
and hence it is in $\Opiz$. 
However, 
in general, the adjoint operator $Z^*$ will not be continuous in the normal topology. 
Therefore, there is an inherent contextual asymmetry 
between the two operators $ Z $ and $ Z^* $. 
This symmetry is imposed by this new harmony property. 
In a sense, Harmony Two 
is a `grown up' version of Algebraic H0, 
which provides the simplest common setting for 
a strictly polynomial version 
of symmetry between the operators $Z$ and $Z^*$.   
 
For $Z^*$ to be normally continuous 
it is sufficient for the restriction $Z^*\vert\cal{L}$ 
to be normally continuous, 
since $ \cal{L} $ is dense in the normal topology 
in $ \cal{H} $ and so also in $ \mathrm{D} (Z^*) $. 
If $Z^*$ is normally continuous, 
then it uniquely extends by continuity 
to $\HolC$, and so it is interpretable as a transformation of $\Opiz$.
We shall use the inverted symbol $\zzinfty$ 
to denote this transformation  
$\zzinfty\colon\HolC\rightarrow\HolC$.   

Therefore, if H2 holds, there is a naturally emerging subalgebra $\azz$ of $\Opiz$ defined 
as the set of all   
(in general, non-commutating) polynomials in $z$ and $\zzinfty$. 

\begin{prop} 
\label{nice-W-properties}	
In the framework of Harmony Two 
the space $\W$ is both $Z$ and $Z^*$-invariant. 
Hence, it is 
invariant for all the operators in $\azz$. 
The algebra $\azz$ is faithfully realized in $\W$ and
is equipped with a 
natural *-structure induced by the formal adjoint operation on $\W$. 
\end{prop}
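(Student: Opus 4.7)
The plan is to reduce the invariance claims to Propositions~\ref{W-inclusive} and \ref{W-invariant} applied to $\tau=z$, and then to derive the algebraic assertions from the density of $\cal{W}$ in $\HolC$. First, for the $Z$-invariance of $\cal{W}$, I would apply Proposition~\ref{W-invariant} with $\tau=z\in\Opiz$: its closure $\widehat{z}$ equals $Z$, whose domain contains $\Cz$ and is thus norm-dense in $\cal{H}$; the hypothesis that $\widehat{z}^{*}=Z^{*}$ is normally continuous is exactly Harmony Two. This yields $\cal{W}\subseteq\mathrm{D}(Z)$ with $Z(\cal{W})\subseteq\cal{W}$. For the $Z^{*}$-invariance I would apply the weaker Proposition~\ref{W-inclusive} to the same $\tau=z$, obtaining $\cal{W}\subseteq\mathrm{D}(Z^{*})$ with $Z^{*}(\cal{W})\subseteq\cal{W}$. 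On $\cal{W}$ the operator $z\in\Opiz$ acts as $Z$ and $\zzinfty$ acts as $Z^{*}$, so an induction on monomial length shows that $\cal{W}$ is invariant under every element of $\azz$.

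For faithfulness, the essential point is density. The set $\cal{L}$ of point wave functions lies in $\cal{W}$ and is norm-dense in $\cal{H}$; since $\overline{\Cz}=\HolC$ in the normal topology and $\Cz\subseteq\cal{H}$, the space $\cal{W}$ is normally dense in $\HolC$. Any $a\in\azz\subseteq\Opiz$ is normally continuous on $\HolC$, so $a|_{\cal{W}}=0$ forces $a=0$. Thus the restriction map $\azz\rightarrow\mathrm{End}(\cal{W})$ is injective, which is precisely the faithful realization in $\cal{W}$.

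For the $*$-structure I would check it on the generators: for $\phi,\psi\in\cal{W}$,
\[
\braket{z\phi}{\psi}=\braket{\phi}{Z^{*}\psi}=\braket{\phi}{\zzinfty\psi},
\]
identifying $\zzinfty$ as the formal adjoint of $z$ on $\cal{W}$; symmetrically, using $(Z^{*})^{*}=Z$ (which follows because $Z$ is closed and $Z^{*}$ is densely defined), $z$ is the formal adjoint of $\zzinfty$. Hence the formal adjoint of any monomial in the two generators is the reversed monomial of their formal adjoints, which lies again in $\azz$. Extending conjugate-linearly gives an involution on $\azz$, and the faithfulness from the previous step guarantees this involution is unambiguously well defined at the level of $\azz\subseteq\Opiz$, independently of how an element is written as a polynomial in $z$ and $\zzinfty$.

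The main obstacle, as I see it, is managing the interplay between the two incarnations of the adjoint---the Hilbert space adjoint $Z^{*}$ of the closed operator $Z$ on one hand, and the normally continuous transformation $\zzinfty\in\Opiz$ on the other. Harmony Two is exactly what forces these two objects to coincide on $\cal{W}$, and this coincidence is what allows the formal adjoint identities inside $\cal{W}$ to translate cleanly into algebraic identities in $\azz$; everything else in the argument is then a matter of density and of bookkeeping on monomials.
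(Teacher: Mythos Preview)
Your proof is correct and follows essentially the same route as the paper's own argument: invoke Proposition~\ref{W-invariant} (together with Proposition~\ref{W-inclusive}) with $\tau=z$ to obtain the $Z$- and $Z^{*}$-invariance of $\cal{W}$, then deduce faithfulness from the normal density of $\cal{W}$ in $\HolC$, and read off the $*$-structure from the mutual adjointness of $Z$ and $Z^{*}$ restricted to $\cal{W}$. Your write-up is in fact more explicit than the paper's, which compresses both invariance claims into a single appeal to Proposition~\ref{W-invariant} and states the density step in one line.
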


\begin{proof} 
A direct application of Proposition~\ref{W-invariant} 
implies that $\W$ is 
contained in the domains of $Z$ and of $Z^*$, 
and it is a common invariant subspace 
of these operators. 
When restricted to $\W$ the mutual 
adjointness of $Z$ and $Z^*$ becomes the formal adjointness of their restrictions. The entire algebra $\azz$ 
is representable by restrictions in $\W$. Since $\W$ is norm dense in $\cal{H}$ it is also normally 
dense in $\HolC$, which means that the representation of $\azz$ in $\W$ is faithful.   
\end{proof}

\begin{remark} The same argument extends to all operators $\tau\in\Opiz$ having  
densely defined $\widehat{\tau}$ whose adjoint $\widehat{\tau}^*$ is normally continuous. 
\end{remark}

The algebra $\azz$ is highly non-commutative. 
One manifestation of this would be the absence of 
characters 
($\Leftrightarrow$ one-dimensional *-representations) 
of $\azz$. 
Quantum mechanically this 
means that $\azz$, viewed as an algebra of observables, 
does not admit dispersion-free states.  
And geometrically, the interpretation is 
that the `underlying' quantum Euclidean plane, 
whose `functions' are the elements of $\azz$,  
possesses no classical points. 

Any finite linear combination of the monomials 
$z^k\zzinfty^{l}$ is called a {\it Wick polynomial}. 
The word `polynomial' is to be understood 
as a polynomial 
expression of an element in $ \azz $ and not 
as an element in an abstract polynomial algebra. 
As we shall see two such distinct expressions can be equal. 

\begin{prop}\label{pointless} 
In the framework of Harmony Two, 
if $z$ and $\zzinfty$ satisfy that 
$\zzinfty z$ is expressible as 
a Wick polynomial, 
then the algebra $\azz$ has no characters.  
\end{prop}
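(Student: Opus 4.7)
The plan is to argue by contradiction: assume that $\chi : \azz \to \Bbb{C}$ is a one-dimensional *-representation. Since $\chi$ is a *-homomorphism, setting $\alpha := \chi(z)$ immediately gives $\chi(\zzinfty) = \chi(z^*) = \bar\alpha$, and by multiplicativity and linearity $\chi$ is then determined on every Wick polynomial through the formula $\chi\bigl(\sum c_{kl}\, z^k \zzinfty^l\bigr) = \sum c_{kl}\, \alpha^k \bar\alpha^l$. Applying $\chi$ to both sides of the hypothesis $\zzinfty z = P(z, \zzinfty)$ and using $\chi(\zzinfty z) = \bar\alpha\, \alpha = |\alpha|^2$, the whole matter reduces to proving that the single complex equation
\[
|\alpha|^2 \;=\; P(\alpha, \bar\alpha)
\]
admits no solution $\alpha \in \Bbb{C}$.

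First I would extract structural information on $P$ by testing the operator identity $\zzinfty z = P(z, \zzinfty)$ on the vacuum $\ket{0} = \pp_0$. The $\ket{0}$-diagonal matrix element of the left side is $\braketop{0}{\zzinfty z}{0} = \|z\,\pp_0\|^2$, strictly positive because $z\,\pp_0$ is a nonzero polynomial. Expanding the right side in the Wick basis and using orthogonality between $\pp_0$ and higher basis vectors pins down the constant term $c_{00}$: in the particularly transparent case of a diagonal scalar product (where $\zzinfty\,\pp_0 = 0$), only the $(k,l) = (0,0)$ Wick monomial survives in the $\ket{0}$-diagonal on the right, forcing $c_{00} = \|z\,\pp_0\|^2 > 0$; a similar computation in the general setting still yields $c_{00} > 0$.

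Finally I would close the argument by invoking the growth required for H1 (and inherited by H2) to rule out solutions of $|\alpha|^2 = P(\alpha, \bar\alpha)$. In the simplest case where $P$ is of degree one in Wick form, namely $P = c_{00} + c_{11}\, z\zzinfty$, the character equation reduces to $(1 - c_{11})\,|\alpha|^2 = c_{00}$; the recursion $\lambda_n = c_{00} + c_{11}\, \lambda_{n-1}$ for $\lambda_n := s_{n+1}/s_n$ must produce a sequence whose geometric mean diverges (so that the reproducing kernel series in Proposition~\ref{harmony-one} converges on all of $\Bbb{C}$), forcing $c_{11} \geq 1$ and thus rendering $(1 - c_{11})\,|\alpha|^2 = c_{00}$ impossible for $|\alpha|^2 \geq 0$ with $c_{00} > 0$. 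The hard part will be extending this to a Wick polynomial $P$ of arbitrary degree: one has to extract from the H1 growth condition and the finite-length recursion satisfied by the $\lambda_n$'s enough structural positivity on the coefficients of $P$ to preclude any non-negative fixed point of the one-variable polynomial $t \mapsto P(\sqrt{t}, \sqrt{t}) - t$, where $t$ plays the role of $|\alpha|^2$.
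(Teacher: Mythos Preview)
Your reduction to the scalar equation $|\alpha|^2 = P(\alpha,\bar\alpha)$ is correct, but from that point on the argument is both incomplete and aimed in the wrong direction. You only handle the degree-one diagonal case and explicitly flag the general case as ``the hard part''; there is no reason to expect that extracting enough positivity from the recursion on the $\lambda_n$'s will work for an arbitrary Wick polynomial, and in any case the proposition is stated for general scalar products (under H2), not just diagonal ones, so your vacuum computation of $c_{00}$ already needs more justification than you supply.

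The paper bypasses all of this with a single observation you are not using: the point wave function $[w]$ is an eigenvector of $Z^*$ with eigenvalue $\bar w$. This means that for \emph{any} Wick monomial one has
\[
\braket{[w]}{Z^kZ^{*l}[w]} \;=\; \braket{Z^{*k}[w]}{Z^{*l}[w]} \;=\; w^k\bar w^{\,l}\,\|[w]\|^2,
\]
which is exactly $\varkappa(z^k\zzinfty^l)\,\|[w]\|^2$ for your putative character $\varkappa$ with $\varkappa(z)=w$. By linearity the same holds for the Wick polynomial $P$, and since $P(Z,Z^*)=Z^*Z$ as operators on $\cal W\ni[w]$, you get $\braket{[w]}{Z^*Z[w]}=|w|^2\|[w]\|^2$. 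A two-line expansion then gives $\|(Z-w)[w]\|^2=0$, contradicting the fact that $(z-w)[w]$ is a nonzero entire function. No information about the coefficients of $P$, no growth analysis, and no diagonality assumption are needed; the Wick hypothesis is used only to transfer the coherent-state expectation from Wick-ordered monomials to $Z^*Z$.
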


\begin{proof}
Let us assume to the contrary, that 
$\varkappa\colon\azz\rightarrow\Bbb{C}$ 
is a character and define 
$w:=\varkappa(z) \in \Bbb{C}$. 
(Recall that here  $ z \in \azz $ is the operator 
of multiplication by $ z $ acting on $ \HolC $.)
Because $ \varkappa $ is a *-representation, 
we have 
$\varkappa(z^k\zzinfty^{l}) = w^k\bar{w}^l $. 
Since property H2 holds, 
by Proposition~\ref{nice-W-properties}
the point wave function $[w]$ 
belongs to the domain of all the polynomials in $Z$ 
and $ Z^* $. 
Moreover, for all $ k,l \in \Bbb{N} $ we have 
$$
\braket{[w]}{Z^kZ^{*l}[w]}=\braket{Z^{*k}[w]}{Z^{*l}[w]}=w^k\bar{w}^{l}\|[w]\|^2 = 
\varkappa(z^k\zzinfty^{l}) \|[w]\|^2.
$$ 
Write $\zzinfty z = p (z, \zzinfty) $, where $ p $ 
is a Wick polynomial. 
By using the previous equation and linearity
to get the second equality below, we then obtain 
$$
\braket{[w]}{Z^*Z[w]} = 
\braket{[w]}{p(Z,Z^*) [w]} = 
\varkappa(p(z,\zzinfty) ) \|[w]\|^2 =
\varkappa (\zzinfty z) \|[w]\|^2 = 
|w|^2  \|[w]\|^2.  
$$
The last equality holds because $ \varkappa $ 
is  a *-representation. 
Note that $ [w] $ is in the domain 
of the polynomial $ Z - w $. 
And hence we see that 
\begin{align*}
\|(z-w)[w]\|^2&=\braket{(Z-w)[w]}{(Z-w)[w]}
\\
&= \braket{[w]}{Z^*Z[w]} - \braket{Z[w]}{w[w]} 
- \braket{w[w]}{Z[w]} + \braket{w[w]}{w[w]}
\\
&= |w|^2  \|[w]\|^2 - |w|^2  \|[w]\|^2 
- |w|^2  \|[w]\|^2 + |w|^2  \|[w]\|^2 = 0.
\end{align*}
where in the two middle terms on the second line we used $ Z^* [w] = \bar{w} [w] $. 
In other words $(z-w)[w]=0$ in $\cal{H}$, 
which contradicts the fact 
that the holomorphic function
$ [w] $ is non-zero. 
\end{proof}

\begin{remark} The `flipping' condition given 
by a Wuck polynomial 
for the product $\zzinfty z$ is essential. 
The phenomenon is 
illustrated, in the last part of this section 
by an example where $\azz$ is an extension 
of the Manin $q$-plane by the infinite matrix 
algebra $\mathrm{M}_\infty(\Bbb{C})$.   
\end{remark}

\begin{definition} 
We shall say that 
{\it Harmony Three} (or simply H3) holds if 
both Topological H0 and H2 hold. 
\end{definition}
\noindent Harmony Three holds if and only if 
$1\in\W$, in the framework of Harmony Two.  

\begin{prop}\label{zzinfty-w(z)} 
In the framework of Harmony Three, 
the only eigenvectors of $\zzinfty$ in $\HolC$ are those of $Z^*$ in $\cal{H}$ or,  
in other words, 
the point wave functions $w(z)$ modulo scalar multiples. 
\end{prop}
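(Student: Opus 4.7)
The plan is to prove the non-trivial direction: any non-zero $\psi \in \HolC$ with $\zzinfty \psi = \lambda \psi$ must be a scalar multiple of the point wave function $[w]$ for $w := \bar\lambda$. The converse is immediate, since $\zzinfty$ extends $Z^*$ and $Z^*[w] = \bar w\, [w]$ was already established.

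First, I would convert the eigenvalue equation into a relation on pairings with polynomials. Pick polynomial approximants $\psi_n \in \Cz$ with $\psi_n \to \psi$ normally (possible since $\Cz$ is normally dense in $\HolC$); by H2, $\zzinfty$ is normally continuous so $\zzinfty \psi_n \to \zzinfty \psi = \lambda \psi$ normally. By Topological H0 we have $\Cz \subseteq \W$, and Proposition~\ref{W-inclusive} applied to the multiplication operator $z$ gives $\W \subseteq \mathrm{D}(Z^*)$; in particular $\psi_n \in \mathrm{D}(Z^*)$ and $\zzinfty \psi_n = Z^* \psi_n$. For each $p \in \Cz \subseteq \mathrm{D}(Z)$ the adjointness identity yields $\braket{p}{Z^* \psi_n} = \braket{zp}{\psi_n}$. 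Since $\bra{p}$ and $\bra{zp}$ are normally continuous on $\Cz$ (Topological H0) and thus extend by density to normally continuous functionals on $\HolC$, passing to the normal limit on both sides produces
$$\braket{p}{\zzinfty \psi} = \braket{zp}{\psi}, \qquad \text{i.e.,} \qquad \braket{(z-\bar\lambda)\, p}{\psi} = 0,$$
for every $p \in \Cz$.

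Next, I would identify the constant of proportionality. The displayed identity says $\braket{q}{\psi} = 0$ for every polynomial $q$ vanishing at $w = \bar\lambda$. For an arbitrary $p \in \Cz$ the decomposition $p = (p - p(w)\cdot 1) + p(w)\cdot 1$ gives $\braket{p}{\psi} = \overline{p(w)}\,\braket{1}{\psi}$. The reproducing property yields $\braket{p}{[w]} = \overline{\braket{[w]}{p}} = \overline{p(w)}$, and so setting $c := \braket{1}{\psi}$ I obtain
$$\braket{p}{\psi - c\,[w]} = 0 \qquad \text{for all } p \in \Cz.$$
Finally, since H3 forces polynomials to be normal, the earlier lemma on the invertibility of $S_\infty$ asserts that the extended pairing $\braket{}{}\colon \Cz \times \HolC \to \Bbb{C}$ is non-degenerate in its second argument. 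This forces $\psi = c\,[w]$, finishing the proof.

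The main obstacle is the first step: moving the adjoint across the pairing requires that, simultaneously, $\zzinfty$ be normally continuous on $\HolC$ (H2), the dual vectors $\bra{p}$ and $\bra{zp}$ extend normally continuously to $\HolC$ (Topological H0), and the approximating polynomials lie in $\mathrm{D}(Z^*)$ so that $\zzinfty \psi_n = Z^*\psi_n$ actually holds (this again uses $\Cz \subseteq \W \subseteq \mathrm{D}(Z^*)$, hence Topological H0 plus H1). This is precisely where the full strength of H3 is consumed; the remainder of the argument is a direct application of the reproducing kernel identity and the non-degeneracy lemma.
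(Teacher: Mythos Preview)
Your proof is correct and follows essentially the same route as the paper's: translate the eigenvalue equation $\zzinfty\psi=\lambda\psi$ into the orthogonality $\braket{(z-\bar\lambda)p}{\psi}=0$ for all $p\in\Cz$ via the normally-continuous extended pairing, then invoke the non-degeneracy of that pairing on $\Cz\times\HolC$ to pin $\psi$ down to a single complex degree of freedom. The paper compresses your first step into a one-line ``this means that'' and leaves the identification of the proportionality constant implicit, whereas you spell out both the approximation argument justifying the adjointness passage and the explicit computation $\psi=c\,[w]$ with $c=\braket{1}{\psi}$.
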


\begin{proof} Let us assume that $\Psi(z)\in\HolC$ satisfies $\zzinfty\Psi(z)=\bar{w}\Psi(z)$ 
for some $w\in\Bbb{C}$. This means that $\braket{(z-w)\pp(z)}{\Psi(z)}=0$ for every $\pp(z)\in\Cz$. Here we 
have extended the scalar product by normal continuity to $\Cz\times\HolC$. The
extended $\braket{}{}$ is non-degenerate. This leaves only one degree of freedom for $\Psi(z)$. 
\end{proof}

\begin{prop} In the framework of Harmony Three, the commutant of 
$\azz$ in $\Opiz$ is trivial. In particular, the center of the algebra 
$\azz$ is trivial: $\mathrm{Z}(\azz)=\Bbb{C}$.   
\end{prop}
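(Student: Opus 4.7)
The plan is as follows. Let $T\in\Opiz$ commute with every element of $\azz$; since $\azz$ is generated by $z$ and $\zzinfty$, it is enough that $T$ commutes with these two operators.

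First I would exploit the commutation with multiplication by $z$. By induction, $T(z^n)=T(z\cdot z^{n-1})=z\cdot T(z^{n-1})=\cdots=z^n\cdot T(1)$, so setting $g:=T(1)\in\HolC$ gives $T(p)=g\cdot p$ for every polynomial $p\in\Cz$. Because $T$ is normally continuous (as an element of $\Opiz$) and $\Cz$ is normally dense in $\HolC$, this identity extends to $T(f)=g\cdot f$ for every $f\in\HolC$. Thus $T$ is the operator of multiplication by the entire function $g$. (This is essentially the maximal commutativity of $\HolC$ inside $\Opiz$ noted earlier in the paper.)

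Next I would exploit the commutation with $\zzinfty$. By Proposition~\ref{zzinfty-w(z)}, the non-zero eigenvectors of $\zzinfty$ in $\HolC$ are exactly the non-zero scalar multiples of the point wave functions $[w]=K(\bar w,\cdot)$, with eigenvalue $\bar w$. Since $T\zzinfty=\zzinfty T$, the vector $T[w]$ is again an eigenvector of $\zzinfty$ at $\bar w$ (or zero), and hence $T[w]=\lambda(w)[w]$ for some scalar $\lambda(w)\in\Bbb{C}$. Written in terms of $g$, this becomes the identity of entire functions in $z$:
\begin{equation*}
g(z)\,K(\bar w,z)=\lambda(w)\,K(\bar w,z).
\end{equation*}
Since there are no dormant points, $K(\bar w,\cdot)\not\equiv 0$, and because the ring $\HolC$ of entire functions has no zero divisors, we conclude $g(z)=\lambda(w)$ identically in $z$. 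As $w\in\Bbb{C}$ was arbitrary, this forces $\lambda$ to be a constant function of $w$ and $g$ to be that same constant, so $T$ is a scalar multiple of the identity. The assertion $\mathrm{Z}(\azz)=\Bbb{C}$ follows immediately: a central element $a\in\azz$ lies in $\azz\subseteq\Opiz$ and commutes with all of $\azz$, hence by the preceding calculation $a\in\Bbb{C}$, while the reverse inclusion is trivial.

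The main obstacle I anticipate is a careful accounting of where each of the harmony hypotheses enters. Harmony Two is needed so that $\zzinfty\in\Opiz$ even makes sense, and Topological H0 (hence full H3) is precisely what underpins Proposition~\ref{zzinfty-w(z)} via non-degeneracy of the extended pairing $\Cz\times\HolC\to\Bbb{C}$. The normal-continuity step in extending $T=M_g$ from $\Cz$ to $\HolC$, though routine, is essential and uses in a non-trivial way that $T$ was assumed to be in $\Opiz$ rather than merely a linear endomorphism of $\Cz$.
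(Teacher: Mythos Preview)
Your proof is correct and follows essentially the same approach as the paper: first use commutation with $z$ (and maximal commutativity of $\HolC$ in $\Opiz$) to reduce to multiplication by an entire function, then apply Proposition~\ref{zzinfty-w(z)} to the point wave functions to force that function to be constant. Your write-up is more detailed in both steps, but the underlying argument is identical.
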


\begin{proof} If something from $\Opiz$ commutes with $z$ then 
it must be of a multiplicative form $\psi(z)\longmapsto \Theta(z)\psi(z)$, where $\Theta(z)\in\HolC$. 
If $\Theta(z)$ in addition commutes with $\zzinfty$ then $\zzinfty[\Theta(z)w(z)]=\bar{w}\Theta(z)w(z)$. 
Proposition~\ref{zzinfty-w(z)} implies that $\Theta(z)w(z)$ must be 
proportional to $w(z)$ which is possible only if $\Theta(z)$ is a constant function. 
\end{proof}

\subsection*{Projection Construction}

The following list recapitulates the symbols used for the diverse operators 
related intrinsically to the complex coordinate $z$. 

\smallskip
\begin{center}
{\it 
\begin{tabular}{l|r}
$z$ & The multiplication operator $\psi(z)\longmapsto z\psi(z)$ acting in $\Cz$, $\W$ or $\HolC$. \\
$Z$ & Induced $z$-multiplication in $\cal{H}$ with $\mathrm{D}(Z)=\bigl\{\psi(z)\in\cal{H}\bigm |
z\psi(z)\in\cal{H}\bigr\}$.  \\
$Z^*$ & The adjoint operator for the operator $Z$ in $\cal{H}$. \\ 
$\zzinfty$ & The continuous extension of $Z^*$ to the whole $\HolC$. \\
$z^*$ & The formal adjoint of the multiplication operator $z$ in $\Cz$ and $\W$. 
\end{tabular}
}
\end{center}

\smallskip
These symbols are all related to $\cal{H}$ and 
$\azz$. 
Following \cite{coherent}, we shall now interpret 
the algebra $\azz$ as an algebra of Toeplitz operators
for the Toeplitz quantization of another, a priori unrelated, algebra, which will be assumed to be a quadratic polynomial *-algebra generated by $z$ and $\bar{z}$. 
So $z$ is a true multi-personality creature here! 

The basic example is the standard commutative *-algebra $\Bbb{C}[z,\bar{z}]$ 
of polynomials in $z$ and $\bar{z}$. 
Its only defining relation is simply $z\bar{z}=\bar{z}z$, 
and its *-structure is the anti-linear map 
which exchanges $z$ and $\bar{z}$. 
The product between $\Bbb{C}[\bar{z}]$ and $\Cz $ 
induces a natural decomposition
\begin{equation}\label{basic-decomposition}
\Bbb{C}[z,\bar{z}]\leftrightarrow
\Bbb{C}[\bar{z}]\otimes\Cz  \qquad \bar{z}^n z^m\leftrightarrow \bar{z}^n\otimes z^m. 
\end{equation}
A variety of interesting non-commutative variations 
of this emerge, 
if we fix a hermitian $2\times 2$ matrix 
$$Q=\begin{pmatrix}q_{00} & q_{01}\\ q_{10} & q_{11}\end{pmatrix}
$$ 
so that $\overline{q_{10}}=q_{01}$ and $q_{00},q_{11}\in\Bbb{R}$,  
and define $\polq$ to be the *-algebra generated by $z$ and $\bar{z}$ together with the following relation: 
\begin{equation}\label{quantum}
z\bar{z}=\bar{z}z+(1\,\,\bar{z})\begin{pmatrix}q_{00} & q_{01}\\ q_{10} & q_{11}\end{pmatrix}
\begin{pmatrix}1\\z \end{pmatrix} 
\end{equation}
where also 
$q_{11}\neq -1$. The *-structure, as in the commutative case, 
is anti-linear and 
exchanges $z$ and $\bar{z}$. 
We can think of $Q$ as being a quadratic form on $\Bbb{C}^2$.
We recover the commutative case if $Q=0$.  

These non-commutative polynomials keep valid 
the basic decomposition \eqref{basic-decomposition}
or, in other words for every $Q$ we have 
an isomorphism of vector spaces, 
but not of algebras
\begin{equation}\label{basic-decomposition-Q}
\polq\leftrightarrow
\Bbb{C}[\bar{z}]\otimes\Cz  \qquad 
\bar{z}^n z^m\leftrightarrow \bar{z}^n\otimes z^m 
\end{equation}
by the composition-diamond lemma. 
(See \cite{shepler}.)  
In terms of this identification 
$1\otimes 1$ is the unit element of the algebra, 
and $\Cz $ 
and $\Bbb{C}[\bar{z}]$ are mutually conjugate unital subalgebras of $\polq$.  
The basis elements $ \bar{z}^n z^m $ of $ \polq $ 
are called {\it anti-Wick monomials}. 
Consequently, every element in $ \polq $ has a 
unique expression as a {\it anti-Wick polynomial}, 
that is as a finite linear combination of the 
anti-Wick monomials. 

\begin{remark}
The coordinates $\bar{z}$ and $z$ 
appear symmetrically at this point of the theory, 
since the condition $q_{11}\neq -1$ ensures 
the presence of non-zero coefficients of 
both $z\bar{z}$ and $\bar{z}z$ 
in the defining relation \eqref{quantum}. 
Then the  composition-diamond lemma  
implies that there is also an `opposite' 
vector space isomorphism, but not 
an algebra isomorphism,   
$\polq \leftrightarrow \Cz\otimes\Bbb{C}[\bar{z}]$ 
given by 
$ z^n \bar{z}^m\leftrightarrow z^n\otimes \bar{z}^m  $.
The basis elements $ z^n \bar{z}^m $ of $ \polq $ 
are called {\it Wick monomials}.  
Symmetrically, every element in $ \polq $ is uniquely 
expressible as a Wick polynomial. 
\end{remark}

It follows that the product of $ \polq $
is encoded in the `flipping' rules for monomials $z^n$ and $\bar{z}^m$ which 
can be written as finite sums 
\begin{equation}\label{4-array}
z^n \bar{z}^m=\sum_{k,l\geq 0}\qqqq{n}{m}{k}{l}\,\bar{z}^l z^k
\end{equation}
for an array with 4 indices 
$\qqqq{n}{m}{k}{l}$ 
of complex numbers which 
can only have non-zero values 
for $0\leq k\leq n$ and $0\leq l\leq m$, 
since the relation \eqref{quantum} has 
no terms of order
greater than $ 2 $ in $ z $ and $ \bar{z} $. 
Polynomial expressions whose terms are all monomials of 
the form $ \bar{z}^l z^k $ as in \eqref{4-array} 
are said to be {\em reduced} in abstract ring theory and in {\em anti-Wick order} in quantum theory. 

This array must obey a number of consistency conditions
in order to represent a unital 
associative *-algebra 
structure on $\polq$. At first, by conjugating the above formula we arrive at  
\begin{equation}\label{4-hermicity}
\overline{\qqqq{n}{m}{k}{l}}=\qqqq{m}{n}{l}{k}.  
\end{equation}
Then, the requirement for the existence of a 
unit translates into
\begin{equation}
\qqqq{n}{0}{k}{l}=\delta_{nk}\delta_{0l}\qquad \qqqq{0}{m}{k}{l}=\delta_{0k}\delta_{ml}. 
\end{equation}
More generally, $z^n$ and $\bar{z}^m$ commute if and only if 
\begin{equation}
\label{4-array-commutes-condition}
\qqqq{n}{m}{k}{l}=\delta_{nk}\delta_{ml}
\end{equation}
for all $k,l\geq 0$.
Finally, the following charming and easy to remember 
convolution identities hold, where $ \times $
denotes the product of complex numbers:  
\begin{equation}\label{convolution}
\qqqq{u+v}{m}{k}{l}=\sum_{\substack{s\geq 0 \\a+b=k}} \qqqq{u}{s}{a}{l}\times \qqqq{v}{m}{b}{s}\qquad\qquad
\qqqq{n}{u+v}{k}{l}=\sum_{\substack{s\geq 0 \\a+b=l}} \qqqq{n}{u}{s}{a}\times \qqqq{s}{v}{k}{b}.
\end{equation}
They reflect the fact that both $\Cz $ and $\Bbb{C}[\bar{z}]$ are unital subalgebras of $\Bbb{C}[z,\bar{z}]$. 
In particular, by inductively applying these identities, we see that the array 
$\qqqq{n}{m}{k}{l}$ is completely determined by $Q$ 
because
\begin{equation}
\qqqq{1}{1}{0}{0}=q_{00}\qquad \qqqq{1}{1}{0}{1}=q_{01}\qquad \qqqq{1}{1}{1}{0}=q_{10}\qquad
\qqqq{1}{1}{1}{1}=1+q_{11}
\end{equation}
which is the same as the basic generating relation \eqref{quantum}.  

In order to clarify different roles of the same space, 
we shall write $\pinfty$ for $\polq$ as a vector space on which the same  
algebra acts via its left 
regular representation. In a similar manner, we shall write $\cal{P}$ for $\Cz$ understood as the corresponding representation space. 
Clearly $\cal{P}$ is a subalgebra of $\pinfty$. 
Also, every element in  $\pinfty$ can be 
written uniquely as a finite sum of
polynomials homogeneous in $\bar{z}$, namely as 
$ \sum_{n} a_{n} \bar{z}^n \varphi_{n} $ 
with $ a_{n} \in \Bbb{C} $ and 
$ \varphi_{n} \in \cal{P}$. 
Singling out the subalgebra $\cal{P} = \Cz$ 
for special consideration, instead of 
$\Bbb{C}[\bar{z}]$ which we could have done, 
breaks the previously mentioned symmetry 
between $ z $ and $ \bar{z} $. 

We shall assume here that H3 holds for $\cal{H}$. 

\begin{prop}
There exists a unique quadratic form $\qform{}{}$ on $\pinfty$ which extends the scalar product on 
$\cal{P}$ and such that the left regular representation 
of $ \polq $ on $\pinfty$ 
is symmetric with respect to that form. 
It is determined by
\begin{equation}\label{q-forma}
\qform{\bar{z}^n \varphi}{\bar{z}^m \psi}=\sum_{k,l\geq 0}\qqqq{n}{m}{k}{l}\braket{Z^{l}\varphi}{Z^{k}\psi}
\end{equation}
for $n,m\in\Bbb{N}$ and $\varphi,\psi\in\cal{P}$.  
\end{prop}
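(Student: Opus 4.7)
The plan is to prove uniqueness and existence as separate arguments, and in both cases the explicit formula \eqref{q-forma} emerges.

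For \textbf{uniqueness}, I would start from any form $\qform{}{}$ having the stated properties, and pin it down on pairs $\bar{z}^n\varphi$, $\bar{z}^m\psi$ with $\varphi,\psi\in\cal{P}$, which generate $\pinfty$ via the decomposition \eqref{basic-decomposition-Q}, in three moves: first use symmetry of the $\bar{z}$-action (noting $\bar{z}^{*}=z$) iteratively to rewrite $\qform{\bar{z}^n\varphi}{\bar{z}^m\psi}$ as $\qform{\varphi}{z^n\bar{z}^m\psi}$; then expand $z^n\bar{z}^m$ in anti-Wick order via \eqref{4-array}; finally apply the symmetry for $\bar{z}$ once more to each resulting $\bar{z}^l$ factor, ending with a scalar product of elements in $\cal{P}$ that must equal $\braket{Z^l\varphi}{Z^k\psi}$ since $\qform{}{}$ extends $\braket{}{}$. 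This reproduces \eqref{q-forma}.

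For \textbf{existence}, I would define $\qform{}{}$ on pairs $(\bar{z}^n\varphi,\bar{z}^m\psi)$ by the right-hand side of \eqref{q-forma} and extend sesquilinearly (antilinear in the first entry, linear in the second) using the direct-sum decomposition $\pinfty\leftrightarrow\Bbb{C}[\bar{z}]\otimes\Cz$. The fact that it extends $\braket{}{}$ on $\cal{P}$ is immediate from $\qqqq{0}{0}{k}{l}=\delta_{k,0}\delta_{l,0}$. Hermicity follows from \eqref{4-hermicity} by conjugating the defining sum and relabeling $k\leftrightarrow l$.

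The real content is verifying the symmetry of the left regular representation, which by linearity and the *-structure reduces to the two generator identities $\qform{\bar{z}\xi}{\eta}=\qform{\xi}{z\eta}$ and $\qform{z\xi}{\eta}=\qform{\xi}{\bar{z}\eta}$ on the basis pairs above. In the first case, the left-hand side carries index $n+1$, so the key input is the first convolution identity in \eqref{convolution} with $u=n$, $v=1$, applied after expanding $z\bar{z}^m$ on the right via \eqref{4-array}. The second case is where the combinatorial friction lies, since moving $z$ out of the first entry requires both the flipping relation \eqref{4-array} on $z\bar{z}^n$ and hermicity \eqref{4-hermicity} to convert the array coefficients $\qqqq{1}{n}{k}{l}$ into their conjugates $\qqqq{n}{1}{l}{k}$, after which matching the right-hand side comes from the second convolution identity in \eqref{convolution} with $u=1$, $v=m$. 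I expect this last verification to be the principal obstacle: three separate structural ingredients must be composed and several summation indices carefully renamed. Once the substitutions are unwound, however, the two sides coincide term by term, completing the proof.
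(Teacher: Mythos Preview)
Your proposal is correct and follows essentially the same route as the paper: the uniqueness argument, the definition via \eqref{q-forma}, the use of $\qqqq{0}{0}{k}{l}=\delta_{k0}\delta_{l0}$ for the extension property, and the hermicity check via \eqref{4-hermicity} all match. For the symmetry verification, the paper carries out precisely what you call the ``second case'' $\qform{z\xi}{\eta}=\qform{\xi}{\bar{z}\eta}$, using the flipping rule \eqref{4-array} on $z\bar{z}^n$, the hermicity \eqref{4-hermicity}, and the second convolution identity in \eqref{convolution} with $u=1$, $v=m$---exactly the three ingredients you anticipate. One small economy you are missing: once hermicity is established, your ``first case'' $\qform{\bar{z}\xi}{\eta}=\qform{\xi}{z\eta}$ follows by conjugating the second, so only one of the two generator identities needs direct verification.
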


\begin{proof} Let us suppose that $\qform{}{}$ exists. Then
\begin{equation*}
\qform{\bar{z}^n\varphi}{\bar{z}^m\psi}=
\qform{\varphi}{z^n\bar{z}^m\psi}=\sum_{k,l\geq 0}\qqqq{n}{m}{k}{l}\qform{\varphi}{\bar{z}^lz^k\psi}
=\sum_{k,l\geq 0}\qqqq{n}{m}{k}{l}\braket{Z^l\varphi}{Z^{k}\psi}
\end{equation*}
so indeed \eqref{q-forma} holds and hence $\qform{}{}$ must be unique. 
To prove its existence, we define 
it on $\bar{z}$-homogenous polynomials by \eqref{q-forma}, extending it to $\pinfty\times\pinfty$ by using bi-additivity. From $\qqqq{0}{0}{k}{l}=\delta_{k0}\delta_{l0}$ it follows that 
$\qform{}{}$ extends $\braket{}{}$. 
It is also hermitian symmetric because of 
\eqref{4-hermicity}:
$$ 
\qform{\bar{z}^n \varphi}{\bar{z}^m \psi}^*=
\sum_{k,l\geq 0}\overline{\qqqq{n}{m}{k}{l}}\braket{Z^{k}\psi}{Z^{l}\varphi}=
\sum_{k,l\geq 0}\qqqq{m}{n}{l}{k}\braket{Z^{k}\psi}{Z^{l}\varphi}=\qform{\bar{z}^m \psi}{\bar{z}^n\varphi}. 
$$ 
So we have a quadratic form on $\pinfty$. We next have to prove that $z$ and $\bar{z}$ are mutually formally
adjoint with respect to $\qform{}{}$. 
Again, we verify this on polynomials homogeneous in $\bar{z}$
by using \eqref{4-array}, \eqref{convolution} 
and \eqref{q-forma}: 
\begin{multline*}
\qform{\bar{z}^n\varphi}{\bar{z}\bar{z}^m\psi}=\sum_{k,l\geq 0}\qqqq{n}{m+1}{k}{l}\braket{Z^l\varphi}{Z^k\psi}=
\!\!\!\sum_{k,s,a,b\geq 0}\!\{\qqqq{n}{1}{s}{a}\times\qqqq{s}{m}{k}{b}\}\braket{Z^{a+b}\varphi}{Z^k\psi}=\\
=\!\!\!\sum_{k,s,a,b\geq 0}\!\!\!\qqqq{s}{m}{k}{b}\,\braket{Z^b\bigl\{\qqqq{1}{n}{a}{s}\,Z^a\varphi  \bigr\}}{Z^k\psi}=
\!\!\sum_{a,s\geq 0}\qform{\bar{z}^s\bigl\{ 
\qqqq{1}{n}{a}{s} Z^a\varphi\bigr\}}{\bar{z}^m\psi}=\qform{z\bar{z}^n\varphi}{\bar{z}^m\psi}. 
\end{multline*}
So $z$ and $\bar{z}$ are formally adjoint relative to $\qform{}{}$. Since they generate $\polq$ this property extends to the whole polynomial algebra.  
In other words we get that 
\begin{equation}
\qform{F\varphi}{\psi}=\qform{\varphi}{\overline{F}\psi}
\end{equation} 
for every $F\in\polq$ and $\varphi,\psi\in\pinfty$. 
\end{proof}

\begin{remark} The form $\qform{}{}$ will not be positive in general. 
The positivity problem for 
the form, in the context of the commutative algebra $\Bbb{C}[z,\bar{z}]$, relates to the {\it moment problem} --- the existence of a finite measure $\mu$ on $\Bbb{C}$ reproducing the scalar product via the standard $L^2$ inner product 
\begin{equation}
\qform{f}{g}=\int_\Bbb{C}\bar{f}g\, d\mu(z,\bar{z}).
\end{equation}
It is easy to see that if a measure $\mu$ on $\Bbb{C}$ reproduces the scalar product $\braket{}{}$ on $\cal{P}$, 
then it automatically reproduces the extended form $\qform{}{}$ on $\Bbb{C}[z,\bar{z}]$, and in particular it 
will be positive (possibly not strictly positive). On the other hand there exist interesting examples where 
$\qform{}{}$ is strictly positive, 
even though there will be no underlying measure. 
As we shall see, such 
exotic situations do not occur if the scalar product 
$\braket{}{}$ on $\cal{P}$ is diagonal. 
\end{remark}

In resonance with this, we can introduce an 
integration functional for the algebra $\polq$. 
We shall use the same 
generic integration symbol as for $\azz$, and similarly to \eqref{integral-A} define 
\begin{equation}
\int F=\qform{0}{F}. \quad \forall  F \in \polq. 
\end{equation} 
(Recall the Dirac notation is $
 \ket{0} = \pp_{0}(z) = 1 \in \polq$.)
This is a hermitian functional evaluating to 
$s_{nm}=\braket{z^n}{z^m}$ on the monomials 
$\bar{z}^nz^m$. 
Moreover, 
\begin{equation}\label{qform-Phi}
\qform{\rho}{\pp}=\int \overline{\rho}\,\pp
\end{equation}
for every $\rho,\pp\in\polq$. 

\begin{prop} 
\label{PI-proposition}	
The following conditions are equivalent: 

\noindent---The space $\cal{P}$ is orthocomplementable in $\pinfty$ or in other words there 
exists a subspace $ \cal{P}^\bot $ such that 
\begin{equation}\label{p-pbot}
\pinfty=\cal{P}\oplus\cal{P}^\bot.
\end{equation}

\noindent---There exists a linear map $\Pi\colon\pinfty\rightarrow\pinfty$ satisfying 
\begin{equation}
\label{PI-properties}
\Pi^2=\Pi\qquad \im(\Pi)=\cal{P}\qquad\qform{\Pi\psi}{\varphi}=\qform{\psi}{\Pi\varphi}
\end{equation}
for every $\varphi,\psi\in\pinfty$. 

\noindent--- Algebraic H0 holds. In other words $\cal{P}$ is $Z^*$-invariant and $Z^*\vert\cal{P}=\zz$ the formal adjoint of $z$ in $\cal{P}$. 

\smallskip
If any of these (and hence all of these) hold, 
then $\ker(\Pi)=\cal{P}^\bot$ and in particular $\Pi$ is the projection
associated to the orthogonal decomposition of $\pinfty$. 
Moreover, there is a commutative diagram 
\begin{equation}\label{z*-def2}
\begin{CD}
\pinfty @>{\mbox{$\bar{z}$}}>> \pinfty\\
@AAA  @VV{\mbox{$\Pi$}}V\\
\cal{P} @>>{\mbox{$z^*$}}> \cal{P}, 
\end{CD}
\end{equation}
where the left vertical arrow 
is the inclusion map and the top horizontal arrow 
is multiplication by $ \bar{z} $. 
\end{prop}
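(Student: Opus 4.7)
The plan is to establish $(1) \Leftrightarrow (2)$ via the standard correspondence between symmetric projections and $\qform{}{}$-orthogonal decompositions, and to prove $(2) \Leftrightarrow (3)$ by giving an explicit formula for $\Pi$ in terms of $\zz$ using the anti-Wick decomposition $\pinfty \cong \Bbb{C}[\bar{z}] \otimes \cal{P}$.

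For $(1) \Leftrightarrow (2)$, I interpret $\cal{P}^\bot$ in (1) as the $\qform{}{}$-orthogonal complement of $\cal{P}$ (any other reading trivializes (1) by Zorn's lemma). Then $(1) \Rightarrow (2)$ is immediate: take $\Pi$ to be projection onto $\cal{P}$ along $\cal{P}^\bot$, and its symmetry follows from the defining property of $\cal{P}^\bot$ together with hermicity of $\qform{}{}$. For $(2) \Rightarrow (1)$, I would set $\cal{P}^\bot := \ker(\Pi)$ and check it equals the $\qform{}{}$-orthocomplement. One inclusion is direct: for $v \in \ker(\Pi)$ and $\varphi \in \cal{P}$, using $\Pi\varphi = \varphi$ and symmetry, $\qform{v}{\varphi} = \qform{v}{\Pi\varphi} = \qform{\Pi v}{\varphi} = 0$. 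For the reverse, given $v$ orthogonal to $\cal{P}$, decompose $v = \Pi v + (v - \Pi v)$; the cross term $\qform{v - \Pi v}{\Pi v}$ vanishes by the inclusion just proved, so $0 = \qform{v}{\Pi v} = \qform{\Pi v}{\Pi v} = \braket{\Pi v}{\Pi v}$, and positive definiteness forces $\Pi v = 0$. This simultaneously establishes the moreover assertion $\ker(\Pi) = \cal{P}^\bot$.

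For $(3) \Rightarrow (2)$, assuming Algebraic H0 I would define
\[
\Pi\Big(\sum_{n \geq 0} \bar{z}^n \varphi_n\Big) := \sum_{n \geq 0} \zz^n \varphi_n, \qquad \varphi_n \in \cal{P}.
\]
The relations $\Pi^2 = \Pi$ and $\im(\Pi) = \cal{P}$ are immediate. For symmetry, bilinearity reduces the check to monomials $u = \bar{z}^n \varphi$ and $v = \bar{z}^m \psi$; applying formula \eqref{q-forma} together with $\qqqq{0}{m}{k}{l} = \delta_{k0}\delta_{lm}$ and $\qqqq{n}{0}{k}{l} = \delta_{kn}\delta_{l0}$ collapses $\qform{\Pi u}{v} = \qform{u}{\Pi v}$ to $\braket{z^m \zz^n \varphi}{\psi} = \braket{\varphi}{z^n \zz^m \psi}$, which is the formal adjointness identity $(z^m \zz^n)^* = z^n \zz^m$ on $\cal{P}$. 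Conversely, for $(2) \Rightarrow (3)$ I would define $\zz\varphi := \Pi(\bar{z}\varphi)$ for $\varphi \in \cal{P}$; then for $\varphi, \psi \in \cal{P}$ the chain
\[
\braket{\zz\varphi}{\psi} = \qform{\Pi(\bar{z}\varphi)}{\psi} = \qform{\bar{z}\varphi}{\Pi\psi} = \qform{\bar{z}\varphi}{\psi} = \qform{\varphi}{z\psi} = \braket{\varphi}{z\psi}
\]
(using $\Pi\psi = \psi$ and the $z \leftrightarrow \bar{z}$ adjoint symmetry of $\qform{}{}$) confirms that $\zz$ is the formal adjoint of $z$ on $\cal{P}$, i.e.\ Algebraic H0. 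The commutative diagram is read off directly from the identity $\Pi(\bar{z}\varphi) = \zz\varphi = z^*\varphi$.

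The main obstacle is spotting the right formula $\Pi(\bar{z}^n \varphi) = \zz^n \varphi$ in $(3) \Rightarrow (2)$; once it is in hand, both the symmetry check and the reverse construction are short manipulations of \eqref{q-forma} and the defining property of $\zz$.
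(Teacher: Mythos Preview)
Your proposal is correct and follows essentially the same route as the paper: the cycle $(1)\Rightarrow(2)\Rightarrow(3)\Rightarrow(2)\Rightarrow(1)$ uses the same formula $\Pi(\bar{z}^n\varphi)=\zz^n\varphi$, the same symmetry computation collapsing to $(z^m\zz^n)^*=z^n\zz^m$, and the same definition $\zz\varphi:=\Pi(\bar{z}\varphi)$ for the converse. Your treatment of the ``moreover'' clause is actually more careful than the paper's---you explicitly verify both inclusions $\ker(\Pi)\subseteq\cal{P}^\bot$ and $\cal{P}^\bot\subseteq\ker(\Pi)$, the second using that $\qform{}{}$ restricts to the positive-definite $\braket{}{}$ on $\cal{P}$, whereas the paper leaves this implicit.
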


\begin{proof}
If \eqref{p-pbot} holds, then $\Pi$ is 
defined to be the associated projection on $\cal{P}$ and \eqref{PI-properties} 
is immediate. 
On the other hand
if \eqref{PI-properties}  holds, then 
the kernel of $ \Pi $ is an orthocomplement 
of $\cal{P}$ and so \eqref{p-pbot} holds. 

If $\Pi$ exists as in \eqref{PI-properties},
we define $z^*$ by \eqref{z*-def2}.
Then for $\varphi,\psi\in\cal{P}$ we have 
$$ \braket{\varphi}{z^*\psi}=\braket{\varphi}{\Pi\bar{z}\psi}=
\qform{\Pi\varphi}{\bar{z}\psi}=\qform{\varphi}{\bar{z}\psi}=\qform{z\varphi}{\psi}=\braket{z\varphi}{\psi}. 
$$
So the operator $z^*$ 
is indeed the formal adjoint of $z$ in $\cal{P}$ and so 
Algebraic H0 holds. 

Finally, assume Algebraic H0. 
So the operator $z^*$ exists 
(i.e., $Z^*$ preserves $\cal{P}$). 
Then we can define 
$\Pi$ on $\bar{z}$-homogeneous polynomials by 
\begin{equation}\label{Pi-def}
\Pi(\bar{z}^n\psi)=z^{*n}\psi \in \cal{P}
\end{equation}
for $\psi\in\cal{P}$ and $n\in\Bbb{N}$, and extend it 
to $\pinfty$ by additivity. 
By construction $ \Pi $ is an idempotent projecting on $\cal{P}$. It is also symmetric. 
Indeed if 
$\varphi$ is also from $\cal{P}$, then 
\begin{multline*}
\qform{\Pi(\bar{z}^n\varphi)}{\bar{z}^m\psi}=\qform{z^{*n}\varphi}{ \bar{z}^m\psi}=\braket{z^mz^{*n}\varphi}{ \psi}
=\braket{z^{*n}\varphi}{ z^{*m}\psi}=\\
{}=\braket{\varphi}{ z^nz^{*m}\psi}=\qform{\bar{z}^n
\varphi}{ z^{*m}\psi}=\qform{\bar{z}^n\varphi}{\Pi(\bar{z}^m\psi)}
\end{multline*}
and by bi-additivity the identity extends to the whole $\pinfty$. 
Therefore all the identities in \eqref{PI-properties} 
have been proved. 
\end{proof}

Here is an explicit way to calculate the projection map $\Pi$ in terms of the 
canonical orthonormal polynomials 
$\pp_n(z) \in \Cz \subset \polq$. 

\begin{prop}
Assume that $ \Pi $ exists as in 
Proposition~\ref{PI-proposition}. 
We then have for every $F\in\polq$ that
\begin{equation}\label{Pi-B}
\Pi(F)=
\sum_{k=0}^{\infty} \qform{\pp_k(z)}{F}\pp_k(z),  
\end{equation}
There are only finitely many non-zero terms 
in this infinite sum. 
\end{prop}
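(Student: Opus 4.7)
The plan is to exploit three simple facts in sequence: (i) $\Pi(F) \in \cal{P}$ is a polynomial of some finite degree, (ii) $\{\pp_k(z)\}_{k \ge 0}$ is an orthonormal basis for $\cal{P}$ with respect to $\braket{}{}$ with $\pp_k(z) \in \Czf{k}$, and (iii) $\Pi$ is symmetric with respect to $\qform{}{}$ and fixes $\cal{P}$.

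First I would fix $F \in \polq$ and observe that $\Pi(F)$ lies in $\cal{P}$, so $\Pi(F)$ is an ordinary polynomial in $z$ of some finite degree $d_F$. Since $\pp_0(z), \dots, \pp_{d_F}(z)$ form an orthonormal basis of $\Czf{d_F}$ with respect to $\braket{}{}$, we have the finite expansion
\begin{equation*}
\Pi(F) = \sum_{k=0}^{d_F} \braket{\pp_k(z)}{\Pi(F)}\,\pp_k(z).
\end{equation*}

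Next I would identify the coefficients with $\qform{\pp_k(z)}{F}$. Since $\qform{}{}$ restricts to $\braket{}{}$ on $\cal{P}$, and $\pp_k(z),\Pi(F)\in\cal{P}$, we have $\braket{\pp_k(z)}{\Pi(F)} = \qform{\pp_k(z)}{\Pi(F)}$. Using the symmetry of $\Pi$ from \eqref{PI-properties} together with $\Pi(\pp_k(z))=\pp_k(z)$ (because $\pp_k(z) \in \cal{P}$), this equals
\begin{equation*}
\qform{\Pi(\pp_k(z))}{F} = \qform{\pp_k(z)}{F}.
\end{equation*}
This yields $\Pi(F)=\sum_{k=0}^{d_F}\qform{\pp_k(z)}{F}\,\pp_k(z)$, the desired formula as a finite sum.

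Finally I would verify that the same identity holds when the summation is extended to all $k \in \Bbb{N}$. For $k > d_F$, the same chain of equalities above gives $\qform{\pp_k(z)}{F} = \braket{\pp_k(z)}{\Pi(F)}$, and this vanishes because $\pp_k(z)$ is orthogonal to $\Czf{d_F}$ while $\Pi(F)\in\Czf{d_F}$. Thus all terms beyond index $d_F$ contribute zero, and the infinite sum \eqref{Pi-B} reduces to the finite sum above, proving both the formula and the finiteness claim. No genuine obstacle appears here; the only thing to be careful about is keeping straight the identifications $\braket{}{}=\qform{}{}$ on $\cal{P}$ and $\Pi|_{\cal{P}}=\mathrm{id}$, which together with the symmetry of $\Pi$ supply all the algebraic manipulations needed.
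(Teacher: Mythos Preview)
Your proof is correct. The paper's argument is organized a little differently: it verifies \eqref{Pi-B} separately on the two summands of the decomposition $\pinfty=\cal{P}\oplus\cal{P}^\bot$ (for $F\in\cal{P}$ the formula is just the orthonormal expansion, and for $F\in\cal{P}^\bot$ both sides vanish since $\pp_k\in\cal{P}$), then invokes additivity. You instead keep $F$ intact and use the symmetry of $\Pi$ to move it across the form, obtaining $\qform{\pp_k}{F}=\qform{\Pi\pp_k}{F}=\qform{\pp_k}{\Pi F}=\braket{\pp_k}{\Pi F}$ directly. The two arguments are essentially dual to one another and rest on the same ingredients; yours is marginally more streamlined in that it avoids explicitly decomposing $F$.
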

\begin{proof}
For $ F \in \cal{P} = \Cz $ equation 
\eqref{Pi-B} reduces to saying that 
the polynomials $ \pp_n(z) $ are an orthonormal basis 
of the vector space $ \Cz $, in which case only finitely many terms are non-zero.  
For $F \in \cal{P}^\bot = \ker(\Pi)$ the left 
side of \eqref{Pi-B} is zero and every term 
on the right side is also zero since 
$ \pp_n(z) \in \cal{P}  $. 
Then \eqref{Pi-B} follows for every  
$F\in\cal{P}\oplus\cal{P}^\bot=\polq$ 
by additivity. 
\end{proof}
\begin{remark}
In Dirac notation (which technically does not apply,
since we are not in a Hilbert space setting) 
we can write \eqref{Pi-B} as 
\begin{equation}\label{Pi-diadas}
\Pi = \sum_{k=0}^{\infty} |k\rqbra\lqbra k|.  
\end{equation}
\end{remark}

The quadratic form $\qform{}{}$ will fail 
in general to be non-degenerate. 
In other words, we might 
encounter a non-trivial 
null-space $\ninfty=\pinfty^\bot$. 
This next result is immediate. 

\begin{prop} The null-space $\ninfty$ is an invariant subspace for the left regular representation of $\polq$ in $\pinfty$.
Moreover $\Pi$, if it exists, maps $\ninfty$ into $\{0\}$.
Thus, the whole module structure, 
quadratic form and $\Pi$ naturally project down to the factor space $\qinfty=\pinfty/\ninfty$,
preserving all the 
basic formulas. 
The projected $\qform{}{}$ is non-degenerate. We have 
$\cal{P}\cap\ninfty=\{0\}$, 
and so $\cal{P}$ is naturally a subspace of $\qinfty$. \qed
\end{prop}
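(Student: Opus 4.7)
The plan is to verify each assertion in turn, the key input being the symmetry of the left regular representation with respect to $\qform{}{}$, the symmetry of $\Pi$, and the positive definiteness of $\braket{}{}$ on $\cal{P}$.

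First I would check invariance of $\ninfty$. Let $\nu\in\ninfty$ and $F\in\polq$. For every $\psi\in\pinfty$, using symmetry we get $\qform{\psi}{F\nu}=\qform{\overline{F}\psi\,}{\nu}=0$, since $\overline{F}\psi\in\pinfty$ and $\nu$ annihilates all of $\pinfty$. Hence $F\nu\in\ninfty$. Next, for the statement about $\Pi$, assume $\Pi$ exists and take $\nu\in\ninfty$. Then $\Pi(\nu)\in\cal{P}$, and for every $\varphi\in\cal{P}$ we use the symmetry \eqref{PI-properties} of $\Pi$ to get
\begin{equation*}
\braket{\varphi}{\Pi(\nu)}=\qform{\varphi}{\Pi(\nu)}=\qform{\Pi(\varphi)}{\nu}=\qform{\varphi}{\nu}=0.
\end{equation*}
Since $\Pi(\nu)\in\cal{P}$ and the scalar product on $\cal{P}$ is non-degenerate, it follows that $\Pi(\nu)=0$, as required.

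Next I would use these two facts to descend everything to the quotient $\qinfty=\pinfty/\ninfty$. The quadratic form $\qform{}{}$ is well-defined on $\qinfty\times\qinfty$ by $\qform{[\varphi]}{[\psi]}:=\qform{\varphi}{\psi}$, because adding elements of $\ninfty$ to either entry does not alter the value. By the definition of $\ninfty$ as the full annihilator of $\qform{}{}$, this induced form is non-degenerate on the quotient. The module structure descends because $\ninfty$ is left-invariant: for $F\in\polq$ the formula $F\cdot[\varphi]:=[F\varphi]$ is consistent. Finally, $\Pi$ descends because $\Pi(\ninfty)=\{0\}$: define $\Pi([\varphi]):=[\Pi(\varphi)]$, which is well-defined and satisfies the same idempotency, image and symmetry relations on $\qinfty$ as before.

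For the last assertion, suppose $\varphi\in\cal{P}\cap\ninfty$. Then in particular $\braket{\varphi}{\varphi}=\qform{\varphi}{\varphi}=0$, and positive definiteness of the scalar product on $\cal{P}$ forces $\varphi=0$. Therefore $\cal{P}\cap\ninfty=\{0\}$, and the canonical projection $\pinfty\to\qinfty$ restricts to an injection of $\cal{P}$ into $\qinfty$, so we may regard $\cal{P}\subset\qinfty$ without ambiguity. No step here poses a real obstacle; the only point requiring a small idea is the vanishing $\Pi(\nu)=0$ for $\nu\in\ninfty$, where one must combine the $\Pi$-symmetry with the positive definiteness of $\braket{}{}$ restricted to $\cal{P}$.
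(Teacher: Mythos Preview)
Your proof is correct and follows essentially the same approach the paper has in mind: the proposition in Section~2 is stated with a \qed and no proof, but the analogous lemma in Section~3 is proved exactly along your lines---invariance from the symmetry of the action, and $\Pi(\ninfty)=\{0\}=\cal{P}\cap\ninfty$ from the symmetry of $\Pi$ combined with the strict positivity of $\braket{}{}$ on $\cal{P}$. The only cosmetic difference is that the paper phrases the $\Pi$-step as ``$\ninfty$ is $\Pi$-invariant, hence $\Pi(\ninfty)\subseteq\cal{P}\cap\ninfty=\{0\}$'', whereas you compute $\braket{\varphi}{\Pi(\nu)}=0$ directly; these are the same argument.
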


\begin{definition}
For any $f \in \polq$
we define the {\em Toeplitz operator with symbol $f$}, denoted as
$ \Tz(f)=\Tz_f : \mathcal{P} \to \mathcal{P} $, by $ \Tz_f\pp := \Pi (f \pp) $ for all  
$ \pp \in \mathcal{P} $.  
Notice that the product $f \pp $ of the 
two elements $f, \pp \in \polq$ 
is again an element in the algebra 
$\polq$. Then the projection $\Pi$ maps this product to 
an element of $ \mathcal{P}$. 
In this way we obtain a linear map
$\Tz\colon\polq\rightarrow \azz$, 
the *-algebra of operators generated by $z$ and $\zz$. 
We say that $ \Tz $ is the 
{\em Toeplitz quantization}.  
\end{definition}

Even though $ \Tz $ is a linear map from one algebra
to another, 
it is not expected nor desired to be 
multiplicative, that is, a map of algebras. 
We will come back to this point. 
However,  $ \Tz $ does preserve the identity element,
namely  $ \Tz_{1} $ is the identity operator 
of $ \mathcal{P} $. 

As a direct consequence of the above definition of $\Tz$, we find that 
\begin{equation}\label{T-z-Z}
\Tz(\bar{z}^nz^m)= \zz{}^n z^m. 
\end{equation}
So the recipe to calculate the operators $\Tz_f$ is quite simple 
and can be used as an alternative definition of $\Tz$. 
Just replace $z$ and $\bar{z}$ by their counterparts $z$ and $\zz$ in $\azz$
in the polynomial expression for $f$, assuming that $f$ is written in anti-Wick form, that is,   
all $\bar{z}$'s are moved to the left of 
$z$'s. 
This is the reason for saying 
that Toeplitz quantization 
is an {\it anti-Wick quantization}. 
Equation \eqref{T-z-Z} works out so nicely in 
part because the definition of $ \zz $ 
in diagram \eqref{z*-def2} now reads as 
$ \Tz(\bar{z})= \zz{} $. 
On the other hand the identity 
$ \Tz(z)= z $ follows immediately from 
the fact that $ \cal{P} = \Cz$ is 
$ z $-invariant. 
Equation \eqref{T-z-Z} also shows that all the operators $\Tz_f$ are indeed in $\azz$. 

\begin{remark} In general, $\Tz$ will not be surjective. Its image is the $\Bbb{C}[\bar{z}]$--$\Cz$ bimodule
in $\azz$ generated by $1\in\azz$. So $\azz$ is always generated by the image of $\Tz$. In this context, we shall 
refer to $\azz$ as the {\it algebra of Toeplitz operators}.  
\end{remark}

If we trivially 
extend the operators in $\cal{P}$ to $\qinfty$ by requiring that they vanish on $\cal{P}^\bot$, 
then we can write 
\begin{equation}\label{compression-by-Pi}
\Tz_f=\Pi f\Pi
\end{equation}
where on the right side $f$ is interpreted as the left regular representation operator. 
The expression on the right side of 
\eqref{compression-by-Pi} is known as the 
{\it compression} by the projection $ \Pi $
of the operator defined by $ f $

We see that $\Tz$ intertwines the 
*-structures on $\polq$ and $\azz$. It also connects the integration functionals on both algebras. 
\begin{prop} We have 
\begin{equation}\label{integration-2}
\int \Tz=\int. 
\end{equation}
\end{prop}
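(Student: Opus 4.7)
The plan is to unfold both sides of \eqref{integration-2} applied to an arbitrary $f\in\polq$, and then use the properties of the projection $\Pi$ established in Proposition~\ref{PI-proposition}. Concretely, I want to show that $\int\Tz_f=\int f$ for every $f\in\polq$, which written out reads $\braket{\pp_0}{\Tz_f\,\pp_0}=\qform{\pp_0}{f}$, where $\pp_0=\ket{0}=1$ is the unit polynomial viewed first as an element of $\cal{P}\subset\cal{H}$ on the left and as an element of $\cal{P}\subset\pinfty$ on the right.

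The first step is to evaluate $\Tz_f$ on $\pp_0$. Since $\pp_0=1\in\cal{P}=\im(\Pi)$, we have $\Pi(\pp_0)=\pp_0$, and consequently from the compression formula \eqref{compression-by-Pi}
\begin{equation*}
\Tz_f\,\pp_0=\Pi f\Pi(\pp_0)=\Pi(f\cdot 1)=\Pi(f)\in\cal{P}.
\end{equation*}
So the left-hand side of \eqref{integration-2} becomes $\braket{\pp_0}{\Pi(f)}$, a scalar product between two elements of $\cal{P}$.

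The second step is to switch from $\braket{}{}$ to $\qform{}{}$, which is allowed because $\qform{}{}$ extends the original scalar product on $\cal{P}$ (this was the defining property of $\qform{}{}$ in \eqref{q-forma}). Thus
\begin{equation*}
\braket{\pp_0}{\Pi(f)}=\qform{\pp_0}{\Pi(f)}.
\end{equation*}
Now apply the symmetry of $\Pi$ with respect to $\qform{}{}$ from \eqref{PI-properties}, together with $\Pi(\pp_0)=\pp_0$, to obtain
\begin{equation*}
\qform{\pp_0}{\Pi(f)}=\qform{\Pi(\pp_0)}{f}=\qform{\pp_0}{f}=\int f.
\end{equation*}
Chaining the three equalities gives exactly $\int\Tz_f=\int f$.

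There is no real obstacle here; the statement is essentially a two-line consequence of the three characterizing properties of $\Pi$ in \eqref{PI-properties} combined with the fact that $1\in\cal{P}$. The only point that warrants care is keeping track of the two distinct roles of the same symbol $\int$ (on $\azz$ versus on $\polq$) and the two distinct roles of the unit vector $\pp_0$ (in $\cal{H}$ versus in $\pinfty$); once those are disentangled, the computation is immediate and does not invoke any properties of the flip-over array $\qqqq{n}{m}{k}{l}$ beyond what was already packaged into the construction of $\Pi$ and $\qform{}{}$.
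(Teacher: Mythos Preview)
Your proof is correct. It differs from the paper's argument, which simply checks the identity on the anti-Wick monomials $\bar{z}^n z^m$: using \eqref{T-z-Z} the left side becomes $\braketop{0}{z^{*n}z^m}{0}=\braket{z^n}{z^m}=s_{nm}$, while the right side is $\qform{0}{\bar{z}^n z^m}=\qform{z^n}{z^m}=\braket{z^n}{z^m}=s_{nm}$ by the symmetry of the $\polq$-action and the fact that $\qform{}{}$ restricts to $\braket{}{}$ on $\cal{P}$. Your route is more structural: rather than invoking the explicit formula \eqref{T-z-Z} and a basis, you derive the identity directly from the three abstract properties of $\Pi$ in \eqref{PI-properties} together with $1\in\cal{P}$. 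This has the advantage of making transparent that \eqref{integration-2} is a purely formal consequence of the projection setup and does not depend on any particular feature of the flip array or of the anti-Wick ordering; the paper's computation, on the other hand, has the virtue of exhibiting the common value $s_{nm}$ explicitly.
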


\begin{proof} It is sufficient to check the identity 
on the monomials $\bar{z}^n z^m$. 
And then both the left and the right hand 
side evaluate to $\braket{z^n}{z^m}=s_{nm}$. 
\end{proof}

\begin{lemma} For every $\rho,\pp\in\cal{P}$ and $n\in\Bbb{N}$ we have 
\begin{equation}
\qform{\bar{z}^n\rho}{(\bar{z}-\zz)\pp}=\braket{\rho}{\Tz(z^n\bar{z})\pp}-\braket{\rho}{z^n\zz\pp}. 
\end{equation}
In particular, the map $\Tz$ is multiplicative only in the trivial scenario $\qinfty=\cal{P}$. 
\end{lemma}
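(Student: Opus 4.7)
The plan is to prove the identity by direct expansion, using the mutual formal adjointness of $z$ and $\bar z$ with respect to $\qform{}{}$ together with the defining properties of the projection $\Pi$ from Proposition~\ref{PI-proposition}. The ``in particular'' claim will then follow by substituting multiplicativity into the identity and appealing to the non-degeneracy of $\qform{}{}$ on $\qinfty$.

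For the identity itself, I would first split
$$
\qform{\bar z^n\rho}{(\bar z-\zz)\pp}=\qform{\bar z^n\rho}{\bar z\pp}-\qform{\bar z^n\rho}{\zz\pp}.
$$
For the second summand, note that $\zz\pp\in\cal P$; then iterating the mutual adjointness $\qform{\bar z\alpha}{\beta}=\qform{\alpha}{z\beta}$ moves all $n$ copies of $\bar z$ across and yields $\qform{\rho}{z^n\zz\pp}$, which, since both entries are in $\cal P$, equals $\braket{\rho}{z^n\zz\pp}$. For the first summand, the same iteration produces $\qform{\rho}{z^n\bar z\pp}$; now I use $\rho=\Pi\rho$ and the symmetry of $\Pi$ to rewrite this as $\qform{\rho}{\Pi(z^n\bar z\pp)}$. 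By the very definition of Toeplitz quantization, $\Pi(z^n\bar z\pp)=\Tz(z^n\bar z)\pp\in\cal P$, so this further equals $\braket{\rho}{\Tz(z^n\bar z)\pp}$. Subtracting the two terms gives the claimed identity.

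For the consequence, suppose $\Tz$ is multiplicative. Since $\Tz(z)=z$ (because $\cal P$ is $z$-invariant) and $\Tz(\bar z)=\zz$ by diagram~\eqref{z*-def2}, multiplicativity forces $\Tz(z^n\bar z)=z^n\zz$ for every $n\in\Bbb N$, so the identity reduces to $\qform{\bar z^n\rho}{(\bar z-\zz)\pp}=0$ for all $\rho,\pp\in\cal P$ and $n\in\Bbb N$. Since every element of $\pinfty$ is a finite linear combination of vectors of the form $\bar z^n\rho$ with $\rho\in\cal P$, bi-additivity upgrades this to $\qform{\eta}{(\bar z-\zz)\pp}=0$ for every $\eta\in\pinfty$. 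Hence $(\bar z-\zz)\pp\in\ninfty$, which means that in $\qinfty$ we have $\bar z\pp\equiv\zz\pp\in\cal P$. An immediate induction on $n$ then gives $\bar z^n\cal P\subseteq\cal P$ in $\qinfty$ for all $n$, and because the $\bar z^n\varphi$ span $\pinfty$, we conclude $\qinfty=\cal P$, as desired.

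There is no real obstacle; the only thing to be careful about is keeping track of which space each expression lives in—$\cal P$, $\pinfty$, or its quotient $\qinfty$—and invoking the formal adjointness in the correct direction when migrating $\bar z$ from one side of $\qform{}{}$ to the other.
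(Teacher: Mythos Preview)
Your argument is correct and follows essentially the same approach as the paper's own proof: split $\qform{\bar z^n\rho}{(\bar z-\zz)\pp}$ into two terms, use the adjointness of $z$ and $\bar z$ under $\qform{}{}$ to move $\bar z^n$ across, and identify the resulting expressions via $\Pi$ and the definition of $\Tz$. Your treatment of the ``in particular'' clause is somewhat more explicit than the paper's (which simply notes that $\Tz(z^n\bar z)=z^n\zz$ forces $\bar z$ to act as $\zz$ on $\cal P$ and invokes cyclicity), but the underlying reasoning is the same.
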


\begin{proof}
We compute 
\begin{align*}
\qform{\bar{z}^n\rho}{(\bar{z}-\zz)\pp}
&=\qform{\bar{z}^n\rho}{\bar{z}\pp}-\qform{\bar{z}^n\rho}{\zz\pp}
=\qform{\rho}{z^n\bar{z}\pp}-\braket{\rho}{z^n\zz\pp} 
\\
&=\braket{\rho}{\Tz(z^n\bar{z})\pp}-\braket{\rho}{z^n\zz\pp}. 
\end{align*}
We see that $\Tz(z^n\bar{z})=z^n\zz$ if and pnly if 
$\bar{z}$ acts as $\zz$ on $\cal{P}$. Since $\cal{P}$ is 
cyclic for $\bar{z}$ in $\qinfty$ this is only possible when the two spaces coincide.  
\end{proof}

\begin{remark}
Therefore, the complement $\cal{P}^\bot$ of $\cal{P}$ in $\qinfty$ can be viewed as a subtle measure of difference between $\polq$ and $\azz$. It is also worth observing that $(\bar{z}-\zz)\pp\in \cal{P}^\bot$ always.  
\end{remark}

\subsection*{The Fifth Element} 

To sum things up so far, 
our construction produces a canonical $\polq$ representation space $\qinfty$ extending $\cal{P}$ and equipped with a non-degenerate hermitian form $\qform{}{}$. If Algebraic H0 holds, there is a symmetric idempotent $\Pi$ projecting $\qinfty$ onto $\cal{P}$ and realizing the operators $z$ and $\zz$ of $\azz$ as 
compressions of $z$ and $\bar{z}$ of $\polq$ 
by $\Pi$. This is our main algebro-geometric 
setting for the Toeplitz 
quantization. 

The whole construction can be performed without essential changes with $\W$ instead of $\cal{P}$. The advantage 
is that $\W$ is always invariant under both $z$ and $\zzinfty$, as far as we stay within Harmony Two. In this case 
the space $\pinfty$ is redefined as 
\begin{equation}
\pinfty=\polq\otimes_{\Cz}\W\leftrightarrow\mathbb{C}[\bar{z}]\otimes\W. 
\end{equation}

This brings us naturally to our final harmony property.
It is about a mutual relationship 
between the two principal algebras $\polq$ and $\azz$, related via the Toeplitz quantization. 

\begin{definition}
We shall say that {\it Harmony Four} 
 (or simply H4) 
holds if the form $\qform{}{}$ is positive. 
\end{definition}

In the framework of H4 the space $\qinfty$ 
is a pre-Hilbert space and 
can be completed into a Hilbert space $\cal{J}$.  Moreover, $\cal{H}$ 
can be viewed as a subspace of $\cal{J}$. 
The projection $\Pi$ extends 
to an orthogonal projection $\Pi\colon
\cal{J}\rightarrow\cal{H}$, where we use 
the same notation. 

\begin{remark}
This is all much in the spirit of the Stinespring 
construction \cite{Sti} for completely positive maps between C*-algebras. 
\end{remark}

\subsection*{Classification}

The algebra $\polq$ falls into one of 
four distinguished {\it classes}, corresponding to  four canonical forms of $Q$, obtained after making a linear substitution $z\rightsquigarrow az+b$ 
where $a,b\in\Bbb{C}$ and $a\neq 0$. 
Under such a substitution the matrix $Q$ transforms as 
\begin{equation}
\begin{pmatrix}
q_{00} & q_{01}\\
q_{10} & q_{11}
\end{pmatrix}
\rightsquigarrow 
\begin{pmatrix}
q(\bar{b},b)/\vert a\vert^2 & (q_{01}+q_{11}\bar{b})/\bar{a}\\
(q_{10}+q_{11}b)/a & q_{11}
\end{pmatrix}
\end{equation}
where $q(\bar{b},b)=q_{00}+q_{01}b+q_{10}\bar{b}+q_{11}|b|^2$ 
as follows from the defining relation \eqref{quantum}. 
Let us briefly describe these classes. 

\subsubsection*{Principal Flipping Type} This is defined by $q=1+q_{11}\neq 0, 1$. 
After the appropriate linear substitution, we arrive at 
\begin{equation}
z\bar{z}=q\bar{z}z + h\qquad h\in\{-|q|,0,1\}\qquad Q=\begin{pmatrix}h & 0\\ 
0 & q_{11}\end{pmatrix}. 
\end{equation} 

There are 6 important special subcases. The first one is given by $h=0$. This is the Manin $q$-plane. A detailed analysis of this example and the general theory of its Toeplitz quantization can be found in \cite{coherent}.  
 
The classical part of the corresponding quantum space (given by the characters of $\polq$) is just one point--the character evaluating to 0 on $z$ and $\bar{z}$. 
An important part of the analysis is the question of realizability of the algebra by operators in a Hilbert space, the case $q>0$ is realizable, and we can always assume $0<q<1$ since 
if $q>1$ then by interchanging $z$ and $\bar{z}$ we have $q\rightsquigarrow 1/q$. As we shall explain below, 
there is a natural realization of this algebra in a Hilbert space of holomorphic functions in $\Bbb{C}-\{0\}$. The case 
$q<0$ is, clearly, not realizable by operators in a Hilbert space.  

Next, there are two essentially different cases with $q>0$ and $h=-q,1$. If $0<q<1$ and $h=-q$ or $q>1$ and $h=1$ then we obtain equivalent realizations of a $q$-variant of the standard quantum plane. If, on the other hand, $0<q<1$ and $h=1$ or equivalently $q>1$ and $h=-q$, then the algebra will represent a Poincar\'e model for a quantum hyperbolic plane \cite{MP}. The horizon of infinity is traced by a classical circle centered at $0$ of radius $\sqrt{h/1-q}$. These can be morphed into 
$$\bar{z}z=qz\bar{z}+1-q$$ 
with $0<q<1$, to fit the unitary disk $\Bbb{D}$. 

If $q$ is negative, there are also two essentially different situations. The first one is given by $h=q$ and corresponds 
to a character-free *-algebra non-realizable in a Hilbert space. If $h=1$ then we are again within the quantum 
hyperbolic planes, the structure can be morphed into the same generating expression as above, but now with $-1<q<0$. 

\subsubsection*{Parabolic Type} This corresponds to $q_{11}=0$ with $q_{01}=\overline{q_{10}}\neq 0$, which can be transformed into
\begin{equation}\label{parabolic-z-z*}
\bar{z}z=z\bar{z}+z+\bar{z}\qquad\quad Q=-\begin{pmatrix}0 & 1\\1 & 0\end{pmatrix}. 
\end{equation}
The classical points are naturally labeled by purely imaginary numbers. The generators $z$ and $\bar{z}$ 
linearly span a non-commutative 2-dimensional Lie algebra. Abstractly, there is only one such a structure 
and one concrete realization is the (complexified) Lie algebra of the group $t\longmapsto at+b$ where 
$a,b\in\Bbb{R}$ with $a>0$. These are 
orientation preserving affine transformations of $\Bbb{R}$. The algebra $\polq$ 
is then viewable as the universal envelope of this Lie algebra.  

\subsubsection*{Orthodox Quantum Plane} This corresponds to $q_{11}=q_{01}=q_{10}=0$ and $q_{00}\neq 0$. In 
this case $q_{00}$ can be scaled to $-1$ or $1$ and we obtain the Heisenberg-Weyl algebra 
\begin{equation}
z\bar{z}-\bar{z}z=\pm 1\qquad\quad Q=\pm\begin{pmatrix}1 & 0 \\ 0 & 0\end{pmatrix}. 
\end{equation} 

\subsubsection*{Classical Euclidean Plane} 
The classical commutative polynomial algebra $\Bbb{C}[z,\bar{z}]$ is obtained 
when $q_{11}=q_{01}=q_{10}=q_{00}=0$. 

\subsection*{Diagonal Scalar Product}

In this sub-section we will assume that 
the initial scalar product is diagonal, 
in which case 
all calculations significantly simplify. 
As we have explained already, 
property H0 always holds in this context. 
Also, recall the definition \eqref{simple-sequence}
of the sequence 
$ s_{n} := s_{nn} = \braket{z^n}{z^n} > 0 $, which 
characterizes $ \braket{}{} $ in this case. 
\begin{lemma}
In terms of the sequence $ s_{n} $, 
the quadratic form $\qform{}{}$ is 
determined by  
\begin{equation}\label{form-general}
\qform{\bar{z}^i z^j}{ \bar{z}^n z^m}=\sum_{k\geq 0}\qqqq{i}{n}{k}{k+m-j}s_{k+m}. 
\end{equation}
\end{lemma}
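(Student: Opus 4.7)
The plan is to obtain \eqref{form-general} by direct specialization of the general quadratic form formula \eqref{q-forma} to the diagonal case.

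First I would apply \eqref{q-forma} with $\varphi = z^j$ and $\psi = z^m$, which gives
\[
\qform{\bar{z}^i z^j}{\bar{z}^n z^m} = \sum_{k,l \geq 0} \qqqq{i}{n}{k}{l} \braket{Z^l z^j}{Z^k z^m}.
\]
Since $Z$ acts as multiplication by $z$ on $\Cz$, we have $Z^l z^j = z^{j+l}$ and $Z^k z^m = z^{k+m}$, so the scalar product in the sum becomes $\braket{z^{j+l}}{z^{k+m}} = s_{j+l,\,k+m}$.

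Next I would invoke the diagonality hypothesis. Since $\braket{}{}$ is diagonal, $s_{ab} = \delta_{ab} s_a$ in the notation of \eqref{simple-sequence}. Therefore $\braket{z^{j+l}}{z^{k+m}}$ vanishes unless $j + l = k + m$, in which case it equals $s_{k+m}$. Substituting this Kronecker delta collapses the double sum over $(k,l)$ into a single sum by forcing $l = k + m - j$, yielding
\[
\qform{\bar{z}^i z^j}{\bar{z}^n z^m} = \sum_{k \geq 0} \qqqq{i}{n}{k}{k+m-j} s_{k+m},
\]
which is the desired identity \eqref{form-general}.

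The only mild subtlety is that the index $k + m - j$ could a priori be negative when $k < j - m$, so one might worry that terms with $k < j - m$ should be excluded from the sum. However, by the structural remark following \eqref{4-array}, the coefficients $\qqqq{i}{n}{k}{l}$ vanish outside the range $0 \leq k \leq i$, $0 \leq l \leq n$, so the terms with $k + m - j < 0$ (which would correspond to $l < 0$ in the original double sum) contribute zero automatically. Thus the sum over all $k \geq 0$ is well-defined as written, and no obstacle arises. The computation is essentially a one-line application of \eqref{q-forma} combined with diagonality, and there is no genuine difficulty to flag.
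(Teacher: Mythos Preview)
Your proof is correct and follows essentially the same approach as the paper. The only cosmetic difference is that you invoke the already-established formula \eqref{q-forma} directly, whereas the paper re-derives that step inline by first using the symmetry of $\qform{}{}$ to write $\qform{\bar{z}^i z^j}{\bar{z}^n z^m}=\qform{z^j}{z^i\bar{z}^n z^m}$ and then expanding $z^i\bar{z}^n$ via the 4-array; after that, both arguments reduce the double sum to a single sum using diagonality in exactly the same way.
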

\begin{proof}
This is straightforward from 
the definitions of $ s_{n} $ and 
the 4-array $\qqqq{}{}{}{}$: 
\begin{align*}
\qform{\bar{z}^i z^j}{\bar{z}^n z^m}
&=\qform{z^j}{z^i\bar{z}^nz^m}
=\sum_{k,l\geq 0} \qqqq{i}{n}{k}{l}\langle z^{j+l},z^{k+m}\rangle=\sum_{k,l\geq 0} \qqqq{i}{n}{k}{l}
s_{j+l}\delta_{j+l,k+m}
\\
&=\sum_{k\geq 0}\qqqq{i}{n}{k}{k+m-j}s_{k+m}. 
\end{align*}
We have also used the fact that the *-operation exchanges $z$ and $\bar{z}$. 
\end{proof}

\begin{remark}
In particular the quadratic form for the commutative polynomial algebra $\Bbb{C}[z,\bar{z}]$ 
is determined by
\begin{equation}
\label{form-extended}
\qform{
\bar{z}^iz^{j}}{ 
\bar{z}^{n}z^m}= s_{i+m}\, \delta_{i-j,n-m},
\end{equation}
where we used \eqref{4-array-commutes-condition}. 
This also follows easily by direct verification 
without appealing to the above Lemma. 
Notice even in this highly simplified case
that the vector space basis $ \bar{z}^{n}z^m $ 
of $\Bbb{C}[z,\bar{z}]$ is not orthogonal. 
Formula \eqref{form-extended} with 
$ s_{n} = n! $ is readily available in
the setting of Bargmann's seminal paper \cite{barg}, 
even though it is not explicitly given there.     
\end{remark}

Let us now discuss Harmony One and Two. 
As for the interpretability of all the elements 
of $\cal{H}$ as entire functions, 
this is not always possible.
It requires a special asymptotic 
behavior of the 
$\braket{}{}$ defining sequence $ s_{n} $.  

\begin{prop} A necessary and sufficient condition for Harmony One is that 
\begin{equation}\label{s-infinity}
\lim \, s_n^{1/n}=+\infty. 
\end{equation}
In this case the reproducing kernel for $\cal{H}$ is given by
\begin{equation}\label{K-diagonal-case}
K(\bar{w},z)=\sum_{n\geq 0} \frac{\bar{w}^n z^n}{s_n}. 
\end{equation}
\end{prop}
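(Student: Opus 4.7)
The plan is to invoke Proposition~\ref{harmony-one} and verify its two conditions (normal convergence of $\sum|\pp_n(z)|^2$ and $\infty$-linear independence of the orthonormal polynomials) in the diagonal setting, where by \eqref{pp-n-s} we have the explicit formula $\pp_n(z)=z^n/\sqrt{s_n}$.

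First I would address the convergence part. We have
\begin{equation*}
\sum_{n=0}^{\infty}|\pp_n(z)|^2=\sum_{n=0}^{\infty}\frac{|z|^{2n}}{s_n},
\end{equation*}
which is a power series in the real variable $|z|^2$. The Cauchy--Hadamard formula tells us that its radius of convergence $R$ satisfies $1/R=\limsup_n(1/s_n)^{1/n}$, so normal convergence on all of $\Bbb{C}$ (i.e.\ $R=+\infty$) is equivalent to $\limsup_n(1/s_n)^{1/n}=0$, which in turn is equivalent to $\liminf_n s_n^{1/n}=+\infty$ and hence to $\lim_n s_n^{1/n}=+\infty$. This establishes the equivalence with \eqref{s-infinity}.

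Next I would dispose of the injectivity/$\infty$-linear independence clause. Since each $\pp_n(z)$ is a scalar multiple of the monomial $z^n$, any formal combination $\sum_{n\geq 0}c_n\pp_n(z)$ with $\sum|c_n|^2<\infty$ is literally the power series $\sum_{n\geq 0}(c_n/\sqrt{s_n})\,z^n$. By Cauchy--Schwarz bounded by $\bigl(\sum|c_n|^2\bigr)^{1/2}\bigl(\sum|z|^{2n}/s_n\bigr)^{1/2}$, under \eqref{s-infinity} this series converges normally to an entire function whose Taylor coefficient at $z^n$ is $c_n/\sqrt{s_n}$. Uniqueness of Taylor coefficients of entire functions forces the $c_n$ to be uniquely determined by the sum, so the second condition in Proposition~\ref{harmony-one} is automatic.

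Finally, given H1, the reproducing kernel formula \eqref{K} specializes via $\pp_n(z)=z^n/\sqrt{s_n}$ to
\begin{equation*}
K(\bar{w},z)=\sum_{n=0}^{\infty}\overline{\pp_n(w)}\,\pp_n(z)=\sum_{n=0}^{\infty}\frac{\bar{w}^n z^n}{s_n},
\end{equation*}
which is \eqref{K-diagonal-case}. There is no real obstacle here; the only delicate point is recognizing that, thanks to the diagonality of $\sinfty$, the orthonormal polynomials collapse to monomials, which trivializes the injectivity hypothesis of Proposition~\ref{harmony-one} and reduces the entire statement to a classical radius-of-convergence computation.
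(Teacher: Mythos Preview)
Your argument is correct and matches the paper's own proof essentially step for step: both invoke Proposition~\ref{harmony-one}, reduce the normal-convergence condition to a Cauchy--Hadamard radius computation on $\sum|z|^{2n}/s_n$, and observe that $\infty$-linear independence is automatic because the $\pp_n$ are monomials and power-series coefficients are unique. The only difference is that you spell out a few more details (the $\liminf$ versus $\lim$ step, the Cauchy--Schwarz bound), which the paper leaves implicit.
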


\begin{proof}
The formula \eqref{K-diagonal-case} 
for the reproducing kernel is a special case of \eqref{K} with $\pp_n(z)$ given by \eqref{pp-n-s}. 
Let us now apply Proposition~\ref{harmony-one}. 
The series \eqref{z-ppn-series} becomes 
$$ \sum_{n=0}^\infty |\pp_n(z)|^2=\sum_{n=0}^\infty \frac{|z|^{2 n}}{s_n}. $$

For this to be normally convergent on $\Bbb{C}$ a necessary and sufficient condition is indeed  \eqref{s-infinity}, 
as the Cauchy-Hadamard formula reveals. 
Next, the $\infty$-linear independence always holds
here, because all power series encode the values of their coefficients. 
\end{proof}

\begin{prop} A necessary and sufficient condition for Harmony Two, in the framework of Harmony One, is that 
the sequence   
\begin{equation}\label{tn-infty}
t_n=\Bigl(\frac{s_{n+1}}{s_n}\Bigr)^{1/n}
\end{equation} 
be bounded. 
\end{prop}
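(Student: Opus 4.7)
The plan is to reduce Harmony Two in the diagonal setting to normal continuity of a concrete formal operator acting on polynomials, and then analyze that operator using Cauchy estimates.

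\medskip

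\textbf{Step 1 (Reduction).} In the diagonal case H0 holds automatically (by the earlier remark), so $\Cz\subseteq\W\subseteq\mathrm{D}(Z^*)$ and the restriction $Z^*\vert\Cz$ is the formal adjoint $\zz$ with $\zz z^n=(s_n/s_{n-1})z^{n-1}$ for $n\geq 1$ and $\zz 1=0$, as in \eqref{Z-down}. Because $\Cz$ is dense in $\cal{H}$ in the norm topology and therefore, under H1, also in the normal topology of $\cal{H}$ (and of $\HolC$), H2 is equivalent to normal continuity of the map $\zz\colon\Cz\to\Cz\subseteq\HolC$, after which $\zz$ extends uniquely to $\zzinfty$ on $\HolC$.

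\medskip

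\textbf{Step 2 (Necessity).} The normal topology on $\HolC$ is generated by the seminorms $p_R(\psi):=\sup_{|z|\leq R}|\psi(z)|$ for $R>0$. Normal continuity of $\zz$ means: for every $R>0$ there exist $R'>0$ and $C>0$ such that $p_R(\zz p)\leq C\,p_{R'}(p)$ for every polynomial $p$. Taking $R=1$ and testing on $p(z)=z^{n+1}$ gives
\begin{equation*}
\frac{s_{n+1}}{s_n}=p_1(\zz z^{n+1})\leq C\,p_{R'}(z^{n+1})=C\,(R')^{n+1},
\end{equation*}
so $t_n=(s_{n+1}/s_n)^{1/n}\leq C^{1/n}(R')^{(n+1)/n}$, and hence $\limsup_n t_n\leq R'<\infty$. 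Thus $(t_n)$ is bounded.

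\medskip

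\textbf{Step 3 (Sufficiency).} Suppose $t_n\leq M$ for all $n\geq 1$, equivalently $s_{n+1}/s_n\leq M^n$. Fix $R>0$ and choose $\rho>MR$, for instance $\rho=2\max(M,1)R$. For any polynomial $p(z)=\sum_{n\geq 0}a_n z^n$, the Cauchy estimate gives $|a_{n+1}|\leq p_\rho(p)/\rho^{n+1}$. Then for $|z|\leq R$,
\begin{equation*}
|\zz p(z)|\leq\sum_{n\geq 0}|a_{n+1}|\,\frac{s_{n+1}}{s_n}\,R^n\leq p_\rho(p)\sum_{n\geq 0}\frac{(MR)^n}{\rho^{n+1}}=\frac{p_\rho(p)}{\rho-MR},
\end{equation*}
so $p_R(\zz p)\leq C\,p_\rho(p)$ with $C:=1/(\rho-MR)$. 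This yields normal continuity of $\zz$ on $\Cz$, hence H2.

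\medskip

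\textbf{Main obstacle.} The one nontrivial point is the sufficiency argument, which needs the Cauchy estimate trick of bounding coefficients of $p$ via its sup-norm on a slightly larger disk, so that the geometric series $\sum(MR/\rho)^n$ converges. The reduction in Step 1 and the test-function argument of Step 2 are essentially bookkeeping using the structural facts already established (diagonality giving H0, density of $\Cz$ in the normal topology, and the explicit action \eqref{Z-down} of $\zz$ on the basis).
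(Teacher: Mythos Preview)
Your proof is correct and more explicit than the paper's. The two routes differ mainly in how the continuity criterion is phrased. For necessity, the paper argues that the continuous extension $\zzinfty$ must send entire functions to entire functions, translates this via Cauchy--Hadamard into the statement that $r_nt_n\to 0$ for every null sequence $(r_n)$, and then recognizes $(t_n)$ as lying in the multiplier algebra of $\mathrm{C}_0(\Bbb{N})$, hence bounded. You instead plug the test monomials $z^{n+1}$ directly into the seminorm inequality defining normal continuity, which is more elementary and yields the bound in one line. For sufficiency, the paper simply says ``direct verification''; your Cauchy-estimate argument supplies precisely that verification. (One trivial point: the bound $s_{n+1}/s_n\le M^n$ is only asserted for $n\ge 1$ since $t_0$ is undefined, so the $n=0$ term $|a_1|\,s_1/s_0$ must be handled separately; it contributes a fixed multiple of $p_\rho(p)$ and is absorbed into the constant $C$.) Your approach is more hands-on, while the paper's is more conceptual but leaves the actual estimate to the reader.
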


\begin{proof} If $Z^*$ is normally continuous by H2, 
then by using \eqref{pp-n-s} and \eqref{Z-down} 
we have  
$$
\sum_{n\geq 0}c_nz^n\longmapsto \sum_{n\geq 0}(c_{n+1}s_{n+1}/s_n)z^n
$$
consistently defines the extension of $Z^*$ on the whole $\HolC$. 
This means that the radius of convergence for  the resulting power series on the right must 
always be infinite, 
that is, the same as the radius of convergence of the initial power series on the left. 
This property can be rephrased in terms of the sequences as follows. For any sequence of complex
numbers $r_n$ converging to $0$, the sequence $r_nt_n$ must also converge to zero. 
This means that the sequence $t_n$ belongs to the multiplier algebra of the algebra $\mathrm{C}_0(\Bbb{N})$
of all sequences having limit zero. 
This multiplier algebra is precisely 
all bounded sequences $\mathrm{B}(\Bbb{N})$. So $t_n$ must be bounded. 

Conversely, it is a matter of a direct verification that the boundedness of $t_n$ implies that 
the above formula defines a (necessarily unique) continuous extension of $Z^*$ on the whole $\HolC$.  
\end{proof}

\begin{prop}
A particularly effective scenario for the applicability of the above criterion occurs when 
\begin{equation}\label{2-razlomka}
\lim s_{n+1}/s_n=+\infty\qquad\quad\lim s_n s_{n+2}/s_{n+1}^2 <+\infty. 
\end{equation}
\end{prop}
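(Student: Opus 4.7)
The plan is to invoke the two preceding propositions: Harmony One is equivalent to $\lim s_n^{1/n} = +\infty$, and Harmony Two (within H1) is equivalent to the boundedness of the sequence $t_n = (s_{n+1}/s_n)^{1/n}$. Both conclusions will follow from a single classical tool applied twice, namely the ratio--root inequality for a sequence $a_n > 0$ of positive reals:
\begin{equation*}
\liminf_{n\to\infty}\frac{a_{n+1}}{a_n}\;\leq\;\liminf_{n\to\infty}a_n^{1/n}\;\leq\;\limsup_{n\to\infty}a_n^{1/n}\;\leq\;\limsup_{n\to\infty}\frac{a_{n+1}}{a_n}.
\end{equation*}

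First, I would apply this inequality with $a_n = s_n$. The first hypothesis $\lim s_{n+1}/s_n = +\infty$ forces $\liminf s_n^{1/n} = +\infty$, so $s_n^{1/n}\to+\infty$. By the earlier criterion \eqref{s-infinity} this gives Harmony One, which places us in the setting where the H2 criterion of the preceding proposition is applicable.

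Next, to verify that $t_n$ is bounded I would introduce the auxiliary positive sequence $u_n := s_{n+1}/s_n$, so that $t_n = u_n^{1/n}$ by \eqref{tn-infty}. A direct computation gives the telescoping identity
\begin{equation*}
\frac{u_{n+1}}{u_n}=\frac{s_n s_{n+2}}{s_{n+1}^{\,2}},
\end{equation*}
so the second hypothesis amounts exactly to $\limsup u_{n+1}/u_n < +\infty$. Applying the ratio--root inequality now to $a_n = u_n$ yields $\limsup t_n = \limsup u_n^{1/n} < +\infty$. Combined with $t_n > 0$, this shows that the sequence $(t_n)$ is bounded, which by the previous proposition is equivalent to Harmony Two.

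There is no serious obstacle here; the proof is essentially a packaging statement. The only mild subtlety is interpreting the second hypothesis as a statement on the $\limsup$ rather than insisting the limit exist --- the conclusion only needs the upper bound on the growth of $u_{n+1}/u_n$. If desired one could also remark that the same ratio--root argument shows $t_n$ converges (to the same limit $L$ as $s_n s_{n+2}/s_{n+1}^{\,2}$) whenever that limit exists in $[0,+\infty)$, giving a slightly sharper asymptotic than mere boundedness.
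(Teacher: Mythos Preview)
Your proof is correct and follows essentially the same route as the paper: both apply the classical ratio--root criterion twice, first to $a_n=s_n$ to obtain $s_n^{1/n}\to+\infty$ (H1), and then to $a_n=u_n=s_{n+1}/s_n$, noting that $u_{n+1}/u_n=s_ns_{n+2}/s_{n+1}^2$, to conclude that $t_n=u_n^{1/n}$ is bounded (H2). Your version is simply more explicit than the paper's two-line sketch, and your remark about needing only the $\limsup$ bound is a fair sharpening.
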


\begin{proof} A direct application of the classical convergence criterion. 
If $s_{n+1}/s_n$ is convergent then $s_n^{1/n}$ is convergent with the same limit. Similarly if 
the second sequence is convergent then $t_n$ will be convergent with the same limit.  
\end{proof}

\subsection*{The Two Frameworks Intersection}

Let us now elaborate more on the special situation, interesting in its own light, 
where the quadratic algebra $\polq$ is itself viewable as the Toeplitz operator algebra 
$\azz$ for some Hilbert space of entire functions $\cal{H}$ satisfying Harmony Two. 

The Heisenberg-Weyl algebra can be viewed in this way, via 
the Segal-Bargmann space, where $s_n=n!$ and $K(\bar{w},z)=\exp(\bar{w}z)$. 
(See \cite{barg}.)
Another class of examples is given by the $q$-variation 
\begin{equation}
z\bar{z}=q\bar{z}z-q\qquad\quad 0<q<1
\end{equation}
of the Heisenberg-Weyl algebra. 
Indeed, if we define the scalar product by the sequence 
\begin{equation}\label{q-factorial}
s_n=!_q(n):=\prod_{k=0}^{n-1}(1+1/q+\cdots+1/q^{k-1})
\end{equation}
then 
$$
s_{n+1}/s_{n}=1+1/q+\cdots+1/q^n\rightsquigarrow\infty\qquad\quad \frac{s_ns_{n+2}}{s_{n+1}^2}
=\frac{1}{q}\frac{1-q^{n+2}}{1-q^{n+1}}\rightsquigarrow \frac{1}{q}
$$ 
and thus H2 holds. A direct calculation then reveals that $z\longmapsto Z$ and $\bar{z}\longmapsto
Z^*$ extends to an isomorphism between $\polq$ and $\azz$. 

Furthermore, it turns out that the corresponding reproducing kernel is given by 
\begin{equation}\label{K=Eq}
K(\bar{w},z)=E_q(\bar{w}z)
\end{equation}
where $E_q$ is the $q$-exponential function. Let us recall (\cite{BHS}--Appendix II) 
that the $q$-exponential function is defined as  
\begin{equation}
E_q(z)=(qz-z|q)_\infty= \sum_{n=0}^\infty \frac{(1-q)^n}{(q|q)_n}q^{\textstyle{\binom{n}{2}}} z^n
=\sum_{n=0}^\infty\frac{z^n}{!_q(n)}
\end{equation}
where 
\begin{equation}
(a|q)_n=\prod_{k=0}^{n-1}(1-q^ka)\qquad (a|q)_\infty=\prod_{k=0}^\infty (1-q^ka).
\end{equation}
It is also worth recalling (\cite{SF}--Chapter~10, Section~2) the second Euler $q$-identity
\begin{equation}
\sum_{n=0}^\infty \frac{(-1)^nq^{\textstyle{\binom{n}{2}}}}{(q|q)_n} z^n=(z|q)_\infty
\end{equation}
valid for $|q|<1$ and all $z\in\Bbb{C}$. 

Let us now calculate the induced metric on $\Bbb{C}$. Applying the formula \eqref{K-induced} we obtain
\begin{equation}
\de s^2=\sum_{n=0}^\infty\frac{(1-q)q^n\:\de w \de\bar{w}}{(1+(1-q)q^n|w|^2)^2}. 
\end{equation}
With the help of a $q$-logarithm function 
\begin{equation}\label{q-log}
\begin{aligned}
\log_q(1+z)&=(1-q)\sum_{n\geq 0}
\frac{zq^n}{1+zq^n}\\
\log_q(1+z)&=\sum_{k\geq 0}\frac{(-1)^kz^{k+1}}{1+\cdots+q^k} 
\end{aligned}
\end{equation}
where the first formula is valid for arbitrary $z\in\Bbb{C}$ and the second in the unit disk $|z|<1$, 
the metric can also be expressed as 
\begin{equation}
\de s^2=\log_q'\bigl[1+(1-q)|w|^2\bigr]\:\de w\de\bar{w}.  
\end{equation}
We see that in the 
limit $q\rightsquigarrow 1^-$ this reproduces the classical Euclidean metric, and the first quantum correction 
looks elliptic
\begin{equation}
\de s^2\approx \bigl(1-2\frac{1-q}{1+q}|w|^2\bigr)\,\de w\de\bar{w}\qquad |w|^2\ll \frac{1}{1-q}. 
\end{equation}
We present more formulae related to these spaces in Appendix~B. 

\begin{prop} Modulo equivalence transformations of the quadratic form $Q$, the only algebras $\polq$ realizable 
as a Toeplitz operators algebra 
$\azz$ are precisely the Heisenberg-Weyl algebra and its above described $q$-variants. 
\end{prop}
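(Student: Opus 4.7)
The plan is to work through the four canonical forms of $Q$ catalogued in the preceding classification subsection and rule out every one except Heisenberg-Weyl and principal flipping with $h=-|q|$, $q>0$. The equivalence transformation $z\mapsto az+b$ is precisely what produces those canonical forms, so it suffices to analyze each form separately.

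First I would translate the hypothesis $\polq\cong\azz$ into an operator identity. The abstract relation becomes
\[
ZZ^*-qZ^*Z=q_{00}+q_{01}Z+q_{10}Z^*,\qquad q=1+q_{11},
\]
holding on the common invariant dense subspace $\W\supset\Cz$. Applying this to the cyclic vector $\ket{0}=1/\sqrt{s_{00}}$ and, after a preliminary translation normalizing $\braketop{0}{Z}{0}=0$, taking matrix elements in the basis $B=\{\pp_n(z)\}$, I would set up an induction. The Hessenberg structure of $Z$ in $B$ (vanishing strictly above the first superdiagonal) combined with the scalar constraint $\|\zz\ket{0}\|^2-q\|Z\ket{0}\|^2=q_{00}$ should cascade through the tower $\{Z^n\ket{0}\}$ to force $\sinfty$ into diagonal form $s_{nm}=s_n\delta_{nm}$, with $s_n$ obeying the recursion $s_n/s_{n-1}=q\,s_{n+1}/s_n+q_{00}$. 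Once diagonality is established the explicit solutions of this recursion can be read off, and the Harmony One criterion $\lim s_n^{1/n}=+\infty$ together with the Harmony Two criterion of boundedness of $(s_{n+1}/s_n)^{1/n}$ select which cases survive.

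Next I would rule out the non-allowed canonical forms one by one. The classical Euclidean case $Q=0$ gives commutative $\azz$, in which the diagonal recursion yields $s_n=s_0 r^n$ with $\lim s_n^{1/n}=r<\infty$, incompatible with Harmony One. The Manin $q$-plane ($h=0$, $q>0$) yields $s_n=s_0 a_0^n q^{-\binom{n}{2}}$, whose reproducing kernel series \eqref{K-diagonal-case} diverges at $z=0$ and converges only on $\Bbb{C}-\{0\}$, matching the Laurent-series realization mentioned earlier. The hyperbolic subcases ($h=+1$, either sign of $q$) give $\lim s_n^{1/n}$ finite, hence only a disk realization. Any principal-flipping case with $q<0$ forces some $s_n<0$ in the recursion, breaking positivity of $\braket{}{}$. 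The parabolic relation $[Z,Z^*]=-Z-Z^*$ combined with $\braketop{0}{Z}{0}=0$ produces a recursion with no strictly positive solution compatible with the entire-function growth. Finally, Heisenberg-Weyl with $q_{00}>0$ forces the number operator $N=Z^*Z$ to have a negative spectrum on the orthonormal tower $\{\ket{n}\}$, contradicting $N\ge 0$.

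The remaining cases, Heisenberg-Weyl with $q_{00}=-1$ and principal flipping with $h=-q$, $q>0$, are exactly the Segal-Bargmann ($s_n=n!$) and its $q$-analogue ($s_n=!_q(n)$) realizations constructed earlier in this subsection. The main obstacle is the diagonality induction: translating the operator identity into concrete scalar recursions on $s_{nm}$ requires careful bookkeeping of the Hessenberg matrix elements of $Z$ against the precise form of the quadratic relation, and is where most of the technical work lies. A secondary obstacle is, for the non-allowed canonical forms, cleanly separating algebraically consistent sequences $s_n$ (which do realize $\polq$ on other domains such as $\Bbb{C}-\{0\}$ or a disk) from those compatible with the \emph{entire}-function requirements of Harmony One and Two.
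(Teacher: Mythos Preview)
Your approach is genuinely different from the paper's, and considerably more laborious. The paper's proof is two lines: since $\polq\cong\azz$ would transport the Wick flip-over relation to $\azz$, Proposition~\ref{pointless} applies and forces $\azz$ to be character-free; hence $\polq$ itself must be character-free, which (consulting the classification) leaves only the Heisenberg--Weyl algebra and its $q$-variants. You never invoke this character obstruction, opting instead for a direct reconstruction of the defining sequence $s_n$ from the operator relation and a case-by-case check of Harmony One and Two.

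There is a concrete error in your Manin $q$-plane elimination. With $s_n=s_0a_0^nq^{-\binom{n}{2}}$ and $0<q<1$, the kernel series $\sum(\bar w z)^n/s_n$ has coefficients decaying like $q^{\binom{n}{2}}$, so it converges everywhere on $\Bbb{C}$, including at $z=0$; Harmony One is \emph{not} violated. The actual obstruction is algebraic: applying $ZZ^*=qZ^*Z$ to $\ket{0}$ gives $0=q(s_1/s_0)\ket{0}$, forcing $s_1=0$, which contradicts strict positivity of the scalar product. Equivalently, any diagonal realization of $Z$ satisfying the Manin relation exactly would annihilate the vacuum, but the paper shows the relation one actually obtains on $\cal H$ is $ZZ^*=qZ^*Z-q\diada{0}{0}$, so $\azz$ is a nontrivial extension of the Manin algebra, not the Manin algebra itself. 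The character argument handles this instantly: the Manin $q$-plane has the character $z,\bar z\mapsto 0$.

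A second gap is the diagonality induction. You assert that the Hessenberg shape of $Z$ together with the quadratic relation ``should cascade'' to force $S_\infty$ diagonal, but you give no mechanism for why off-diagonal entries must vanish, and this is far from automatic for a general quadratic relation with linear terms. Even granting diagonality, several of your eliminations (parabolic, $q<0$) are stated as conclusions without the supporting recursion analysis. The paper's character argument bypasses all of this bookkeeping.
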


\begin{proof} As we have just explained, the Heisenberg-Weyl algebra and $q$-variations are realizable as some $\azz$. To complete the proof, let us observe that the *-algebra 
$\azz$ can not possess characters, in accordance with Proposition~\ref{pointless}. This 
effectively excludes all other algebras $\polq$. 
\end{proof}

Let us now consider the Manin $q$-plane
\begin{equation}
z\bar{z}=q\bar{z}z\qquad 0<q<1. 
\end{equation}
The formula \eqref{form-general} for the quadratic form $\qform{}{}$ simplifies into
\begin{equation}\label{form-q-extended}
\qform{
\bar{z}^{i} z^{j}}{ 
\bar{z}^{n}z^{m}}= q^{in}s_{i+m}\, \delta_{i-j,n-m}
\end{equation}
a $q$-deformed version of \eqref{form-extended}. 

There is only one classical point here. 
It is the unique 
character evaluating to $0$ on $z$ and $\bar{z}$,  which refers to the center $0$ of the classical plane. 
The algebra therefore can not be realized within our principal framework.
But if we remove $0$ from 
consideration 
and allow Laurent series with singularity in $0$ as 
constituents of our Hilbert space $\cal{H}$, then a faithful representation is possible. 

Let us consider the space $\qinfty$ of generalized 
polynomials with all integer powers of $z$, and define the 
scalar product by requiring mutual orthogonality of the monomials $z^n$ and also 
\begin{equation}\label{zn-zn-binom}
\braket{z^n}{z^n}=q^{-\textstyle{\binom{n}{2}}} 
\end{equation}
for all $n\in\Bbb{Z}$. 
If we interpret $z$ as the multiplication operator by the coordinate $z$ and $\bar{z}$ as its formal 
adjoint, then 
the above non-commutation relation for the Manin $q$-plane is fulfilled, and we have a faithful representation. 
 
This space $\qinfty$ closes into a Hilbert space $\cal{J}$ of holomorphic functions over $\Bbb{C}-\{0\}$. Its reproducing kernel is given by 
\begin{equation}\label{K-triple-q}
K(\bar{w},z)=\sum_{n\in\Bbb{Z}}q^{\textstyle{\binom{n}{2}}}(\bar{w}z)^n.
\end{equation} 
Let us recall (\cite{SF}--Chapter 10, Section 4) the classical triple product identity
\begin{equation}
\sum_{n\in\Bbb{Z}}(-1)^nq^{\textstyle{\binom{n}{2}}}z^n=(z|q)_\infty(q/z|q)_\infty(q|q)_\infty
\end{equation}
which holds for $z\neq 0$ and $|q|<1$. Applying this to our reproducing kernel, we obtain 
\begin{equation}\label{classical-triple}
K(\bar{w},z)=(-\bar{w}z|q)_\infty(-q/\bar{w}z|q)_\infty(q|q)_\infty.
\end{equation}
The induced metric on $\Bbb{C}-\{0\}$ is calculated by applying \eqref{K-induced} to this infinite product. The first two symbols on the right side of  
\eqref{classical-triple} are 
transformed by the logarithm into infinite sums over $\Bbb{N}$, which after applying the derivation $\partial^2/\partial w\partial\bar{w}$ merge into a single sum over $\Bbb{Z}$.
The third symbol $(q|q)_\infty$ does not contribute to the metric, as it is a multiplicative constant transformed by $\log$ into an additive constant. 
Explicitly, we get 
\begin{equation}
\de s^2=\sum_{n\in\Bbb{Z}}\frac{q^n\,\de w\,\de\bar{w}}{(1+q^n\bar{w}w)^2}. 
\end{equation}

We see that the unique classical point represented by $0$ acts as a true geometrical singularity, as the metric diverges when $|w|$ tends to $ 0 $. 
On the other hand, far away from $0$
the metric becomes asymptotically Euclidean. 

Let $\Pi\colon\cal{J}\rightarrow\cal{J}$ be the orthogonal projector on the positive part, the Hilbert space $\cal{H}$ generated by the standard polynomials $\Cz$. 
Clearly $\Pi(\qinfty)=\cal{P}$ and in such a way, for the 
sequence $s_n$ given by the positive part of \eqref{zn-zn-binom} we obtain a non-commutative 
system satisfying Harmony Three. 
We can see this explicitly from  
$$ 
\frac{s_{n+1}}{s_n}=q^{-n}\qquad\quad \frac{s_{n+1}^2}{s_ns_{n+2}}=q. 
$$ 
This construction, which went `backwards' relative to our main considerations, ensures that Harmony Four holds, too. 
The space $\cal{H}$ consists of all entire functions of $\cal{J}$ with the 
reproducing kernel given by 
\begin{equation}
K(\bar{w},z)=\sum_{n\geq 0}q^{\textstyle{\binom{n}{2}}}(\bar{w}z)^n.
\end{equation} 
The induced metric is hence
\begin{equation}
\de s^2=\sum_{n\geq 0}\frac{q^n\,\de w\,\de\bar{w}}{(1+q^n\bar{w}w)^2}
=\log_q'(1+|w|^2)\frac{\de w\,\de\bar{w}}{1-q}. 
\end{equation}

The operators $Z$ and $Z^*$ satisfy the following non-commutation relation
\begin{equation}
ZZ^*=qZ^*Z -q\diada{0}{0}. 
\end{equation}
From this it is easy to see that the whole matrix algebra $\mathrm{M}_\infty(\Bbb{C})$ is included in $\azz$, and 
that the above relation is in fact a generating relation for $\azz$. The following short exact sequence holds: 
\begin{equation}
0\rightarrow \mathrm{M}_\infty(\Bbb{C})\longrightarrow \azz\longrightarrow \polq\rightarrow 0. 
\end{equation}
So, basically, this Toeplitz quantization smooths out the singularity at $0$. 
However, the change appears  `mild': The classical point at $0$ remains as the unique character of $\azz$. This is a counterexample for the possible 
removal of the generators condition in Proposition~\ref{pointless}. 

It is also interesting to observe that the constructed non-commutative system can be viewed as the Toeplitz quantization of the classical Euclidean plane in the sense that harmony property H4 
holds in this context, too.
This follows from the positivity analysis for the Stieltjes moment condition presented in Appendix~A 
together with this 
explicit calculation of the determinant generated by the full defining sequence: 
\begin{equation}\label{dete-binom}
\det \left |\begin{matrix} &  & \\
 & \smash[t]{q^{-{\textstyle\binom{i+j}{2}}}} & \\
 & & 
\end{matrix}\right |
=q^{-{\textstyle\binom{n}{2}\frac{4n-5}{3}}}\prod_{k=1}^{n-1}(1-q^{k})^{n-k}.
\end{equation}
Here the integer indexes $i,j$ run from $0$ to $n-1$.

A different Toeplitz quantization of the Manin $q$-plane is obtained if we choose the sequence \eqref{q-factorial}. Then, as explained in Appendix~B property H4 holds, too. It is interesting to observe that 
one and the same non-commutative system is viewed as a quantization of the classical plane as well as of the Manin $q$-plane. 

To complete our analysis of different types of quadratic algebras $\polq$, let us focus on the parabolic type 
given by \eqref{parabolic-z-z*}. Although the framework of entire functions does not work here, 
the algebra admits an interesting realization within a Hilbert space of holomorphic functions in 
the positive half-plane $\Re(z)>0$. Let us fix a positive number $h$. 
There exists a unique scalar product on the polynomial 
algebra  $\Cz$ such that the constant function $1=\ket{0}$ has unit norm and 
\begin{equation}
\zz\ket{0}=\frac{h}{2}\ket{0}.
\end{equation}
If we define (using the Euler gamma function $ \Gamma $)
the polynomials as  the shifted factorials 
\begin{equation}
\xi_n(z)=(z+\frac{h}{2})_n\qquad (z)_n:=\prod_{j=0}^{n-1}(z+j)=\frac{\Gamma(z+n)}{\Gamma(z)}, 
\end{equation}
then it is easy to see that 
\begin{equation}
\zz\xi_n(z)=(\frac{h}{2}+n)\xi_n(z)
\end{equation}
for every $n\in\Bbb{N}$. 
Clearly $\ket{0}=\xi_0(z)$ and these vectors form a basis of $\Cz$ with $\Czf{n}=
\mathrm{span}\{\xi_0(z),\xi_1(z),\dots,\xi_n(z)\}$  
Furthermore, a direct calculation reveals 
\begin{equation}
\braket{\xi_n(z)}{\xi_m(z)}=(h)_{n+m}. 
\end{equation}
Orthonormalization of these polynomials by the Gram-Schmidt procedure gives for the canonical orthonormal 
basis 
\begin{equation}
\ket{n}=\pp_n(z)=\prod_{j=0}^{n-1}\frac{z-j-h/2}{
\sqrt{(j+1)(h+j)}}=\frac{(-1)^n}{\sqrt{n! \,(h)_n}}(\frac{h}{2}-z)_n. 
\end{equation}
By using the Weierstrass product formula and Stirling asymptotic formula 
$$
\frac{1}{\Gamma(z)}=ze^{\gamma z}\prod_{n\geq 0}\Bigl\{(1+\frac{z}{n})e^{-z/n}\Bigr\}\qquad\qquad
\Gamma(x)\sim\sqrt{2\pi}x^{x-1/2}e^{-x}\quad x\rightsquigarrow +\infty
$$
it is easy to see that the kernel diagonal 
series is normally convergent on the half-plane $\Re(z)>0$. 
The polynomials $\pp_n(z)$ are also $\infty$-independent, so the conditions for Harmony One are here---although in this quite different 
context of the positive half-plane. 
Actually, the kernel can be explicitly summed to
\begin{equation}
K(\bar{w},z)=\frac{\Gamma(h)\Gamma(\bar{w}+z)}{\Gamma(\bar{w}+h/2)\Gamma(z+h/2)}.  
\end{equation}
This follows directly from the classical Gauss summation formula
$$
\sum_{n\geq 0}\frac{(a)_n(b)_n}{n!\, (c)_n}=\frac{\Gamma(c)\Gamma(c-a-b)}{\Gamma(c-a)\Gamma(c-b)}
$$
with $a=h/2-\bar{w}$, $b=h/2-z$ and $c=h$. 
The induced metric \eqref{K-induced} on the positive half-plane is especially simple, namely 
\begin{equation}
\de s^2=\de w\,\de\bar{w} \times \digamma'(\bar{w}+w)
\end{equation}
where $\digamma(z)=\Gamma'(z)/\Gamma(z)=[\log\Gamma(z)]'$ is the digamma function. 
Then taking into account the 
expansion around zero
$$ \digamma(z)=-\frac{1}{z}-\gamma-\sum_{n\geq 1}(-1)^n\zeta(n+1)z^n $$
as well as the classical positive half-plane hyperbolic metric $\de s^2={\de z\,\de\bar{z}}/{\Re(z)^2}$, 
we conclude that the quantum 
metric asymptotically behaves like the classical hyperbolic metric at the limit of the `infinity horizon' of space 
represented by imaginary numbers. 
Not surprisingly, as a matter of fact, this `geometrical heaven' is also interpretable as the classical part of this quantum
space as has been observed in our initial purely algebraic analysis. 

\section{Some Algebraic Aspects of Toeplitz Quantization}\label{algebraic}

In this section we shall 
describe an abstract algebraic underlying structure for the Toeplitz quantization. In particular,  
this will provide the foundations 
to fine-tune our principal construction of the extended space $\qinfty$ equipped with the 
quadratic form $\qform{}{}$ and the projection $\Pi$, to Hilbert spaces of entire functions and 
harmony conditions, where it is natural to use the more elaborate module $\W$, instead of polynomials.   

Let us assume that $\cal{P}$ is an everywhere 
dense linear subspace of a Hilbert space $\cal{H}$. Let $\cal{C}$ be a 
unital subalgebra of the *-algebra $M(\cal{P})$ of formally adjointable linear operators in $\cal{P}$. Then the conjugate algebra $\bar{\cal{C}}$ is also of the same category.  Let $\cal{A}\subseteq M(\cal{P})$ be the *-subalgebra generated by $\cal{C}$ and $\bar{\cal{C}}$. 

The space $\cal{P}$ can be viewed as a left $\cal{A}$-module, and in particular as both left $\cal{C}$-module as well as $\bar{\cal{C}}$-module. The algebra $\bar{\cal{C}}$ can be purely algebraically described as the opposite algebra of $\cal{C}$ equipped with the conjugate vector space structure. In this interpretation, the *-operation between 
$\cal{C}$ and $\bar{\cal{C}}$ is just the identity map. It naturally extends to *-involutions on the vector 
spaces $\cal{C}\otimes\bar{\cal{C}}$ and $\bar{\cal{C}}\otimes\cal{C}$ by 
\begin{equation}
(\alpha\otimes\bar{\beta})^*=\beta\otimes\bar{\alpha}\qquad\quad(\bar{\alpha}\otimes\beta)^*=\bar{\beta}\otimes\alpha. 
\end{equation}
In a similar way, the *-operation extends to arbitrary tensor products of $\cal{C}$ and $\bar{\cal{C}}$. In particular, 
it mixes $\cal{C}\otimes\cal{C}$ and $\bar{\cal{C}}\otimes\bar{\cal{C}}$. 
Let $m\colon\cal{C}\otimes\cal{C}\rightarrow\cal{C}$ and $\bar{m}\colon\bar{\cal{C}}\otimes\bar{\cal{C}}\rightarrow\bar{\cal{C}}$ be the corresponding product maps. Clearly, we have this commutative diagram: 
\begin{equation}
\begin{CD}
\cal{C}\otimes\cal{C}@>{\mbox{$m$}}>> \cal{C}\\
@V{\mbox{$*$}}VV @VV{\mbox{$*$}}V\\
\bar{\cal{C}}\otimes\bar{\cal{C}}@>>{\mbox{$\bar{m}$}}> \cal{C}
\end{CD}
\end{equation}
Let us assume that a linear map $\sigma\colon\cal{C}\otimes \bar{\cal{C}}\rightarrow \bar{\cal{C}}\otimes\cal{C}$ is 
given such that the diagram
\begin{equation}\label{sigma-*}
\begin{CD}
\cal{C}\otimes\bar{\cal{C}} @>{\mbox{$\sigma$}}>> \bar{\cal{C}}\otimes\cal{C}\\
@V{\mbox{$*$}}VV @VV{\mbox{$*$}}V\\ 
\cal{C}\otimes\bar{\cal{C}} @>>{\mbox{$\sigma$}}> \bar{\cal{C}}\otimes\cal{C}\\
\end{CD}
\end{equation}
is commutative. Furthermore, let us assume that the following pentagonal diagrams 
\begin{equation}
\qquad\quad
\begin{CD}
\cal{C}\otimes\cal{C}\otimes\bar{\cal{C}}  @>{\mbox{$m\otimes\id$}}>> \cal{C}\otimes\bar{\cal{C}}
@<{\mbox{$\id\otimes\bar{m}$}}<< \cal{C}\otimes\bar{\cal{C}}\otimes\bar{\cal{C}}\\
@V{\mbox{$(\sigma\otimes\id)(\id\otimes\sigma)$}}VV @VV{\mbox{$\sigma$}}V
@VV{\mbox{$(\id\otimes\sigma)(\sigma\otimes\id)$}}V\\
\bar{\cal{C}}\otimes\cal{C}\otimes\cal{C} @>>{\mbox{$\id\otimes m$}}> \bar{\cal{C}}\otimes\cal{C}
@<<{\mbox{$\bar{m}\otimes\id$}}< \bar{\cal{C}}\otimes\bar{\cal{C}}\otimes\cal{C}
\end{CD}
\end{equation}
are commutative too. 

\begin{remark} It is easy to see that the commutativity
of each of these pentagonal diagrams implies that 
of the other, if we assume the conjugational
symmetry \eqref{sigma-*}. 
The pentagonal diagrams as such, ensure that the space $\bar{\cal{C}}\otimes\cal{C}$
can be equipped with a natural associative product, such that
\begin{equation}
(\bar{\gamma}\otimes\alpha)(\bar{\beta}\otimes\delta)=\bar{\gamma}\sigma(\alpha\otimes\bar{\beta})\delta
\end{equation} 
and $1\otimes 1$ is the unit element of this algebra. This implies that always 
$\sigma(\alpha\otimes 1)=1\otimes\alpha$ and 
$\sigma(1\otimes\bar{\beta})=\bar{\beta}\otimes 1$.  
In particular $\sigma(1\otimes 1)=1\otimes 1$. 
The conjugational symmetry then is equivalent to the statement that the introduced $*$ is the *-structure on this algebra. 
Indeed, if 
$\sigma(\alpha\otimes\bar{\beta})=\sum_k\bar{\alpha_k}\otimes \beta_k$ then \eqref{sigma-*} is equivalent to 
$$
\sigma(\beta\otimes\bar{\alpha})=\sum_k\bar{\beta_k}\otimes\alpha_k\quad\Leftrightarrow\quad
[(1\otimes\alpha)(\bar{\beta}\otimes1)]^*=(1\otimes\beta)(\bar{\alpha}\otimes 1), 
$$
which in turn ensures that the *-operation on $\bar{\cal{C}}\otimes\cal{C}$ is anti-multiplicative. Let us denote by 
$\cal{B}$ the resulting *-algebra. By construction both $\cal{C}$ and $\bar{\cal{C}}$ are subalgebras of $\cal{B}$, and they generate $\cal{B}$ with the commutation rule given by $\sigma$. 
\end{remark}

Let us now define an extended representation space
\begin{equation}
\pinfty=\cal{B}\otimes_{\cal{C}}\cal{P}\leftrightarrow\bar{\cal{C}}\otimes\cal{P}. 
\end{equation}
Because of the above identification, the space $\pinfty$ is a left $\cal{B}$-module, in a natural way: the module structure is induced simply by left multiplication. 
By construction, the space $\cal{P}$ is naturally viewable 
as a $\cal{C}$-submodule of $\pinfty$. 
The inclusion map is given by $\cal{P}\ni\psi
\longmapsto 1\otimes\psi\in\pinfty$. 
So $\cal{P}$ is cyclic for the $\cal{B}$-module $\pinfty$. 

The left $\bar{\cal{C}}$-module structure on $\cal{P}$ is naturally expressed via the projection map $\Pi\colon
\pinfty\rightarrow\pinfty$ given by
\begin{equation}
\Pi(\bar{\alpha}\otimes \psi)=1\otimes (\bar{\alpha}\psi). 
\end{equation} 
Clearly, the image of the idempotent $\Pi$ is $\cal{P}$ and $\Pi$ is $\bar{\cal{C}}$-linear. 

\begin{lemma} The formula
\begin{equation}
\qform{\bar{\alpha}\otimes\varphi}{\bar{\beta}\otimes\psi}
=\sum_k\langle\beta_k\varphi,\alpha_k\psi\rangle
\end{equation}
where $\sigma(\alpha\otimes\bar{\beta})=\sum_k\bar{\beta}_k\otimes\alpha_k$, 
defines a sesqui-linear form on $\pinfty$ which restricted on $\cal{P}$ reproduces the original scalar product. With respect of this form, the left $\cal{B}$-module structure is symmetric. 
\end{lemma}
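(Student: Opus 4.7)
The plan is to handle three claims in sequence: well-definedness together with sesquilinearity of $\qform{}{}$, the restriction property, and the symmetry of the left $\cal{B}$-module structure. For fixed $\varphi,\psi\in\cal{P}$ the assignment $\bar{\beta}\otimes\alpha\mapsto\langle\beta\varphi,\alpha\psi\rangle$ is genuinely bilinear as a map $\bar{\cal{C}}\otimes\cal{C}\to\Bbb{C}$: linearity in $\bar{\beta}$ corresponds to antilinearity in $\beta$, which matches antilinearity of $\langle,\rangle$ in its first slot, and linearity in $\alpha$ matches linearity of $\langle,\rangle$ in its second. Hence it factors through the tensor product, so $\qform{\bar{\alpha}\otimes\varphi}{\bar{\beta}\otimes\psi}$ depends only on the element $\sigma(\alpha\otimes\bar{\beta})\in\bar{\cal{C}}\otimes\cal{C}$ and not on the particular decomposition into simple tensors. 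Sesquilinearity in the variables $\bar{\alpha}\otimes\varphi$ and $\bar{\beta}\otimes\psi$ then follows from linearity of $\sigma$ in its two arguments combined with the corresponding slot properties of $\langle,\rangle$. The restriction claim is immediate from $\sigma(1\otimes 1)=1\otimes 1$: taking $\alpha=\beta=1$ yields $\qform{1\otimes\varphi}{1\otimes\psi}=\langle\varphi,\psi\rangle$.

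For the symmetry claim I would first establish Hermitian conjugate symmetry $\qform{\eta}{\xi}^*=\qform{\xi}{\eta}$. Applying the $*$-operation to $\sigma(\alpha\otimes\bar{\beta})=\sum_k\bar{\beta}_k\otimes\alpha_k$ and using \eqref{sigma-*} yields $\sigma(\beta\otimes\bar{\alpha})=\sum_k\bar{\alpha}_k\otimes\beta_k$, and hence $\qform{\bar{\beta}\otimes\psi}{\bar{\alpha}\otimes\varphi}=\sum_k\langle\alpha_k\psi,\beta_k\varphi\rangle$ is the complex conjugate of $\qform{\bar{\alpha}\otimes\varphi}{\bar{\beta}\otimes\psi}$. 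Since $\cal{B}$ is generated by $\cal{C}\cup\bar{\cal{C}}$ with $\bar{\gamma}^*=\gamma$, Hermitian symmetry then reduces the required formal adjointness $\qform{b\xi}{\eta}=\qform{\xi}{b^*\eta}$ to verification for generators $b=\bar{\gamma}\in\bar{\cal{C}}$.

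To handle $b=\bar{\gamma}$ one computes $\bar{\gamma}(\bar{\alpha}\otimes\varphi)=\overline{\alpha\gamma}\otimes\varphi$ via $\bar{m}(\bar{\gamma}\otimes\bar{\alpha})=\overline{\alpha\gamma}$ and invokes the left pentagon applied to $\alpha\otimes\gamma\otimes\bar{\beta}$. Writing $\sigma(\gamma\otimes\bar{\beta})=\sum_k\bar{\beta}_k\otimes\gamma_k$ and $\sigma(\alpha\otimes\bar{\beta}_k)=\sum_j\bar{\beta}_{k,j}\otimes\alpha_{k,j}$, the pentagon produces $\sigma(\alpha\gamma\otimes\bar{\beta})=\sum_{k,j}\bar{\beta}_{k,j}\otimes\alpha_{k,j}\gamma_k$, so that $\qform{\bar{\gamma}(\bar{\alpha}\otimes\varphi)}{\bar{\beta}\otimes\psi}=\sum_{k,j}\langle\beta_{k,j}\varphi,\alpha_{k,j}\gamma_k\psi\rangle$. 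On the other side, $\gamma(\bar{\beta}\otimes\psi)=\sum_k\bar{\beta}_k\otimes\gamma_k\psi$ uses the very same first-stage decomposition of $\sigma(\gamma\otimes\bar{\beta})$, and a second application of the defining formula of $\qform{}{}$ produces an identical sum. The main technical obstacle is precisely this matching step, which is essentially a repackaging of the left pentagonal axiom; the use of Hermitian conjugate symmetry to halve the set of generators to be tested is what keeps the bookkeeping manageable.
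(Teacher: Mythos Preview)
Your proof is correct and follows essentially the same route as the paper: both verify the restriction via $\sigma(1\otimes 1)=1\otimes 1$, establish Hermitian symmetry from \eqref{sigma-*}, reduce the $\cal{B}$-symmetry to generators in $\bar{\cal{C}}$, and then settle that case by unwinding the left pentagonal identity. You additionally spell out the well-definedness of the formula with respect to tensor decompositions and give the pentagon computation in full detail, whereas the paper leaves these steps implicit.
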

\begin{proof} Remember that $ \sigma(1\otimes 1)=1\otimes 1$ so $\qform{1\otimes\varphi}{1\otimes\psi}
=\langle\varphi,\psi\rangle$. 
Furthermore since $\sigma(\beta\otimes\bar{\alpha})=\sum_k\bar{\alpha}_k\otimes\beta_k$ so 
$$ \qform{\bar{\beta}\otimes\psi}{\bar{\alpha}\otimes \varphi}=
\sum_k\langle\alpha_k\psi,\beta_k\varphi\rangle=\bigl[\sum_k\langle\beta_k\varphi,\alpha_k\psi\rangle\bigr]^*
=\qform{\bar{\alpha}\otimes\varphi}{\bar{\beta}\otimes\psi}^* $$
and hence the hermitian symmetry of $\qform{}{}$. In order to prove that the action of $\cal{B}$ is symmetric, it is sufficient to check it for $\cal{C}$, or equivalently $\bar{\cal{C}}$. If $c\in\cal{C}$ then
$$
\qform{\bar{c}\bar{\alpha}\otimes \varphi}{\bar{\beta}\otimes\psi}=
\sum_k \langle u_k\varphi,v_k\psi\rangle=\qform{\bar{\alpha}\otimes\varphi}{c(\bar{\beta}\otimes\psi)}
\qquad\sum_k \bar{u}_k\otimes v_k=\sigma(\alpha c\otimes \bar{\beta})
$$
and we have applied the pentagonal symmetry for $\sigma$ in twisting the product 
$\alpha c$ with $\bar{\beta}$. 
\end{proof}

\begin{lemma} The projection $\Pi$ is symmetric relative to $\qform{}{}$. 
\end{lemma}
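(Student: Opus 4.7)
The plan is to verify the defining identity $\qform{\Pi x}{y} = \qform{x}{\Pi y}$ on generators $x = \bar{\alpha}\otimes\varphi$ and $y = \bar{\beta}\otimes\psi$ with $\alpha,\beta\in\cal{C}$ and $\varphi,\psi\in\cal{P}$, and then extend to all of $\pinfty$ by bi-additivity of $\qform{}{}$.

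First I would apply the definition of $\Pi$, which collapses both sides to
$$
\qform{\Pi x}{y} = \qform{1\otimes(\bar{\alpha}\varphi)}{\bar{\beta}\otimes\psi}, \qquad
\qform{x}{\Pi y} = \qform{\bar{\alpha}\otimes\varphi}{1\otimes(\bar{\beta}\psi)}.
$$
The key simplification is that in each of these expressions one of the $\cal{C}$-factors entering the twist $\sigma$ is the unit. The remark immediately preceding this lemma guarantees $\sigma(\alpha\otimes 1)=1\otimes\alpha$ and $\sigma(1\otimes\bar{\beta})=\bar{\beta}\otimes 1$ (as consequences of $1\otimes 1$ being the unit of $\cal{B}$), so the sum defining $\qform{}{}$ collapses to a single term on each side. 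Using the formula from the preceding lemma I therefore expect to obtain
$$
\qform{\Pi x}{y}=\langle\beta\bar{\alpha}\varphi,\psi\rangle, \qquad \qform{x}{\Pi y}=\langle\varphi,\alpha\bar{\beta}\psi\rangle.
$$

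To close the argument I would invoke twice the formal adjointness $\langle\gamma u,v\rangle=\langle u,\bar{\gamma}v\rangle$ built into the definition of $\bar{\cal{C}}\subset M(\cal{P})$, which yields
$$
\langle\beta\bar{\alpha}\varphi,\psi\rangle = \langle\bar{\alpha}\varphi,\bar{\beta}\psi\rangle = \langle\varphi,\alpha\bar{\beta}\psi\rangle,
$$
matching the two expressions and establishing the required symmetry.

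I do not anticipate a genuine obstacle: the verification is essentially bookkeeping once the unit-compatibility of $\sigma$ is recognized as the decisive simplification. The only minor point to watch is the identification $\bar{\gamma}c\otimes\psi = \bar{\gamma}\otimes c\psi$ implicit in $\pinfty=\cal{B}\otimes_{\cal{C}}\cal{P}$, but this is already baked into the definition and is not needed to handle the generators above.
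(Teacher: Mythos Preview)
Your proof is correct and follows essentially the same approach as the paper: both verify symmetry on generators $\bar{\alpha}\otimes\varphi$, use the unit-compatibility of $\sigma$ to collapse the defining sum for $\qform{}{}$ to a single term, and close the computation via the formal adjointness between $\cal{C}$ and $\bar{\cal{C}}$. The only cosmetic difference is that the paper runs a single chain $\qform{\Pi x}{y}=\braket{\beta\bar{\alpha}\varphi}{\psi}=\braket{\bar{\alpha}\varphi}{\bar{\beta}\psi}=\qform{x}{\Pi y}$, whereas you compute both ends separately and then match them.
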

\begin{proof} Remembering that $\sigma$ classically flips $1$ with everything, we obtain 
\begin{align*}
\qform{\Pi(\bar{\alpha}\otimes\varphi)}{\bar{\beta}\otimes\psi}
&=\qform{1\otimes \bar{\alpha}\varphi}{\bar{\beta}\otimes\psi}=
\braket{\beta\bar{\alpha}\varphi}{\psi} 
\\&=
\braket{\bar{\alpha}\varphi}{\bar{\beta}\psi}=
\qform{\bar{\alpha}\otimes\varphi}{\Pi(\bar{\beta}\otimes\psi)}. 
\end{align*}
So $\Pi$ is indeed symmetric. 
\end{proof}

Let us now consider the null space of $\qform{}{}$. In general, it will be a non-trivial subspace of $\pinfty$. It is defined by 
$$
\ninfty=\Bigl\{\eta\in\pinfty \bigm\vert \qform{\eta}{\pinfty}=\{0\}\Bigr\}.
$$
\begin{lemma} The space $\ninfty$ is $\cal{B}$-invariant and moreover 
$$\Pi(\ninfty)=\{0\}=\cal{P}\cap\ninfty. $$
\end{lemma}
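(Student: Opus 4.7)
The plan is to prove the three assertions separately, each being a short direct consequence of the previously established facts.

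First, I would show $\cal{B}$-invariance of $\ninfty$. Take $\eta\in\ninfty$ and $b\in\cal{B}$. For every $\xi\in\pinfty$, by the symmetry of the $\cal{B}$-action established in the previous lemma, we have $\qform{b\eta}{\xi}=\qform{\eta}{b^*\xi}$. Since $b^*\xi\in\pinfty$ and $\eta\in\ninfty$, the right-hand side vanishes. Therefore $\qform{b\eta}{\pinfty}=\{0\}$, which means $b\eta\in\ninfty$. Since this holds for generators $b\in\cal{C}\cup\bar{\cal{C}}$ (hence all of $\cal{B}$ by an immediate inductive extension), $\ninfty$ is $\cal{B}$-invariant.

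Second, for $\cal{P}\cap\ninfty=\{0\}$: if $\varphi\in\cal{P}\cap\ninfty$, then in particular $\qform{\varphi}{\varphi}=0$; but since $\qform{}{}$ restricted to $\cal{P}$ coincides with the original scalar product $\braket{}{}$, which is positive definite by assumption, we conclude $\varphi=0$.

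Third, for $\Pi(\ninfty)=\{0\}$: take $\eta\in\ninfty$, so that $\Pi\eta\in\cal{P}\subset\pinfty$. Using the symmetry of $\Pi$ relative to $\qform{}{}$ (the second of the preceding lemmas) together with idempotence $\Pi^2=\Pi$, we compute
\begin{equation*}
\qform{\Pi\eta}{\Pi\eta}=\qform{\eta}{\Pi(\Pi\eta)}=\qform{\eta}{\Pi\eta}=0,
\end{equation*}
where the last equality uses $\Pi\eta\in\pinfty$ and $\eta\in\ninfty$. Thus $\Pi\eta\in\cal{P}\cap\ninfty$, which by the previous step forces $\Pi\eta=0$.

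None of these steps presents a genuine obstacle; the whole lemma is essentially a bookkeeping consequence of (a) the symmetry properties of the action and of $\Pi$, (b) the idempotence of $\Pi$, and (c) the positive definiteness of $\braket{}{}$ on $\cal{P}$. The only point requiring minimal care is making sure that the symmetry relation $\qform{b\eta}{\xi}=\qform{\eta}{b^*\xi}$ is applied with $b^*\xi$ still lying in $\pinfty$, which is automatic because $\pinfty$ is a left $\cal{B}$-module.
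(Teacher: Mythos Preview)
Your proof is correct and follows essentially the same approach as the paper's: both use the symmetry of the $\cal{B}$-action for the invariance of $\ninfty$, the positive definiteness of $\braket{}{}$ on $\cal{P}$ for $\cal{P}\cap\ninfty=\{0\}$, and the symmetry of $\Pi$ for $\Pi(\ninfty)=\{0\}$.

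One minor wording issue in your third step: from $\qform{\Pi\eta}{\Pi\eta}=0$ alone you cannot conclude $\Pi\eta\in\ninfty$ (that would require orthogonality to all of $\pinfty$, not just to itself). You have two clean fixes. Either argue directly that $\Pi\eta\in\cal{P}$ and the form is positive definite there, so $\qform{\Pi\eta}{\Pi\eta}=0$ forces $\Pi\eta=0$; or, closer to the paper, use symmetry with an arbitrary $\xi\in\pinfty$ to get $\qform{\Pi\eta}{\xi}=\qform{\eta}{\Pi\xi}=0$, which genuinely shows $\Pi\eta\in\ninfty$ and hence $\Pi\eta\in\cal{P}\cap\ninfty=\{0\}$.
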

\begin{proof} The invariance property is a direct consequence of the symmetry of the action of $\cal{B}$ with respect to $\qform{}{}$.
Since $\Pi$ is symmetric, it also follows that $\ninfty$ is $\Pi$-invariant. 
On the other hand, 
$\Pi$ projects onto $\cal{P}$, and there $\qform{}{}$ reduces to the initial strictly positive scalar product.  
\end{proof}

Consequently, the whole structure naturally projects on the factor module
\begin{equation}
\qinfty=\pinfty/\ninfty 
\end{equation}
over the algebra $\cal{B}$. 
We shall use the same symbols $\qform{}{}\colon\qinfty\times\qinfty\rightarrow \Bbb{C}$ and $\Pi\colon\qinfty\rightarrow\qinfty$ for the 
projected objects. Clearly, the projected $\qform{}{}$ is non-degenerate and $\cal{P}$ is a $\cal{C}$-submodule 
of $\qinfty$. 
Because of the symmetry of $\Pi$ we can write 
\begin{equation}
\qinfty=\cal{P}\oplus\cal{P}^\bot, 
\end{equation}
and $\cal{P}^\bot$ is also a $\cal{C}$-submodule of $\qinfty$. To every element of $\cal{B}$ we can associate a 
linear operator from $\cal{A}$ via the symbol construction: 
\begin{equation}
\cal{B}\ni b\longmapsto \Pi b\Pi\colon \cal{P}\rightarrow\cal{P}. 
\end{equation}
Explicitly, if 
$b=\sum_k\bar{\alpha}_k\otimes \beta_k$,  
then simply 
\begin{equation}\label{simply-compress}
\Pi b\Pi\leftrightarrow \sum_k\bar{\alpha}_k\beta_k,
\end{equation}
and in particular this is an element of $\cal{A}$. 
\begin{remark}
These elements in \eqref{simply-compress} 
span the $\bar{\cal{C}}$--$\cal{C}$ bimodule in $\cal{A}$ 
generated by $1\in\cal{A}$. 
In general, this will be strictly 
smaller than $\cal{A}$. 
\end{remark}

The construction described here 
can be reversed, thereby 
giving its own abstract characterization, 
which is very much in resonance 
with the Stinespring construction (see \cite{Sti}), 
which itself generalizes the GNS construction to the level of
appropriate completely 
positive maps. 
We can start from the twisted product algebra $\cal{B}\leftrightarrow \bar{\cal{C}}\otimes\cal{C}$
as above, represented by formally adjointable operators in a linear space $\qinfty$ equipped with a 
not necessarily positive regular scalar product $\qform{}{}$. 

Let $\cal{P}$ be a linear subspace of $\qinfty$ satisfying the following four structural properties:
\begin{itemize}
\item It is {\it positive}, in the sense that $\qform{}{}$ reduces to a strictly positive scalar product on $\cal{P}$; 
\item It is {\it orthocomplementable} in $\qinfty$ in the sense that 
$$ \qinfty=\cal{P}\oplus\cal{P}^\bot $$ 
with respect to $\qform{}{}$; 
\item It is $\cal{C}$-invariant; 
\item It is {\it cyclic}, in the sense that 
$$ \bar{\cal{C}}\cal{P}=\Bigl\{\sum\bar{c}\psi\,\vert \,c\in\cal{C},\psi\in\cal{P}\Bigr\}=\qinfty. $$
\end{itemize}
\begin{prop}
Under the above assumptions, the entire system is naturally isomorphic to the one constructed above. \qed
\end{prop}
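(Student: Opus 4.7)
The plan is to construct a natural $\cal{B}$-linear surjection
$\pi\colon \bar{\cal{C}}\otimes\cal{P}\to\qinfty$ by
$\bar{c}\otimes\psi\mapsto \bar{c}\psi$, verify that it intertwines the quadratic forms, and deduce from non-degeneracy of $\qform{}{}$ on $\qinfty$ that its kernel equals the null space $\ninfty$ of the form on $\bar{\cal{C}}\otimes\cal{P}$, yielding an isomorphism $(\bar{\cal{C}}\otimes\cal{P})/\ninfty \cong \qinfty$. Cyclicity gives surjectivity; $\cal{C}$-invariance of $\cal{P}$ makes $\pi$ well-behaved and compatible with the idempotents $\Pi$ on either side; positivity of $\qform{}{}$ on $\cal{P}$ ensures that the restriction of $\pi$ to $1\otimes\cal{P}$ is an isometric embedding.

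First I would check $\cal{B}$-equivariance on the two generating subalgebras. Left multiplication by $\bar{c}\in\bar{\cal{C}}$ acts tautologically on the first factor of $\bar{\cal{C}}\otimes\cal{P}$ and is sent by $\pi$ to left multiplication by $\bar{c}$ in $\qinfty$. For $c\in\cal{C}$, if $\sigma(c\otimes\bar{\gamma})=\sum_k\bar{\gamma}_k\otimes c_k$, the induced action on $\bar{\cal{C}}\otimes\cal{P}$ sends $\bar{\gamma}\otimes\psi\mapsto \sum_k\bar{\gamma}_k\otimes c_k\psi$ (with $c_k\psi\in\cal{P}$ by $\cal{C}$-invariance), whose $\pi$-image $\sum_k\bar{\gamma}_k c_k\psi$ equals $c(\bar{\gamma}\psi)$ by the algebraic relation $c\bar{\gamma}=\sum_k\bar{\gamma}_k c_k$ holding in $\cal{B}$. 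The pentagonal axioms are exactly what make this prescription unambiguous and associative.

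Next I would verify form preservation. Using symmetry of the $\cal{B}$-action together with $\cal{C}$-invariance,
$$
\qform{\bar{\alpha}\varphi}{\bar{\beta}\psi}
=\qform{\varphi}{\alpha\bar{\beta}\psi}
=\sum_k\qform{\varphi}{\bar{\beta}_k\alpha_k\psi}
=\sum_k\qform{\beta_k\varphi}{\alpha_k\psi}
=\sum_k\braket{\beta_k\varphi}{\alpha_k\psi},
$$
where $\sigma(\alpha\otimes\bar{\beta})=\sum_k\bar{\beta}_k\otimes\alpha_k$ and the final equality uses that $\beta_k\varphi,\alpha_k\psi\in\cal{P}$, where $\qform{}{}$ restricts to $\braket{}{}$. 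This reproduces exactly the formula for $\qform{}{}$ on $\bar{\cal{C}}\otimes\cal{P}$ established in the first lemma of this section. Preservation of the form and non-degeneracy of $\qform{}{}$ on $\qinfty$ then force $\ker(\pi)$ to coincide with the null space of the form on $\bar{\cal{C}}\otimes\cal{P}$, i.e.\ with $\ninfty$.

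Finally, under the resulting isomorphism $(\bar{\cal{C}}\otimes\cal{P})/\ninfty\to\qinfty$, the subspace $1\otimes\cal{P}$ maps isometrically onto $\cal{P}\subseteq\qinfty$, and orthocomplementability on the target matches the decomposition on the source, so the two projections $\Pi$ correspond under $\pi$. The main obstacle is the careful bookkeeping of the twist $\sigma$ and the pentagonal identities when interchanging $\cal{C}$ past $\bar{\cal{C}}$ in the equivariance and form-preservation calculations; these identities are exactly what legitimizes the reshufflings performed. Conceptually, the four structural assumptions on $\cal{P}\subseteq\qinfty$ mirror one-to-one the properties of $\cal{P}\hookrightarrow \cal{B}\otimes_{\cal{C}}\cal{P}$ in the original construction, which is why the isomorphism is both well-defined and natural.
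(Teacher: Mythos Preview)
The paper offers no proof of this proposition---it is stated with only a \qed\ box, signaling that the authors regard it as a routine unwinding of the definitions. Your argument supplies exactly that unwinding, and it is correct: the map $\pi\colon\bar{\cal{C}}\otimes\cal{P}\to\qinfty$, $\bar{c}\otimes\psi\mapsto\bar{c}\psi$, is surjective by cyclicity, $\cal{B}$-equivariant by the pentagonal identities, and form-preserving by the displayed computation (which uses symmetry of the $\cal{B}$-action and $\cal{C}$-invariance of $\cal{P}$). Since the paper assumes the form on $\qinfty$ is ``regular'' (i.e.\ non-degenerate), your identification $\ker\pi=\ninfty$ follows, and the projections match because $\pi$ carries $1\otimes\cal{P}$ isometrically onto $\cal{P}$ and respects orthocomplements. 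This is precisely the argument the authors have in mind; there is no alternative route to contrast it with.
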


\section{Concluding Thoughts}\label{conclusions}

Our Toeplitz quantization construction provides a kind of resonant bridge
going between two, in principle very different, algebras. 
On one side we have a given polynomial quadratic *-algebra $\polq$ generated by abstract 
symbols $z$ and $\bar{z}$. 
On the other there is a concrete 
non-commutative *-algebra of operators $\azz$, acting within a Hilbert space $\cal{H}$ of 
entire functions, possessing a dense common invariant subspace $\cal{W}$, 
and also viewable as transformations of the whole $\HolC$, generated by the multiplication operator $z$ and its companion $\zzinfty$, interpretable as the adjoint of $z$ in terms of $\cal{H}$.   
 
There is an extended space 
$\qinfty$, which is equipped 
with a non-necessarily positive non-degenerate scalar product $\qform{}{}$, 
on which the *-algebra $\polq$  symmetrically acts. The common domain $\cal{W}$ 
for the operators of $\azz$ is isometrically realized as an orthocomplementable subspace of $\qinfty$. 
Then $\polq$ is morphed into $\azz$ with the help of the orthogonal projection $\Pi\colon\qinfty\rightarrow 
\cal{W}$. In this sense $\azz$ is interpreted as a Toeplitz quantization of $\polq$. 

At a first sight, it might come as a surprise that we can perform the construction on two a priori unrelated 
algebras such as $\polq$ and $\azz$---coming from two apparently quite different worlds. 

On one hand, we believe 
that this reflects the flexibility and versatility of the construction, which as we have seen admits an elegant purely algebraic foundation. 
And after all, behind both algebras lies an 
intuitive idea of a quantum object resembling the classical Euclidean plane. On the other hand, it is natural 
to look for some additional geometrical context, in which the relationship between $\polq$ and $\azz$ looks especially harmonic.  

One such a context is given by our last harmony condition, which 
requires that $\qform{}{}$ be positive.
As we have already mentioned, in the case of the standard commutative product in 
$\Bbb{C}[z,\bar{z}]$ the positivity condition leads to its {\it moment problem} of finding a finite measure on $\Bbb{C}$ reproducing the scalar product. Such a measure can always be taken to be radial for diagonal scalar products. 

Our principal series of examples, as in \cite{coherent}, also comes from a diagonal scalar product on $\Cz$, defined by a positive sequence $s_n=s_{nn}$ of squares of norms of mutually orthogonal $z^n$. There is a natural generalization, which includes as a special case the parabolic quadratic algebra considered in Section~2, and 
which is also invariant under Euclidean transformations of $\Bbb{C}$. 
If we require that the polynomials of the canonical orthonormal basis satisfy 
\begin{equation*}
\pp_n(z)\mid \pp_{n+1}(z)
\end{equation*}
for every $n\in\Bbb{N}$, that is to say, that zeroes of $\pp_n(z)$ are included (multiplicities counted) in zeroes of $\pp_{n+1}(z)$,  the whole scalar product is encoded in two sequences.
these are a sequence of positive numbers 
$s_n$ and a sequence of complex numbers $\lambda_n$ so that   
\begin{equation*}
\pp_n(z)=\frac{1}{\sqrt{\smash[b]{s_n}}}(z-\lambda_1)\cdots(z-\lambda_n)
\end{equation*}
or equivalently 
\begin{equation*}
z^n=\sum_{k=0}^n\sqrt{s_k}u_{n-k}(\lambda_1,\cdots,\lambda_{k+1})\pp_k(z)
\end{equation*}
where $u_m$ are symmetric functions computing the sum of all elementary monomials of degree $m$ 
constructed from their arguments (for instance $u_0$  is always $1$ while $u_1$ sums
its arguments and $u_7(a,b)=(a+b)(a^2+b^2)(a^4+b^4)$).  

The diagonal scenario is recovered as the special case $\lambda_n=0$ for all $n$, and the parabolic quadratic algebra example corresponds to $\lambda_n=n-1+h/2$. The algebraic part of Harmony Zero will 
be always satisfied, as $Z^*\Czf{n}\subseteq\Czf{n}$ for every $n\in\Bbb{N}$. 

We have analyzed in this paper the `one-particle' scenario, where there is only one complex variable $z$ to quantize. In a very similar way, everything is extendible to several complex variables. The methods should 
also be extendible to classical subdomains in $\Bbb{C}^n$. Here we would expect a natural emergence of 
non-Euclidean quantum geometries in resonance with the results of \cite{MP}, where quantum hyperbolic 
planes are studied, and the interpretation of Poincar\'e models 
linked with the Hilbert spaces of analytic functions on the open unitary disk $\Bbb{D}$.

Another very interesting and promising topic for future 
explorations is to consider 
multi-dimensional quantum Euclidean spaces defined via the described quantization, which 
preserve some important symmetries and develop differential geometry on such spaces. It is tempting to 
procede in the spirit of \cite{qdunkl, qdunkl2}, where it was revealed that classical structures with discrete symmetries exhibit a `hidden' quantum personality, 
which is describable in the elegant geometrical language of quantum principal bundles. 

\appendix

\section{Positivity and Stieltjes Moment Condition} 

In this Appendix we shall discuss in more detail the positivity property for the quadratic form $\qform{}{}$, induced 
on the classical polynomial *-algebra $\Bbb{C}[z,\bar{z}]$ by a diagonal scalar product $\braket{}{}$ on $\Cz$ with 
the defining sequence $s_n$ given in 
\eqref{simple-sequence}

The geometry of $\Bbb{C}[z,\bar{z}]$ with such a $\qform{}{}$ is reflected in the accompanying figure. We start with an integer lattice. The white nodes correspond to the monomials $\bar{z}^k z^l$ with 
$k,l\geq 0$. These monomials are a basis in the commutative *-algebra $\Bbb{C}[z,\bar{z}]$. By allowing 
negative powers of $z$ and $\bar{z}$ such that $k+l\geq 0$, this algebra is included 
in an extended commutative *-algebra $\Bbb{E}[z,\bar{z}]$. 
Such monomials are represented by 
grey nodes. 
To each node $ a = (k,l) $ we define
its {\em value} to be the number $|a|:= k+l$. 
In terms of this picture the scalar product $\qform{}{}$ is computed using \eqref{form-extended} 
according to the following rule: if the nodes labeled by $a$ and $b$ 
share one of the parallel diagonal lines, 
then the scalar product is $s_{(|a|+|b|)/2}$. Otherwise the monomials are orthogonal. 
This
immediately reveals that $\qform{}{}$ will be positive if and only if its restriction to all of the diagonal 
subspaces is positive. 

\begin{figure}
\begin{center}
\includegraphics[scale=0.25]{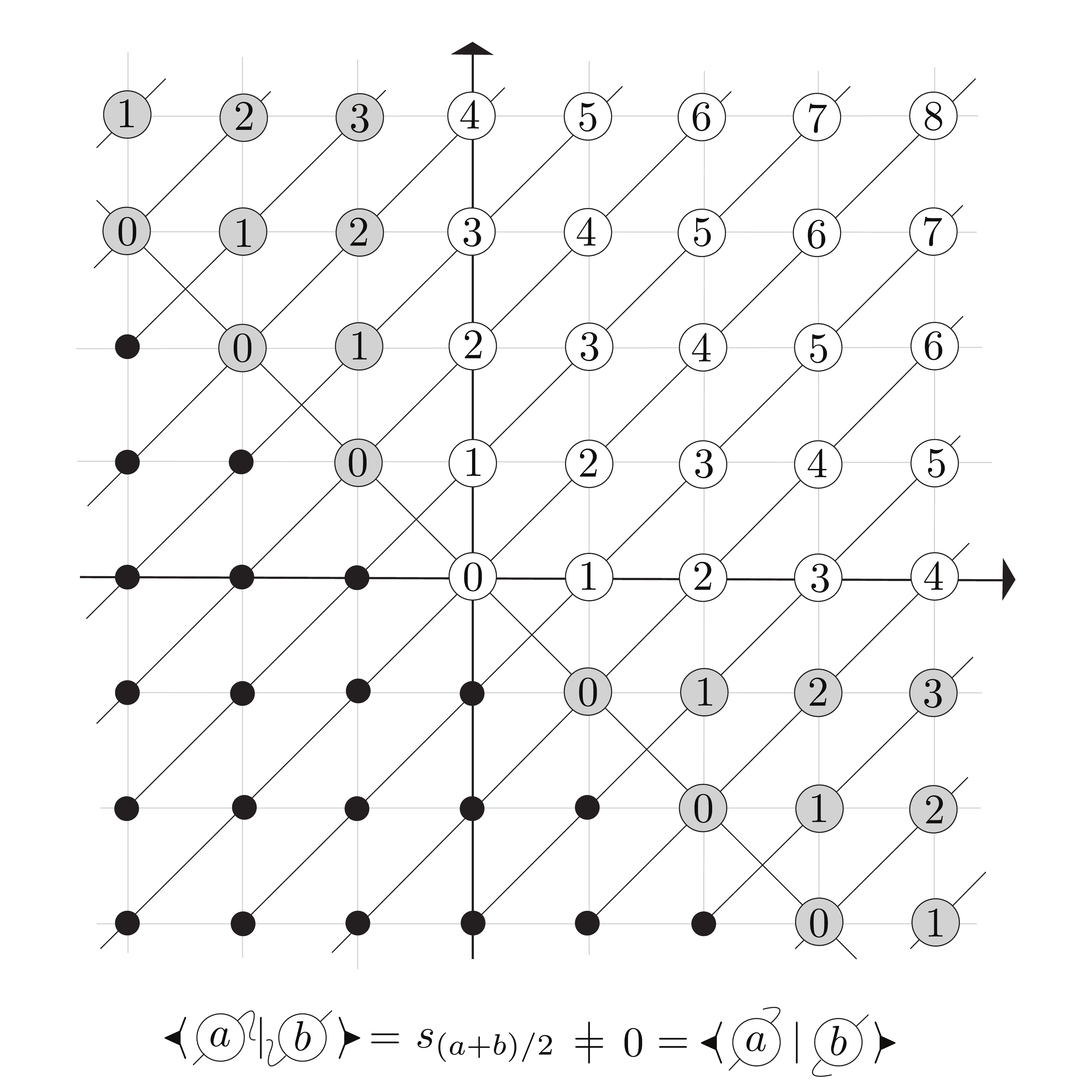}
\end{center}
\end{figure}

But the diagonal subspaces are of 
two canonical types: those consisting of 
nodes with even value and those consisting of nodes
with odd value. 
So the positivity condition reduces to the positivity of the corresponding Gram matrices, 
$$
\begin{pmatrix}
s_0 & s_1 & \cdots & s_n \\
s_1 & s_2 & \cdots & s_{n+1}\\
\vdots & \vdots &\ddots & \vdots\\
s_n & s_{n+1} & \cdots & s_{2n} 
\end{pmatrix}
\qquad\quad
\begin{pmatrix}
s_1 & s_2 & \cdots & s_n \\
s_2 & s_3 & \cdots & s_{n+1}\\
\vdots &\vdots &\ddots & \vdots\\
s_n & s_{n+1} & \cdots & s_{2n-1} 
\end{pmatrix}
$$
of even and odd types respectively. But this is exactly the Stieltjes moment condition for the existence of a 
measure $\mu$ on $[0,\infty)$ satisfying 
$$ 
s_n=\int_0^\infty t^n\, d\mu(t) \qquad\quad\forall n\in\Bbb{N}. 
$$
And this is equivalent to the existence of 
a rotationally invariant measure $d\mu(z,\bar{z})$ 
on $\Bbb{C}$ reproducing the scalar product on $\Bbb{C}[z,\bar{z}]$ in the standard way: 
$$
\qform{f}{g}=\int_{\Bbb{C}} \overline{f(z)}g(z)\,d\mu(z,\bar{z}). 
$$  

Interestingly, we can look at all of this another way around and 
use the construction to prove the Stieltjes moment condition. Indeed, as is easily seen from 
the figure, the positivity of $\qform{}{}$ on $\Bbb{C}[z,\bar{z}]$ is equivalent to the positivity of 
$\qform{}{}$ on $\Bbb{E}[z,\bar{z}]$. And such an 
extended positivity is equivalent, according to the theorem of Stochel–Szafraniec \cite{schmudgen},
to the existence of an underlying measure for the quadratic form. 
This measure, due to the rotational symmetry of the system, can be 
always taken to be purely radial. 

\section{On Hilbert Spaces of Holomorphic Functions}
\subsection*{Reproducing Kernels as Point Functionals}

We shall here review some general properties of Hilbert spaces of holomorphic functions, which are particularly 
relevant for our main considerations on Toeplitz quantization. 

We refer to \cite{pick} for a lovely self-contained exposition of the theory of Hilbert function spaces, including those consisting of holomorphic functions.
In dealing with these structures, it is optimal to assume that everything occurs in a mathematical universe in which linear functionals defined over Hilbert spaces are always continuous. 
This leads to interesting foundational issues in Functional Analysis, involving the axiom of choice, automatic continuity and constructibility of objects. We refer to \cite{Schechter}, 
especially Chapter~6 of the book, for an in-depth discussion of these conceptual roots.  

Let $\cal{H}$ be a Hilbert space consisting of holomorphic functions over a domain $\Omega$.
So $\cal{H}$ is a complex vector subspace of $\mathrm{Hol}(\Omega)$. The structure of such a space is completely 
determined by its reproducing kernel function $K\colon\Omega\times\Omega\rightarrow \Bbb{C}$, which in 
the main text we have 
met in the context  $\Omega=\Bbb{C}$ of entire functions. 
The primary properties we mentioned in that special 
context are valid for a general domains. 
In particular, the kernel is holomorphic 
in its second argument and anti-holomorphic in its first argument. 
And for every $w\in\Omega$ we can define 
$[w]=w(z)=K(\bar{w},z)$ to be the corresponding `point wave function'. These functions belong to 
$\cal{H}$ and reproduce the values of the functions from $\cal{H}$ in the points $w$ via the scalar product 
$\langle w(z),\psi(z)\rangle=\psi(w)$.  
So the scalar product between two point wave functions 
$ [v], [w] $
is the reproducing kernel
of the points $ v,w \in \Omega $ taken in the opposite order $K(\bar{w},v)=\braket{v(z)}{w(z)}=[w](v)=
\overline{K(\bar{v},w)}$. 

If $\Lambda$ is any subset of $\Omega$ possessing an accumulation point in $\Omega$, then the point
wave functions $\lambda(z)$ with $\lambda\in\Lambda$ generate the whole Hilbert space $\cal{H}$ or, in other words, the orthocomplement to all of them is $\{0\}$. 

And if $\Lambda$ is a finite non-empty subset of $\Omega$ we can construct a quadratic matrix $K[\Lambda]$
whose entry $(v,w)$ is $K(\bar{w},v)$, where now $w,v\in\Lambda$. 
Such a matrix is always positive. It is the Gram matrix for the 
system of vectors $\lambda(z)$ with $\lambda\in\Lambda$. 
It turns out that conversely, any bi-anti-holomorphic function 
$K$ on $\Omega\times\Omega$ satisfying the positivity condition for its all matrices $K[\Lambda]$
generates a Hilbert space of holomorphic functions. 

We can naturally extend the notion of wave point function to include derivative operators. 
For every $w\in\Omega$ and $n\in\Bbb{N}$ let us define a function $\pointf{w}{n}=\pointf{w}{n}(z)\in \cal{H}$ by 
\begin{equation}
\braket{\pointf{w}{n}}{\psi}=\frac{\partial^n\psi}{\partial z^n}(w) \qquad \forall \psi \in \cal{H}.  
\end{equation}
This makes sense since the right side is a 
norm continuous functional of $ \psi \in \cal{H}$. 
So we have included the original point wave functions as $\pointf{w}{0}=[w]$. In terms of the reproducing kernel we have 
\begin{equation}
\pointf{w}{n}(z)=\frac{\partial^n \! K}{\partial \bar{w}^n}(\bar{w},z) 
\end{equation}
and the classical Cauchy formula can be expressed as  
\begin{equation}
\pointf{w}{n}=\frac{n!}{2\pi i}\oint \frac{d c\,  [c]}{(c-w)^{n+1}}, 
\end{equation}
where the integral is over a positively oriented simple curve in $\Omega$ around $w$. From the definition of these higher order point wave functions 
it follows that 
\begin{equation}
\braket{\pointf{c}{m}}{\pointf{w}{n}}=\frac{\partial^{n+m}}{\partial \bar{w}^n\partial c^m}K(\bar{w},c).
\end{equation}

\begin{lemma} For a given $w\in\Omega$ the functions $\bigl\{\pointf{w}{n}: n\in\Bbb{N}\bigr\}$ span a dense linear subspace in $\cal{H}$. 
\end{lemma}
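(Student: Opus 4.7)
The plan is to prove density by showing that the orthogonal complement of the set $\bigl\{\pointf{w}{n} : n \in \Bbb{N}\bigr\}$ in $\cal{H}$ contains only the zero vector. This reduces the claim to a standard fact from one-variable complex analysis via the defining relation of the higher-order point wave functions.

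First, I would fix $\psi \in \cal{H}$ such that $\braket{\pointf{w}{n}}{\psi} = 0$ for every $n \in \Bbb{N}$. By the very definition of $\pointf{w}{n}$, this means
\begin{equation*}
\frac{\partial^n \psi}{\partial z^n}(w) = 0 \qquad \forall n \in \Bbb{N}.
\end{equation*}
Since $\psi \in \cal{H}$ is a holomorphic function on $\Omega$, it admits a convergent Taylor expansion around $w$ in any open disk centered at $w$ and contained in $\Omega$. That Taylor series has all coefficients equal to zero, so $\psi$ vanishes identically on such a disk.

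Next, I would invoke the identity theorem: a holomorphic function on the connected open set $\Omega$ that vanishes on a nonempty open subset must vanish identically on $\Omega$. Hence $\psi = 0$ in $\mathrm{Hol}(\Omega)$, and consequently $\psi = 0$ in $\cal{H}$. This shows the linear span of $\bigl\{\pointf{w}{n} : n \in \Bbb{N}\bigr\}$ has trivial orthogonal complement and is therefore norm-dense in $\cal{H}$.

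There is essentially no obstacle here; the only subtlety is that one must use that $\Omega$ is connected (implicit in calling it a domain) in order to pass from local vanishing near $w$ to global vanishing on $\Omega$. The continuity of the evaluation functionals $\psi \mapsto (\partial^n \psi/\partial z^n)(w)$ on $\cal{H}$, which makes $\pointf{w}{n}$ well defined in the first place, is what lets us translate norm-orthogonality into pointwise derivative conditions.
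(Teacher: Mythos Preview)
Your proof is correct and follows essentially the same approach as the paper: show that any $\psi$ orthogonal to all $\pointf{w}{n}$ has all derivatives vanishing at $w$, hence vanishes identically by connectedness of $\Omega$. The paper's version is terser, but the idea is identical.
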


\begin{proof} For a function from $\cal{H}$, being orthogonal to all $\pointf{w}{n}$ with fixed $w$ means vanishing at $w$, together with all its derivatives. 
Because of the connectedness of $\Omega$, such a holomorphic function must be zero identically. 
\end{proof}

\begin{lemma} 
For a given $w\in\Omega$ the functions $\bigl\{\pointf{w}{n}: n\in\Bbb{N}\bigr\}$ admit a biorthogonal system if and only if $\Cz\subset \cal{H}$. In this case the corresponding 
biorthogonal system is given by $\bigl\{(z-w)^n/n!: n\in\Bbb{N}\bigr\}$. 
\end{lemma}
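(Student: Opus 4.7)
The plan is to translate the biorthogonality condition into concrete analytic information about the $e_{n}$. Unpacking the definition of $\pointf{w}{m}$, a family $\{e_{n} : n \in \Bbb{N}\} \subset \cal{H}$ is biorthogonal to $\{\pointf{w}{m}\}$ exactly when
\[
\frac{\partial^{m} e_{n}}{\partial z^{m}}(w) = \delta_{mn} \quad \text{for all } m,n \in \Bbb{N}.
\]
In other words, the Taylor series of each $e_{n}$ expanded at $w$ must collapse to the single monomial $(z-w)^{n}/n!$.

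The implication $\Leftarrow$ is then immediate: assuming $\Cz \subset \cal{H}$, each polynomial $(z-w)^{n}/n!$ lies in $\cal{H}$, and a direct differentiation at $w$ verifies the displayed identity, so these polynomials constitute the required biorthogonal system.

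For the converse $\Rightarrow$, I would start from the hypothetical $\{e_{n}\}$ and compute each Taylor expansion at $w$ using the biorthogonality relations. Because only the $n$-th coefficient is non-zero, the expansion reduces to $(z-w)^{n}/n!$, so $e_{n}$ agrees with this polynomial in a neighbourhood of $w$ inside $\Omega$. The main (and essentially only) obstacle is to upgrade this local equality to a global one on $\Omega$; for this I would invoke the identity theorem, which applies because $\Omega$ is a domain and hence connected. It follows that $e_{n}(z) = (z-w)^{n}/n!$ on all of $\Omega$, so $(z-w)^{n}/n! \in \cal{H}$ for every $n \in \Bbb{N}$, and by taking linear combinations one obtains $\Cz = \Bbb{C}[z-w] \subset \cal{H}$, completing the proof.
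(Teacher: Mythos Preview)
Your proof is correct and follows essentially the same route as the paper's: both directions hinge on reading the biorthogonality relation $\braket{\pointf{w}{m}}{e_n}=\delta_{mn}$ as the statement that all Taylor coefficients of $e_n$ at $w$ vanish except the $n$-th, forcing $e_n(z)=(z-w)^n/n!$. Your argument is in fact slightly more careful than the paper's, since you make explicit the appeal to the identity theorem (via the connectedness of $\Omega$) needed to pass from the local Taylor expansion to the global equality on $\Omega$, a step the paper leaves implicit.
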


\begin{proof} If $b_m(z)$ are functions biorthogonal to $\pointf{w}{n}$ then $$\braket{\pointf{w}{n}}{b_m}=\delta_{nm}
=\frac{\partial^n}{\partial z^n}b_m(z)\Bigm\vert_{z=w} $$
and by expanding in Taylor series around $w$ we obtain $b_m(z)=(z-w)^m/m!$. In particular $\Cz\subset\cal{H}$. Conversely, if polynomials are included in $\cal{H}$ then $b_m(z)$ defined by the same formula are clearly the 
biorthogonal system for $\pointf{w}{n}(z)$. 
\end{proof}

We see that the inclusion of the polynomials $\Cz$ in $\cal{H}$ is a natural structural property of $\cal{H}$. However, 
the space $\Cz$ is not necessarily dense in $\cal{H}$. We can proceed in the same spirit here, as 
in the context of entire functions, and 
consider the array of harmony properties, related to the operator form of the complex 
coordinate $z$, and its adjoint operator, introduce 
the common invariant subspace $\cal{W}$ consisting of normal vectors, 
and obtain a non-commutative *-algebra $\azz$ 
which itself can be viewed as a form of quantization of the domain $\Omega$, and also as the algebra 
of the Toeplitz operators for the Toeplitz quantization of appropriate quadratic algebras. Here it is natural to assume the maximality of $\Omega$ relative to $\cal{H}$ in the sense that there is no common analytic extension beyond $\Omega$, valid for all the elements of $\cal{H}$. Of course, when $\Omega=\Bbb{C}$ this condition is trivial. 

\subsection*{Segal-Bargmann Space and $q$-Deformations}

Let us first consider the Hilbert space $L^2(\Bbb{C},\rho)$ where $\rho(z)=\exp(-\vert z\vert^2)/\pi$. 
The Segal-Bargmann space then consists of all entire functions belonging to the above Hilbert space. 
The polynomials in $ z $
are everywhere dense, and the monomials in $ z $ are mutually orthogonal. The reproducing kernel 
is given by
\begin{equation}
K(\bar{w},z)=\exp(\bar{w}z).
\end{equation}
The defining sequence of weights 
(see \eqref{simple-sequence}) is 
\begin{equation}
s_n=s_{nn}=n!
\end{equation}
as is also immediately visible from the expansion of the exponent in the reproducing kernel formula. 
Modulo normalizations 
this scalar product is the unique scalar product on $\cal{P} = \Cz$ with respect to which 
$Z^*=\partial/\partial z$. 

Let us now consider a $q$-deformation of the Segal-Bargmann space 
with the defining sequence given 
by the $q$-factorials \eqref{q-factorial}. As we have seen, the reproducing kernel is given by the $q$-exponential 
function $K(\bar{w},z)=E_q(\bar{w}z)$. There are infinitely many generating measures on $\Bbb{C}$ for this reproducing 
kernel, which in particular means that $\qform{}{}$ is strictly positive on $\Bbb{C}[z,\bar{z}]$. 

Here are two distinguished measures, both rotationally symmetric. The first one is discrete on the radial coordinate 
in $\Bbb{C}$ and based on the discrete $q$-integral 
\begin{equation}
\int_0^\infty f(t)\, d_q t=\sum_{n\in\Bbb{Z}}q^nf(q^n). 
\end{equation}
A straightforward application of the $q$-integral formula 
\begin{equation}\label{q-factorial-id}
\int_0^\infty \frac{t^{\alpha}\, d_qt}{E_q(t/q)}=\frac{(q|q)_\alpha}{(q-1|q)_\alpha(q/(q-1)|q)_{-\alpha}}=:!_q(\alpha)
\end{equation}
together with the observation that $!_q(\alpha)=s_\alpha$ for $\alpha\in\Bbb{N}$, 
leads us to this radially discrete density function 
\begin{equation}\label{rho-discrete}
\rho(z,\bar{z})=\frac{1}{\pi}\frac{1}{E_q(|z|^2/q)}\sum_{n\in\Bbb{Z}} q^n\delta(|z|^2-q^n).
\end{equation}
We have used an extended definition for the $q$-symbols
\begin{equation}
(z|q)_\alpha=\frac{(z|q)_\infty}{(zq^\alpha|q)_\infty}
\end{equation}
valid for arbitrary complex numbers $\alpha\in\Bbb{C}$. 

On the other hand, we can use the continuous variation 
\begin{equation}\label{q-factorial-ic}
\int_0^\infty \frac{t^{\alpha-1}\,dt}{(-t|q)_\infty}=(q^{1-\alpha}|q)_\alpha\frac{\pi}{\sin(\pi\alpha)}
\end{equation}
of the integral \eqref{q-factorial-id}. This for $\alpha\in\Bbb{N}$ morphs into
\begin{equation}
\int_0^\infty \frac{t^{\alpha}\,dt}{E_q(t/q)}=\frac{\log(1/q)}{1/q-1}s_\alpha 
\end{equation}
and we naturally arrive to 
another density function, scaled by a positive factor and free of discrete $\delta$-terms
\begin{equation}
\rho(z,\bar{z})=\frac{1/q-1}{\pi\log(1/q)}\frac{1}{E_q(|z|^2/q)}.
\end{equation}
In the limit $q\rightsquigarrow 1^-$ both measures become the standard Segal-Bargmann measure. 

It is interesting to observe that from the sequence of weights we can see directly that the form 
$\qform{}{}$ is strictly positive. Indeed, 
consider the following $n\times n$ matrix determinant, 
with entries indexed by $i,j=0,\dots,n-1$ and composed of $q$-factorials: 
\begin{equation}\label{qf-dete}
\left |
\begin{matrix} &  & \\
 & !_q(i+j+d) & \\
 & & 
\end{matrix} \right |
=q^{-{\textstyle \binom{n}{2}}(d+{\textstyle \frac{2n-1}{3}})}\prod_{k=0}^{d-1}\frac{!_q(n+k)}{!_q(k)}
\Bigl\{\prod_{k=1}^{n-1} !_q(k)\Bigr\}^2 
\end{equation}
where $d\in\Bbb{N}$. For a positive $q$ these are all positive numbers (and quickly growing very large if 
$q\leq 1$). Here is the sequence of positive numbers whose partial products generate the above determinants
\begin{equation}
\lambda_n=q^{-n^2}!_q(n)^2\qquad\lambda_n=q^{-n^2-nd}!_q(n+d)\,!_q(n)^2
\end{equation}
corresponding to $d=0$ and $d\geq 1$ respectively. 
From the analysis of the previous Appendix it follows that the form $\qform{}{}$ must be strictly positive on $\Bbb{C}[z,\bar{z}]$. 

And here is a variation on the above sequences
\begin{equation}
\lambda_n=q^{n^2}!_q(n)^2\qquad\qquad\:\lambda_n=q^{n^2}!_q(n+d)\,!_q(n)^2
\end{equation}  
generating the determinants
\begin{equation}\label{qf-dete-2}
\left |\begin{matrix} &  & \\
 & q^{ij}!_q(i+j+d) & \\
 & & 
\end{matrix}\right |
=q^{{\textstyle \binom{n}{2}}{\textstyle \frac{2n-1}{3}}}\prod_{k=0}^{d-1}\frac{!_q(n+k)}{!_q(k)}
\Bigl\{\prod_{k=1}^{n-1} !_q(k)\Bigr\}^2
\end{equation}
via their partial products. The positivity of these numbers can 
be used to derive the positivity of $\qform{}{}$ on the space 
$\polq$ for the Manin $q$-plane, so in this scenario the property H4 holds, too.
Note that the parameter $q$ is the same for both algebras. 

\subsection*{Bergman Spaces}
These spaces pro\textbf{}vide one of the earliest frameworks for quantizing Euclidean domains, via complex functions theory. They were introduced by Stefan Bergman \cite{berg} during the 20s of the 20th century. 

Let us consider a bounded domain $\Omega$ equipped with its standard Euclidean measure. Let $\cal{H}$ be the space of all 
square integrable holomorphic functions in $\Omega$. It is easy to see that $\cal{H}$ is a closed subspace of 
$L^2(\Omega)$ and thus, in the induced scalar product, it becomes a Hilbert space of holomorphic functions. 

The scalar product is given by 
 $$ \langle f,g\rangle=\int_\Omega \overline{f(w)}g(w). $$
Because of the boundedness of $\Omega$, the multiplication operator by $z$ is bounded and defined on all 
$\cal{H}$. Together with its adjoint, it generates a non-commutative C*-algebra representing a quantized domain 
$\Omega$. 

We then have 
$$
\int_\Omega \vert w(z)\rangle\langle w(z)\vert=1\colon\cal{H}\rightarrow\cal{H} 
$$
where the integral is in the weak operator topology. This is a direct consequence of the definition of the space $\cal{H}$. It is worth noticing that in the special case of $\Omega=\Bbb{D}$ the unit disk, the
reproducing kernel is given by 
$$
K(\bar{w},z)=\frac{1}{\pi}\frac{1}{(1-\bar{w}z)^2}
$$
and the whole system is interpretable as a quantization of the Poincar\'e model of the hyperbolic plane. 

\subsection*{The Toeplitz Extension}
Another quantum model of the hyperbolic plane is obtained by considering the square summable power series 
\cite{debranges2}. The unit disc $\Bbb{D}$ as the common domain of holomorphicity, and the reproducing 
kernel is simply
$$ 
K(\bar{w},z)=\frac{1}{1-\bar{w}z}
$$ 
so that the monomials $z^n$ form an orthonormal basis in $\cal{H}$. The coordinate $z$ acts as the unilateral 
shift operator in $\cal{H}$. 

\subsection*{Paley-Wiener $\&$ Euler Spaces}
The Paley-Wiener space of index $a>0$ is generated by the reproducing kernel
$$
K(\bar{w},z)=\frac{\sin[a(\bar{w}-z)]}{\pi(\bar{w}-z)}. 
$$
The elements of the space are the entire functions square-integrable over the reals, and the scalar product 
is given by 
$$
\langle f(z),g(z)\rangle=\int_{-\infty}^{\infty}\overline{f(t)}g(t)\, dt. 
$$
The space can be viewed as the image of $L^2[-a,a]$ via the complexified Fourier transform. More generally, 
if we consider an entire function $E(z)$ satisfying 
$$
\vert E(x-iy)\vert < \vert E(x+iy)\vert 
$$
where $x\in\Bbb{R}$ and $y>0$, then the Hilbert space $\cal{H}(E)$ is given by the reproducing kernel
$$
K(\bar{w},z)=\frac{E^*(\bar{w})E(z)-E(\bar{w})E^*(z)}{2\pi i(\bar{w}-z)}. 
$$ 
The space $\cal{H}(E)$ consists of entire functions. They are all square integrable over the reals with the weight function $1/\vert E(t)\vert^2$ so that the scalar product is given by 
$$
\langle f(z),g(z)\rangle=\int_{-\infty}^{\infty}\frac{\overline{f(t)}g(t)}{\vert E(t)\vert^2}\, dt. 
$$
The Paley-Wiener spaces are a special case when $E(z)=\exp(-iaz)$.
We refer to \cite{debranges} for a detailed study of these `Euler spaces'. They are a rich collection of examples 
of the Hilbert spaces of entire functions going beyond our harmony properties. Nevertheless, they allow 
similar non-commutative constructions and provide, with the geometrical interpretations
appropriately refined, an interesting complementary
framework for quantization of the classical Euclidean plane. 

\newpage

\end{document}